\newcommand{\norm}[1]{}
\spnewtheorem{Claim}{Claim}{\bfseries}{\itshape}
\newcommand{\louis}[1]{}
\newcommand{\mo}[1]{}
\newcommand{\anne}[1]{}
\newcommand{\arthur}[1]{}
\newcommand{\QFE}{\ensuremath{\mathsf{QFE}}}
\newcommand{\FE}{\ensuremath{\mathsf{FE}}}
\spnewtheorem{fact}{Fact}{\bfseries}{\itshape}
\crefname{fact}{Fact}{Facts}
\spnewtheorem{construct}{Construction}{\bfseries}{\itshape}
\spnewtheorem*{thm}{Theorem}{\bfseries}{\itshape}
\spnewtheorem*{lem}{Lemma}{\bfseries}{\itshape}
\newcommand{\N}{\mathbb{N}}
\newcommand{\C}{\mathbb{C}}
\newcommand{\Hcal}{\mathcal{H}}
\newcommand{\Tr}{\ensuremath{\mathsf{Tr}}}
\newcommand{\T}{\ensuremath{\mathsf{T}}}
\newcommand{\Enc}{\ensuremath{\mathsf{Enc}}}
\newcommand{\Dec}{\ensuremath{\mathsf{Dec}}}
\newcommand{\Setup}{\ensuremath{\mathsf{Setup}}}
\newcommand{\Keygen}{\ensuremath{\mathsf{KeyGen}}}
\newcommand{\Sim}{\ensuremath{\mathsf{Sim}}}
\newcommand{\PKE}{\ensuremath{\mathsf{PKE}}}
\newcommand{\PRS}{\ensuremath{\mathsf{PRS}}}
\newcommand{\Hy}{\mathbf{H}}
\newcommand{\PRF}{\ensuremath{\mathsf{PRF}}}
\newcommand{\Comp}{\mathsf{Comp}}
\newcommand{\Decomp}{\mathsf{Decomp}}
\newcommand{\Favg}{\ensuremath{\mathsf{F}_\mathsf{avg}}}
\newcommand{\Fe}{\ensuremath{\mathsf{F}_\mathsf{E}}}
\newcommand{\Id}{\ensuremath{\mathcal{I}}}
\newcommand{\Expect}{\ensuremath{\mathbb{E}}}
\newcommand{\Gen}{\ensuremath{\mathsf{Gen}}}
\newcommand{\PKEnc}{\ensuremath{\mathsf{PKE.Enc}}}
\newcommand{\PKDec}{\ensuremath{\mathsf{PKE.Dec}}}
\newcommand{\PKKeygen}{\ensuremath{\mathsf{PKE.Gen}}}
\newcommand{\Exp}{\mathsf{Exp}}
\newcommand{\Adv}{\ensuremath{\mathcal{A}}}
\newcommand{\ch}{\ensuremath{\mathcal{C}}}
\newcommand{\D}{\ensuremath{\mathcal{D}}}
\newcommand{\msg}{\ensuremath{\mathsf{m}}}
\newcommand{\ct}{\ensuremath{\mathsf{ct}}}
\newcommand{\cmark}{\ding{51}} % checkmark
\newcommand{\xmark}{\ding{55}} % cross
\begin{document}
\title{On the Impossibility of Simulation Security for Quantum Functional Encryption}
%\author{}
\author{Mohammed Barhoush\inst{1} \and Arthur Mehta\inst{2} \and Anne Müller\inst{3,4} \and Louis Salvail\inst{1}}
\institute{Universit\'e de Montr\'eal (DIRO), Montr\'eal, Canada\\  \email{mohammed.barhoush@umontreal.ca}, \email{salvail@iro.unmontreal.ca}  \and University of Ottawa, Canada\\ \email{amehta2@uottawa.ca} \and
CISPA Helmholtz Center for Information Security, Saarbrücken, Germany
\and
Saarland University, Saarbrücken, Germany\\
\email{anne.mueller@cispa.de}}
\maketitle

\begin{abstract}
Functional encryption is a powerful cryptographic primitive that enables fine-grained access to encrypted data and underlies numerous applications. Although the ideal security notion for $\FE$—simulation security—has been shown to be impossible in the classical setting, those impossibility results rely on inherently classical arguments. This leaves open the question of whether simulation-secure functional encryption can be achieved in the quantum regime.

In this work, we rule out this possibility by showing that the classical impossibility results largely extend to the quantum world. In particular, when the adversary can issue an unbounded number of challenge messages, we prove an unconditional impossibility, matching the classical barrier. In the case where the adversary may obtain many functional keys, classical arguments only yield impossibility under the assumption of pseudorandom functions; we strengthen this by proving impossibility under the potentially weaker assumption of pseudorandom quantum states. In the same setting, we also establish an alternative impossibility based on public-key encryption. Since public-key encryption is not known to imply pseudorandom quantum states, this provides independent evidence of the barrier. As part of our proofs, we show a novel incompressibility property for pseudorandom states, which may be of independent interest.
\end{abstract}

    \setcounter{secnumdepth}{3}
    \setcounter{tocdepth}{3}
  %  \newpage
  %  \tableofcontents 
%  \newpage
    
\section{Introduction}
Traditional encryption offers an all-or-nothing guarantee: a ciphertext can either be fully decrypted or not at all. While this suffices for basic confidentiality, it falls short in modern applications—such as cloud services, data outsourcing, or fine-grained access control—where different parties should be able to access only specific aspects of the data. Functional encryption (FE) was proposed as a solution to this limitation, first informally introduced by Sahai and Waters \cite{SW05} and later developed into a formal framework  by Boneh, Sahai and Waters \cite{BSW11}.

In an $\FE$ scheme, a master secret key $mk$ enables the generation of functional keys ${\fk}_F$ associated with functions $F$ from a given family. Possession of ${\fk}_F$ allows a user to compute $F(m)$ from an encryption of $m$, but nothing beyond this prescribed output. The precise interpretation of ``nothing beyond'' has been the subject of extensive study \cite{BSW11,AGV13}, and plays a central role in shaping both the feasibility and the strength of $\FE$ constructions.

Two central paradigms have emerged for formalizing security in this context: simulation-based security (\textsf{SIM}--security) and the weaker but more permissive indistinguishability-based security (\textsf{IND}--security). Each of these can be combined with additional dimensions: adaptive versus non-adaptive queries to the functional-key derivation oracle, single versus multiple challenge ciphertexts, and single versus multiple functional key queries. Together, these refinements yield 16 distinct security notions, denoted by $q\text{-}x\text{-}y\text{-}z$, where $q \in \{1, \textsf{M}\}$ specifies a single or many ciphertext queries, $x \in \{1, \textsf{M}\}$ a single or multiple functional key queries, $y \in \{\textsf{NA}, \textsf{AD}\}$ the adaptivity, and $z \in \{\textsf{IND}, \textsf{SIM}\}$ the underlying framework.\footnote{Here $\textsf{M}$ refers to many or, more specifically, polynomial many queries.}

Among these variants, $\textsf{SIM}$--security is the stronger and more desirable goal, since $\textsf{IND}$--security becomes vacuous for certain classes of functionalities \cite{BSW11}. Unfortunately, a series of impossibility results \cite{BSW11,AGV13} show that $\textsf{SIM}$--secure (classical) $\FE$ cannot be achieved in full generality. In particular, the notions of $\textsf{M}\text{-}1\text{-}\textsf{AD}$ \textsf{SIM}--security and $1\text{-}\textsf{M}\text{-}\textsf{NA}$ \textsf{SIM}--security are provably unattainable for general polynomial circuits. In essence, this means that security breaks down once the adversary is allowed either numerous non-adaptive functional key queries or, alternatively, multiple ciphertext queries together with a single adaptive functional key query.

Crucially, these impossibility results apply only in the \textbf{classical} setting. Their proofs rely on inherently classical arguments that do not directly carry over to the quantum world. This raises a central question: 
\begin{center}
\textit{Do the classical impossibilities of $\textsf{SIM}$--secure $\FE$ generalize to the quantum regime?}     
\end{center}
The history of quantum cryptography offers strong evidence that classical impossibility does not always imply quantum impossibility. A range of primitives—such as unclonable encryption, certified deletion, unforgeable money—are provably impossible classically, yet attainable quantumly. A particularly notable example is the recent result showing that \emph{quantum} public keys enable public key encryption from one-way functions~\cite{BGH23}, in contrast to the classical setting where a separation is known~\cite{IR89}. These examples suggest that quantum resources can overcome barriers that appear absolute in the classical world. By the same analogy, it is natural to ask whether quantum public keys, ciphertexts, or functional keys might provide the extra layer of security needed to realize simulation-based notions of functional encryption.

Given the foundational role of $\FE$ in cryptography, determining whether quantum resources can overcome these classical barriers is a central question for the theory of quantum cryptography.

\subsection{Our Contribution}

In our work, we show that the classical barriers for simulation-secure $\FE$ largely extend to the quantum realm. However, our findings are more nuanced than the classical results. We provide three impossibility results on \textsf{SIM}--secure Quantum Functional Encryption ($\QFE$), showing:

\begin{enumerate}
    \item Unconditional impossibility of $\textsf{M}\text{-}1\text{-}\textsf{AD}$ \textsf{SIM}--secure $\QFE$.
    \item Impossibility of \textit{succinct} $1\text{-}1\text{-}\textsf{NA}$ \textsf{SIM}--secure $\QFE$ assuming (sufficiently-large) pseudorandom quantum states. 
\item Impossibility of $1\text{-}\textsf{M}\text{-}\textsf{NA}$  \textsf{SIM}--secure $\QFE$ assuming public key encryption with classical keys ($\PKE$).
\end{enumerate}

All the results are shown in the secret-key setting which implies impossibility for public key $\QFE$. We now compare these results to the classical impossibility landscape for $\FE$. Boneh et al.~\cite{BSW11} proved that $\textsf{M}\text{-}1\text{-}\textsf{AD}$ \textsf{SIM}--secure $\FE$ is unconditionally impossible. Our first result establishes the same barrier in the quantum setting, though the proof requires a new basic lemma demonstrating the limitations of compressing large amounts of classical information into a small quantum state (\cref{lem:incompress}). 

Next, Agrawal et al.~\cite{AGV13} showed that $1\text{-}\textsf{M}\text{-}\textsf{NA}$ \textsf{SIM}--secure $\FE$ is impossible under the assumption of pseudorandom functions ($\PRF$). Since $\PRF$s can be constructed from any $\FE$ scheme, this immediately yields an unconditional impossibility result. However, this reasoning breaks down in the quantum setting: $\QFE$ is not known to imply $\PRF$, as quantum algorithms cannot be de-randomized. Thus, the classical proof only gives a conditional impossibility for $\QFE$ based on $\PRF$s.

We strengthen this by replacing $\PRF$ with pseudorandom quantum states ($\PRS$). This constitutes a potentially weaker assumption than $\PRF$s, since $\PRF$ implies $\PRS$ \cite{JLS18}, but there is a separation in the other direction \cite{K21}. Moreover, our impossibility applies to $1\text{-}1\text{-}\textsf{NA}$ \textsf{SIM}--secure $\QFE$, where the adversary requires only a single non-adaptive functional key query, rather than  the weaker $1\text{-}\textsf{M}\text{-}\textsf{NA}$ \textsf{SIM}--security case ruled out classically. On the downside, our second impossibility only applies to succinct schemes, meaning that the ciphertext sizes should not depend on the size of the admissible class of functions. %A similar point is that our result specifically requires that there exists a pseudorandom state generator that outputs a state of sufficiently large polynomial size that is essentially larger than the ciphertext size of a $\lambda$-bit message. Hence a logarithmic pseudorandom state generator does not suffice. However, we do not believe these to be of significant issues given that the standard construction of $\PRS$ from $\PRF$ enables the construction of an arbitrary large (polynomial) $\PRS$ and there does not exist any separation among polynomial-length $\PRS$s. 

Our proof introduces new technical tools, including an incompressibility property of $\PRS$ that may be of independent interest. In particular, we show that compressing an $n$-qubit $\PRS$ into $n-k$ qubits irreversibly loses almost all structure: any decompression procedure recovers a state with expected fidelity at most $2^{-(n-k)}$ with the original.

Finally, our third result shows that  $1\text{-}\textsf{M}\text{-}\textsf{NA}$ \textsf{SIM}--secure $\QFE$ is impossible under the existence of $\PKE$. Unlike our previous result, this result holds even for non-succinct schemes. As we mentioned, Agrawal et al. \cite{AGV13} already proved an unconditional impossibility for this class of $\FE$ schemes in the classical setting, whereas in the quantum setting our $\PRS$--based proof only reached this barrier for succinct schemes. This motivated us to explore alternative MicroCrypt assumptions leading to impossibility, and we found that any $\PKE$ scheme implies impossibility of such a $\QFE$ scheme. Importantly, $\PKE$ and $\PRS$ are incomparable: $\PKE$ is separated from one-way functions (and thus from $\PRS$) \cite{LLLL25}, while no implication or separation in the opposite direction is known. Hence, our second and third impossibility results complement one another, together providing strong evidence against the existence of \textsf{SIM}--secure $\QFE$. Notably, our $\PKE$--based impossibility proof is entirely new, with no classical analogue.

\begin{table}[]
    \centering
\begin{tabular}{|c|c|c|c|c|}
\hline
& \begin{tabular}{@{}c@{}}one-message\\ one-query\end{tabular}
& \begin{tabular}{@{}c@{}}many-message\\ one-query\end{tabular}
& \begin{tabular}{@{}c@{}}one-message\\ many-query\end{tabular}
& \begin{tabular}{@{}c@{}}many-message\\ many-query\end{tabular} \\
\hline
\textbf{non-adaptive} 
& \cmark$^{a}$  
& \cmark$^{a}$ 
&  \xmark\ \cref{thm:imp1MNA} 
& \xmark $^{b}$ \\
\hline
\textbf{adaptive} 
& \cmark$^{a}$ 
&  \xmark\ \cref{thm:impM1AD}
& \xmark $^{b}$ 
& \xmark $^{b}$ \\

\hline
\hline
\textbf{succinct (NA)} 
& \xmark \ \cref{thm:imp11NA}
&  \xmark$^{b}$
& \xmark  $^{b}$
& \xmark$^{b}$  \\
\hline

\end{tabular}
    \caption{Possibilities and impossibilities for \textsf{SIM}--secure $\QFE$. (a) \cite{MM24} construct single-message, single-query adaptive $\QFE$ and they show that non-adaptive single-query single-message $\QFE$ implies non-adaptive single-query many-message $\QFE$ for $\QFE$ schemes that have a classical function key. (b) Implied by one of our results.}
    \label{tab:overviewsim} 
\end{table}

\subsection{Related Works}
\anne{I think all schemes that only work for classical circuits should be called FE schemes, even if they use quantum ciphertexts/keys.}
%The study of $\QFE$ is still in its early stages. Hiroka et al.~\cite{HKM+23} considered $\QFE$ in the context of certified deletion, where a receiver holding a quantum ciphertext can later produce a certificate proving that the ciphertext has been irreversibly destroyed, along with any embedded information such as the plaintext. They consider security notions of $\QFE$ that are adaptations of $\QFE$ with bounded ciphertext and functional key queries but with certified deletion, which essentially lies in $1\text{-}1\text{-}\textsf{AD}$ SIM-secure, and thus does not conflict with any impossibility result.  \anne{I think the work of Hiroka et al considers FE for classical circuits, therefore I would not call it $\QFE$, only enhanced by quantum ciphertexts. Your maybe want to say something about your work here? \cite{BS23}}\mo{But it is in the quantum setting, I think we still need to discuss it. But maybe you can add this distnction. }

Several earlier works explored the feasibility of $\QFE$ schemes in different settings. Hiroka et al.~\cite{HKM+23} construct certified everlasting functional encryption by enhancing $\FE$ for classical circuits with quantum ciphertexts. A receiver holding a quantum ciphertext can later produce a certificate proving that the ciphertext has been irreversibly destroyed, along with any embedded information such as the plaintext.
Beyond deletion, \cite{KN23} introduce $\FE$ with secure key leasing from any secret-key $\FE$. Under the assumption of sub-exponentially secure indistinguishability obfuscation and sub-exponential LWE they construct single-decryptor $\FE$ secure against bounded collisions: the adversary can only obtain one functional key and the keys are copy-protected in the sense that the adversary cannot create a second usable copy. The work of Çakan and Goyal~\cite{CG24} improves upon this result by building $\FE$ with copy-protected secret keys that withstand unbounded collisions from similar assumptions as \cite{KN23}. This result does not contradict our impossibility result regarding many key-queries since they achieve \textsf{IND}--security and we rule out that such a construction could achieve \textsf{SIM}--security. 

Barhoush and Salvail \cite{BS233} explore $\QFE$ in the bounded-quantum-storage-model, where adversaries have limited quantum memory, and provide \textsf{SIM}--secure construction. Note that this model differs from the plain model and the impossibility results presented in this work do not apply to this model.

Mehta and Müller~\cite{MM24} initiated a broader formal investigation of $\QFE$. They create definitions for simulation and indistinguishability based notions of $\QFE$ and provide a construction for single-query, single-message adaptive \textsf{SIM}--secure $\QFE$. They show that, equivalently to the classical case, the simulation-based notion is stronger and implies the indistinguishability-based notion. Their construction is not succinct.
%Since many-query $\QFE$ is known to imply black-box obfuscation, this result only rules out a significantly stronger primitive than the standard $\QFE$ model, and thus does not apply directly to $\QFE$ in its basic form.
Our work builds on this emerging line of research by focusing on simulation-based security. 

%\paragraph{Acknowledgment.} AI (ChatGPT) was used for improving clarity and language in the introduction. 

\subsection{Impossibilities of Quantum Functional Encryption for Classical Functionalities}

Our results complement the classical literature of impossibility results for $\FE$ \cite{AGV13,BSW11,O10} by showing several impossibilities for $\QFE$. An open question remains whether quantum capabilities such as quantum computation and quantum ciphertexts could help circumvent the impossibilities if only classical functionality is required.  For our first impossibility result, the adversary only makes use of classical messages and the identity functionality, which is a classical function. Therefore, the impossibility of many-message single-query adaptive \textsf{SIM}--secure $\QFE$ for classical functionalities but with quantum capabilities immediately follows. In the second impossibility result, we rely on the compressibility of $\PRS$, an inherently quantum primitive. This could be replaced by a $\PRF$ and a similar argument regarding the incompressibility of classical information as in \cref{lem:incompress}. Therefore, the impossibility of succinct single-message, single-query non-adaptive $\textsf{SIM}$--secure $\FE$ for classical circuits in a quantum world also follows. The reason we opted to use $\PRS$ instead of $\PRF$ is to get an impossibility result from a weaker assumption. The third result assumes the existence of a $\PKE$ scheme with classical keys. If another condition of classical ciphertexts is assumed then the required functionality is also completely classical and an impossibility result for single-message many-query $\QFE$ for classical functionalities follows. As a $\PKE$ scheme with quantum ciphertexts can be built from weaker assumptions as a $\PKE$ with classical ciphertexts, we opted to allow for quantum ciphertexts in our proof. 

\subsection{Future Directions}
While this work establishes several impossibility results for simulation-secure $\textsf{QFE}$, several intriguing avenues remain for future exploration:
\begin{itemize}
    \item Our impossibility results target $\textsf{QFE}$ for general circuit classes. However, restricted classes may still permit simulation security. For instance, O’Neill \cite{O10} constructed non-adaptive $\textsf{SIM}$--secure $\FE$ for a class known as ``preimage sampleable'' circuits. Future research should investigate which specific subclasses of quantum circuits bypass our impossibility barriers.
    \item Moreover, the robustness of our results currently hinges on specific cryptographic assumptions. Strengthening these findings by weakening or entirely removing these assumptions would provide a more fundamental understanding of the boundaries of $\QFE$.
    \item Finally, our results leave open the specific regime where the adversary receives a single ciphertext and a single non-adaptive functional key query (prior to the challenge), followed by multiple adaptive queries. Given that De Caro et al. \cite{DIJ13} demonstrated the feasibility of this setting in the classical model, a primary direction for future work is to determine if their construction (or a quantum-resistant variant) maintains security in the quantum realm.
\end{itemize}

\section{Technical Overview}
\label{sec:tech}

We describe our impossibility results more concretely. Throughout this paper we work with secret-key $\QFE$ in the simulation-based security setting. We allow the ciphertexts and functional keys to be quantum while the master secret-key is classical. The impossibilities for secret-key $\QFE$ immediately imply the equivalent impossibilities for public key $\QFE$, even with quantum public keys.

\subsection{Impossibility of \textsf{SIM}--security with many ciphertext queries}

Our first results demonstrates the impossibility of \textsf{SIM}--secure $\QFE$ in the setting where the adversary receives many ciphertext queries and a single adaptive functional key query. This is a quantum version of the classical impossibility result given in \cite{BSW11}. More formally, we show the following:
\begin{theorem}[informal]
    There does not exist a $\textsf{M}\text{-}1\text{-}\textsf{AD}$ \textsf{SIM}--secure $\QFE$ scheme. 
\end{theorem}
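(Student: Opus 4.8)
The plan is to adapt the classical incompressibility argument of Boneh--Sahai--Waters to the quantum setting, using the identity functionality and purely classical challenge messages so that no quantum trickery on the adversary's side is needed. First I would set up the adversary $\Adv$ as follows: it picks a large number $\ell$ of uniformly random classical messages $\msg_1,\dots,\msg_\ell$ (say each of length $n$, with $\ell$ polynomial but much larger than the ciphertext length), submits them as challenge ciphertext queries, receives back (possibly quantum) ciphertexts $\ct_1,\dots,\ct_\ell$, and only then issues a single adaptive functional-key query for the identity function $\Id$ (or a projection revealing all bits). With $\fk_\Id$ in hand, correctness of the $\QFE$ scheme guarantees that running $\Dec(\fk_\Id,\ct_i)$ recovers $\msg_i$ with overwhelming probability for every $i$; hence from the joint state $(\ct_1,\dots,\ct_\ell,\fk_\Id)$ the adversary can extract essentially all $\ell n$ bits of randomness. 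The adversary's final output is just this decrypted string (or its own view).

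Next I would invoke the simulation-security definition: there must exist a simulator $\Sim$ that, given only the functional values $\{\Id(\msg_i)\}$—wait, that is everything—so here one must be slightly careful about which flavor of $\textsf{M}\text{-}1\text{-}\textsf{AD}$ simulation is used. The right move, following BSW, is to have the messages be random and to measure the simulator against an adversary whose final output must be reproducible by $\Sim$ given the ciphertexts (simulated without the messages) \emph{before} the key query, and then the key and function values afterward. Concretely: the simulator first produces simulated ciphertexts $\tilde\ct_1,\dots,\tilde\ct_\ell$ knowing only $\ell$ and the message length, \emph{without} knowing $\msg_1,\dots,\msg_\ell$; then it learns the $\msg_i$'s (as these are $\Id(\msg_i)$) and must produce a simulated functional key $\tilde{\fk}$ and a view indistinguishable from the real one. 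The key point is that the state $\rho$ the simulator holds right after producing the $\tilde\ct_i$'s is a fixed quantum state (it may depend on $\ell, n$ and internal randomness, but not on the messages) of some bounded size; if the $\QFE$ ciphertexts are short, say each of length $s(n)$, then a natural choice of adversary forces $\rho$ to live on roughly $\ell \cdot s(n)$ qubits. After the messages are revealed, the simulator applies some channel to $\rho$ and the classical string $(\msg_1,\dots,\msg_\ell)$ to produce a purported copy of the adversary's output, which—by indistinguishability—must equal $(\msg_1,\dots,\msg_\ell)$ up to negligible error.

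This is exactly where \cref{lem:incompress} does the work: the simulator is being asked to hold, inside a state of size $O(\ell\cdot s(n))$ that was prepared \emph{independently} of the messages, enough information to reconstruct $\ell n$ uniformly random bits with high probability once it is merely \emph{told} nothing extra beyond… hmm, here I need to be precise, because after the key query the simulator \emph{is} told the $\msg_i$'s. So the actual incompressibility bite must come earlier, or from a different cut. The correct formulation, matching BSW, is: fix the adversary's strategy so that its \emph{pre-key-query state} already determines the output; more robustly, consider the mutual-information / guessing argument where the simulated ciphertexts $\tilde\ct_i$ must already "commit" to $\msg_i$ because the only later information the simulator gets is $\Id(\msg_i)=\msg_i$, which is useless for \emph{checking consistency} unless the $\tilde\ct_i$ already encode it. Formally: run two experiments, one where the adversary outputs its decrypted messages and one where a distinguisher, given the \emph{simulated} transcript, re-derives the messages; since $\Sim$ does not know the messages when it outputs $\tilde\ct_i$, the joint state $(\tilde\ct_1,\dots,\tilde\ct_\ell,\text{aux})$ is independent of $(\msg_1,\dots,\msg_\ell)$, yet the real transcript lets one recover all messages. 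Plugging this into \cref{lem:incompress} (an $m$-qubit register cannot, via any channel, let one predict more than $\approx m$ uniformly random classical bits with non-negligible success) yields $\ell n \lesssim O(\ell\, s(n)) $ up to the compression slack, which—since $\ell$ can be taken arbitrarily large while $s(n)$ is fixed—forces $s(n)\gtrsim n$; and then a second, sharper counting pushes past even that by taking $\ell$ superpolynomially large relative to any fixed polynomial bound on the simulator's running time, contradicting the existence of an efficient $\Sim$.

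I expect the main obstacle to be pinning down the information-theoretic cut so that \cref{lem:incompress} applies cleanly: in the quantum setting the simulator's internal register is not classical, and the "revealed" function outputs ($\Id(\msg_i)=\msg_i$) coincide with the very secrets we want to argue are incompressible, so the naive statement "the simulator must store all the messages" is circular. The fix—isolating the sub-state produced before the key query, arguing it is statistically independent of the messages, and then showing that no post-hoc channel (even one fed the messages) can repair a transcript that fails to encode them—requires care, and is precisely the step where the new quantum lemma, rather than a direct classical counting argument, is needed. Once that cut is fixed, the rest (correctness of $\Dec$ on $\fk_\Id$, the union bound over the $\ell$ messages, and the contradiction with polynomial simulator size) is routine.
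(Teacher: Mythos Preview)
Your overall shape is right—many random classical messages, a single adaptive key query for the identity, correctness plus gentle measurement in the real world—but you have placed the information-theoretic cut in the wrong spot, and this is a genuine gap, not a detail to be cleaned up later.

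You try to make the \emph{ciphertexts} the bottleneck: the simulator commits to $\tilde\ct_1,\dots,\tilde\ct_\ell$ before seeing the messages, and you want to say this state is too small to encode $\ell n$ random bits. But the total ciphertext register has size $\ell\cdot s(n)$, which scales with $\ell$, so the inequality you extract, $\ell n \lesssim \ell\, s(n)$, only says $s(n)\gtrsim n$ and is no contradiction. Your attempted rescue—taking $\ell$ superpolynomial—violates the requirement that the adversary be QPT. More fundamentally, the circularity you flag (``the simulator is told $\Id(\msg_i)=\msg_i$ afterward, so how can incompressibility bite?'') is real for the cut you chose, and you do not resolve it.

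The paper's cut is the opposite one: the bottleneck is the \emph{functional key}, not the ciphertexts. The size of $\fk_I$ is bounded by a fixed polynomial $p(\lambda)$ that does \emph{not} grow with the number $n$ of challenge messages; so choose $n>2p$. In the ideal world, $\Sim_1$ outputs ciphertexts $\tilde\ct=(\tilde\ct_i)_{i\in[n]}$ with no knowledge of $m=(m_1,\dots,m_n)$; these ciphertexts become \emph{advice}, independent of $m$. Only when $\adv_2$ asks for the identity key does $\Sim_2$ see $m$, and it must then emit a $p$-qubit state $\tilde\fk_I$. Indistinguishability from the real experiment forces $\Dec(\tilde\fk_I,\tilde\ct_i)=m_i$ for all $i$ with overwhelming probability. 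Thus $\Comp(m;mk)\coloneqq\Sim_2(1^\lambda,mk,I,m,1^n)=\tilde\fk_I$ is a compression of $n$ random bits into $p<n/2$ qubits, and $\Decomp(\tilde\fk_I;\tilde\ct)$ runs the decryptions using the message-independent advice $\tilde\ct$. This contradicts \cref{lem:incompress} directly—no circularity, because the messages handed to $\Sim_2$ are exactly what gets compressed into $\tilde\fk_I$, while everything else ($mk$, $\tilde\ct$) is independent of them. Once you move the cut to the functional key, the argument goes through cleanly.
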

The proof proceeds via the following straightforward attack strategy. The adversary begins by requesting the encryption of $n$ random messages $(m_i)_{i \in [n]}$, where $n$ is some sufficiently large polynomial in the security parameter. After receiving the corresponding ciphertexts from the challenger, the adversary then requests the
functional key for the identity function $I$. This key enables recovery of all messages directly from the ciphertexts. Thus, the adversary only makes a \emph{single} adaptive functional-key query, and receives \emph{many} challenge ciphertexts.

Assuming the reader's familiarity with the $\textsf{SIM}$--security game (\cref{def:simsecurityad}), we now describe why this interaction is hard to simulate. Observe that since the
functional key is queried \emph{after} the challenge messages, the simulator must
first produce $n$ ciphertexts without any knowledge of the underlying messages.
Suppose the simulator outputs ciphertexts $(\ct_i)_{i \in [n]}$. When the adversary
subsequently requests the functional key for $I$, the simulator receives the
messages $(m_i)_{i \in [n]}$ and must construct a state $\mathsf{fk}_I$ such that
\[
\textsf{Dec}(\mathsf{fk}_I, \ct_i) = m_i \quad \text{for all } i \in [n].
\]

However, two fundamental issues arise:
\begin{enumerate}
  \item The size of $\mathsf{fk}_I$ is bounded by a fixed polynomial. Hence, as long as $n$ is chosen large enough, $\mathsf{fk}_I$ is of much smaller size.
  \item The ciphertexts $(\ct_i)_{i\in [n]}$ were
  generated independently of the messages $(m_i)_{i\in [n]}$.
\end{enumerate}

Together, these imply that $\mathsf{fk}_I$ cannot contain enough information to
describe a mapping from each ciphertext to its corresponding message.
Classically, this impossibility is straightforward. In the quantum setting,
however, one must contend with the fact that both ciphertexts and functional
keys may be quantum states, which in principle can encode much more information
than classical strings.

To address this issue, we found that it is useful to reinterpret the simulator’s task as an \emph{information
compression problem}. Upon receiving the $n$ messages, the simulator would need
to compress them into a quantum state $\fk_I$ of size strictly less than $n$ qubits, in
such a way that this compressed state, when combined with independently chosen
ciphertexts, could still be used to perfectly recover all $n$ messages.

We show that such a compression procedure is impossible, even in the quantum
setting, by using a related result from \cite{NS06}. This establishes the impossibility of $\textsf{M}\text{-}1\text{-}\textsf{AD}$ \textsf{SIM}--secure $\QFE$. 

Interestingly, we obtain an impossibility result for another variant of $\QFE$ using our incompressibility result along with \cite{AGV13}. Recall that Agrawal et al.~\cite{AGV13} showed that in the classical setting $1\text{-}\textsf{M}\text{-}\textsf{NA}$ \textsf{SIM}--secure $\FE$ is impossible assuming the existence of $\PRF$\footnote{Their result is actually unconditional, given that $\FE$ implies $\PRF$ in the classical setting. }. This result only applies to classical schemes as it relies on a classical incompressibility result. With our quantum incompressibility result, we can adapt their result to the quantum setting with minimal modifications obtaining impossibility of $1\text{-}\textsf{M}\text{-}\textsf{NA}$ \textsf{SIM}--secure $\QFE$, assuming the existence of $\PRF$s.

\subsection{Impossibility of \textsf{SIM}--security for succinct schemes}
In the second result, we show the impossibility of succinct $\textsf{SIM}$--secure $\QFE$ in the simplest setting where only one message and one non-adaptive function query are allowed. The high-level idea is inspired by the classical impossibility result of \cite{AGV13}: \louis{well done}
\begin{enumerate}
    \item Show that there exists a class of incompressible circuits.
    \item Show that a succinct $\QFE$ scheme for such a class of circuits implies a compression and decompression algorithm for this class of circuits.
\end{enumerate} 

We show that a $\PRS$ generator is an incompressible circuit, therefore, we obtain the following theorem.

\begin{theorem}[informal]
    Assuming the existence of $\PRS$, there does not exist a succinct $1\text{-}1\text{-}\textsf{NA}$ \textsf{SIM}--secure secure $\QFE$ scheme. 
\end{theorem}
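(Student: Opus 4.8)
The plan is to instantiate the two-step strategy from the technical overview using a (sufficiently large) $\PRS$ generator as the incompressible circuit, and then derive a contradiction from succinct $1\text{-}1\text{-}\textsf{NA}$ \textsf{SIM}--security. First I would fix a $\PRS$ family $\{\ket{\phi_k}\}$ on $n$ qubits, indexed by keys $k$ of length $\lambda$, and consider the circuit family $\{C_k\}$ where $C_k$ on a trivial input simply outputs the $n$-qubit state $\ket{\phi_k}$ (equivalently, one can have $C_k$ run the $\PRS$ generator). The adversary in the $1\text{-}1\text{-}\textsf{NA}$ game first submits the single non-adaptive functional-key query for some $C_k$ with a \emph{uniformly random} $k$, receives $\fk_{C_k}$, then submits as its challenge message the trivial input, and receives a ciphertext $\ct$. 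Running $\Dec(\fk_{C_k},\ct)$ recovers (many copies of, or a single copy of) $\ket{\phi_k}$. The key point of succinctness is that $|\ct|$ is bounded by a fixed polynomial $p(\lambda)$ \emph{independent of the circuit size}, so by choosing the $\PRS$ output length $n$ larger than $p(\lambda)$ we force $\ct$ to have fewer than $n$ qubits.

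Next I would invoke the simulator guaranteed by \textsf{SIM}--security. Since the key query is non-adaptive and comes before the challenge, but more importantly the simulator's ciphertext $\tilde{\ct}$ is produced from only $C_k(\text{input})$ — that is, from (copies of) the output state $\ket{\phi_k}$ and \emph{not} from $k$ itself — the simulator effectively performs the following: take the $\PRS$ state $\ket{\phi_k}$ (on $n$ qubits), compress it to the at-most-$p(\lambda)$-qubit state $\tilde{\ct}$, and then $\Dec(\tilde{\fk}_{C_k},\tilde{\ct})$ must reproduce $\ket{\phi_k}$ with high fidelity for the simulation to be indistinguishable from the real experiment (in the real experiment decryption is essentially perfect, or negligibly close). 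I would phrase this carefully: the composition $\tilde{\ct} \mapsto$ simulated functional key $\mapsto$ output is a $\COMP/\DECOMP$ pair for the $\PRS$ family with $k$ going from $n$ qubits down to $n - k'$ qubits where $n - k' \ge n - p(\lambda)$, i.e. $k' \le p(\lambda)$, a polynomial — but the resulting expected fidelity must be non-negligible, contradicting the incompressibility property of $\PRS$ stated in the introduction (expected fidelity at most $2^{-(n-k')} \le 2^{-(n - p(\lambda))}$, which is negligible once $n \ge p(\lambda) + \omega(\log\lambda)$).

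A subtlety I would handle explicitly: the simulator is given the circuit \emph{description-independent} output, but in the $\QFE$ model the simulator for a non-adaptive key query typically gets $C_k$ applied to the challenge as an \emph{oracle} or as the classical value/quantum state $C_k(\text{input})$; I must make sure the simulator genuinely does not learn $k$, so that its ciphertext is a legitimate compression of the state $\ket{\phi_k}$ rather than something computed from $k$. This is where $\PRS$ pseudorandomness enters: even if the simulator could in principle depend on $k$ in some benign way through the key query transcript, $\PRS$-security ensures it cannot extract $k$ (or enough of $k$) from polynomially many copies of $\ket{\phi_k}$; one replaces $\ket{\phi_k}$ by a Haar-random state and argues the fidelity guarantee must still hold, then applies the (information-theoretic) incompressibility of Haar-random/$\PRS$ states. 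The second subtlety is bookkeeping on how many copies of $\ket{\phi_k}$ the adversary needs and whether $\Dec$ is single-shot; I would either have the $\PRS$ generator output $t$ copies so that real-experiment correctness is robust, or rely on correctness holding per-copy.

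The main obstacle I expect is the formal reduction to the incompressibility lemma for $\PRS$: carefully showing that the simulator, together with the decoding map, really does constitute a compression scheme that beats the $2^{-(n-k')}$ fidelity bound — in particular, ruling out the loophole that the simulator's ciphertext secretly encodes $k$ (defeating the ``compression'' framing) by a hybrid argument through Haar-randomness, and dealing with the fact that fidelities are only \emph{close} (not equal) in the real experiment so all the inequalities must be chained with the right negligible slack. Everything else — defining the circuit class, setting $n$ large enough, checking that the attack respects the $1\text{-}1\text{-}\textsf{NA}$ query pattern — is routine.
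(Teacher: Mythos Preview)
Your proposal has the roles of circuit and message reversed, and this breaks the compression argument. In the paper's security definition, the simulator receives $\mathcal{V}=(C,\,C(\msg),\,1^{|\msg|})$: it sees the \emph{full circuit description} in addition to the evaluation. If you hardcode $k$ into the circuit $C_k$ and use a trivial message, then the simulator is handed $k$ outright as part of the description of $C_k$, so it is not compressing $\ket{\phi_k}$ at all---it can simply re-run the $\PRS$ generator on $k$. Your hybrid-through-Haar patch does not help here: that argument shows the simulator cannot extract $k$ from copies of $\ket{\phi_k}$, but the simulator does not need to extract anything, it already has $k$.

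There is a second, related leak on the decompression side. The functional key $\fk_{C_k}$ is produced by $\Keygen(mk,C_k)$ and hence also depends on $k$; succinctness only bounds the ciphertext, not the functional key, so $\fk_{C_k}$ may be arbitrarily large and could encode $k$ or even the full state $\ket{\phi_k}$. Thus your decompression map $\Dec(\fk_{C_k},\cdot)$ is not a fixed channel independent of the secret, and the pair $(\Sim,\Dec)$ is not a valid compression scheme for the $\PRS$ family. The paper's fix is exactly to swap the roles: take the single, $k$-independent circuit $C=\PRS.\Gen(\cdot)$, let the \emph{message} be the random key $k$, and request the functional key for $C$. Now the simulator sees only $C$ (which reveals nothing about $k$) and $C(k)=\ket{\psi_k}$; the functional key $\fk_C$ is independent of $k$; and $\Sim$ producing a short $\ct$ with $\Dec(\fk_C,\ct)\approx\ket{\psi_k}$ genuinely constitutes compression of the $\PRS$ state, contradicting its incompressibility.
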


Recall that the security of a $\PRS$ is defined with respect to Haar random states, which is the uniform distribution over pure quantum states of a given dimension. A $\PRS$ is secure if an adversary cannot distinguish whether he is given a polynomial amount of copies of a Haar random state or a $\PRS$. We use this relationship to work with Haar random states instead of $\PRS$ in the first step of the proof. In particular, we show that there do not exist CPTP maps $\Comp,\Decomp$ such that a Haar random state can undergo the compression $\Comp$ and then be recovered by the decompression $\Decomp$  with high fidelity. 

\begin{lemma}[informal]
    Let $m<n$. There do not exist CPTP channels $\Comp: D(\Hcal^{n})\rightarrow D(\Hcal^{m}), \Decomp: D(\Hcal^{m}) \rightarrow D(\Hcal^n)$ such that 

    $$ \Pr \left[ F(|\psi\rangle \langle \psi|, \Decomp \circ \Comp (|\psi \rangle \langle \psi |)) \geq 1 - \negl \right] \geq \frac{1}{\poly}$$
    where the probability is taken over choosing a Haar random state.
\end{lemma}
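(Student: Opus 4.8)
The plan is to argue by a dimension-counting / information-theoretic obstruction, using the fact that a Haar-random state carries essentially $n$ qubits worth of "incompressible randomness." The key quantity to track is the expected fidelity $\Expect_{\psi}\bigl[F(|\psi\rangle\langle\psi|,\ \Decomp\circ\Comp(|\psi\rangle\langle\psi|))\bigr]$, where the expectation is over the Haar measure on pure states in $\Hcal^n$. I would first reduce the probabilistic statement to a statement about this expectation: if the event $F \geq 1-\negl$ held with probability at least $1/\poly$, then since $F\in[0,1]$ the expected fidelity would be at least $(1/\poly)(1-\negl) = 1/\poly - \negl \geq 1/(2\poly)$ for large enough security parameter. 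So it suffices to show that for \emph{any} CPTP channels $\Comp:D(\Hcal^n)\to D(\Hcal^m)$ and $\Decomp:D(\Hcal^m)\to D(\Hcal^n)$ with $m<n$, the Haar-averaged fidelity is negligible — in fact exponentially small in $n-m$.

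The main step is to bound $\Expect_{\psi}[F(|\psi\rangle\langle\psi|, \rho_\psi)]$ where $\rho_\psi := \Decomp\circ\Comp(|\psi\rangle\langle\psi|)$. Because $|\psi\rangle\langle\psi|$ is rank one, $F(|\psi\rangle\langle\psi|,\rho_\psi) = \langle\psi|\rho_\psi|\psi\rangle$ (taking the convention $F(\sigma,\rho)=(\Tr\sqrt{\sqrt\sigma\rho\sqrt\sigma})^2$, which reduces to $\langle\psi|\rho|\psi\rangle$ for a pure $\sigma$; if the paper uses the square-root convention the argument is identical up to squaring). Now I would use the fact that $\Decomp\circ\Comp$ is a channel that factors through the $2^m$-dimensional space $\Hcal^m$: by Choi/Kraus representation, $\Decomp\circ\Comp(X) = \sum_{j} K_j X K_j^\dagger$ for operators $K_j$, and crucially the intermediate space has dimension $2^m$, so one can write the composed channel with at most... more usefully, the composed channel has a Kraus decomposition where the "information bottleneck" forces $\Decomp\circ\Comp(|\psi\rangle\langle\psi|) = \Decomp(\Comp(|\psi\rangle\langle\psi|))$ and $\Comp(|\psi\rangle\langle\psi|)$ is a density operator on a $2^m$-dimensional space. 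The cleanest route: average first over $\psi$. Define the channel $\mathcal{E} = \Decomp\circ\Comp$. Then
\[
\Expect_{\psi}\bigl[\langle\psi|\mathcal{E}(|\psi\rangle\langle\psi|)|\psi\rangle\bigr] = \Tr\bigl[(\idt\otimes\mathcal{E})(\Expect_\psi[|\psi\rangle\langle\psi|^{\otimes 2}])\cdot \mathrm{SWAP}\bigr],
\]
and $\Expect_\psi[|\psi\rangle\langle\psi|^{\otimes 2}] = \frac{2}{2^n(2^n+1)}\Pi_{\mathrm{sym}}$ where $\Pi_{\mathrm{sym}}$ is the projector onto the symmetric subspace of $\Hcal^n\otimes\Hcal^n$. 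Plugging in and using that $\mathcal{E}$ factors through dimension $2^m$ (so the Choi matrix of $\mathcal{E}$ has rank at most $2^m$, equivalently $\mathcal{E}(\cdot)=\sum_{j=1}^{2^m} A_j(\cdot)A_j^\dagger$ after merging through the bottleneck — this is the place the hypothesis $m<n$ enters), a direct computation bounds the above by $O(2^{m-n})$. Concretely I expect to get $\Expect_\psi[F] \leq \frac{2^m+1}{2^n+1} \leq 2^{-(n-m)}+2^{-n}$, which is negligible whenever $n-m = \omega(\log\lambda)$; and even for $n-m$ constant it contradicts the required $1/\poly$ lower bound once combined with the reduction above (here one may need $n-m$ super-logarithmic to beat $1/\poly$, which matches the informal statement's "$2^{-(n-k)}$" phrasing and the intended application where the gap is large).

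The step I expect to be the main obstacle is making rigorous the claim that "$\Decomp\circ\Comp$ factors through dimension $2^m$" translates into the right bound on the Haar average — i.e., proving the operator inequality that lets us replace the composed channel by its bottleneck dimension. The honest way to handle this is: $\Comp(|\psi\rangle\langle\psi|)$ lives in $D(\Hcal^m)$, so $\langle\psi|\Decomp(\sigma)|\psi\rangle \leq \|\Decomp^\dagger(|\psi\rangle\langle\psi|)\|_\infty \cdot \Tr\sigma = \|\Decomp^\dagger(|\psi\rangle\langle\psi|)\|_\infty$ is too lossy; instead one keeps the full expression and uses the symmetric-subspace identity as above, where the dimension $2^m$ appears because $\Tr[\Comp^\dagger\otimes\mathrm{id}(\cdots)]$ is controlled by $\dim \Hcal^m$. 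Alternatively — and this is likely how the paper does it — one invokes the cited result of Nayak–Salvail \cite{NS06} (or a one-shot random-access-code / quantum-information bound) as a black box: a $2^m$-dimensional quantum state cannot store enough information to reconstruct a Haar-random $2^n$-dimensional state with non-negligible fidelity, since doing so would violate Holevo-type / dimension bounds on how much classical information (here, the $\approx n \cdot 2^n$ real parameters specifying $|\psi\rangle$, or a discretized net thereof of size $2^{\Omega(2^n)}$) a $2^m$-qubit register can faithfully convey. I would present the clean symmetric-subspace computation as the primary proof and remark that it can also be derived from \cite{NS06}.
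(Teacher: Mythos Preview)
Your computation of the Haar-averaged fidelity is essentially the paper's: the symmetric-subspace identity you invoke is equivalent to the Horodecki--Nielsen relation $\Favg(\Phi)=\frac{d\,\Fe(\Phi)+1}{d+1}$ that the paper uses, and both routes land on the bound $\Favg(\Phi)\leq\frac{2^m+1}{2^n+1}$ via the rank constraint coming from the $2^m$-dimensional bottleneck. (The correct formulation of that constraint is that each individual Kraus operator $C_{ij}=B_jA_i$ of $\Phi=\Decomp\circ\Comp$ has rank at most $2^m$, not that the Choi rank or the number of Kraus operators of $\Phi$ is $2^m$; the paper then applies $|\Tr(C_{ij})|^2\leq\mathrm{rank}(C_{ij})\,\Tr(C_{ij}^\dagger C_{ij})$ together with $\sum_{ij}C_{ij}^\dagger C_{ij}=\idt$.) So that half of the argument is fine, just phrased differently.

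The genuine gap is your reduction from the probability statement to the expectation. You use a Markov-type step: $\Pr[F\geq 1-\negl]\geq 1/\poly$ implies $\Expect[F]\geq 1/\poly$, and then seek $\Favg(\Phi)<1/\poly$ for a contradiction. But the lemma is stated for \emph{every} $m<n$, including $m=n-1$, where the expected-fidelity bound is only $\approx 1/2$ and Markov says nothing useful. The paper closes this gap with concentration of measure: it verifies that $\psi\mapsto\langle\psi|\Phi(|\psi\rangle\langle\psi|)|\psi\rangle$ is $4$-Lipschitz on the unit sphere and then applies Levy's Lemma to obtain $\Pr\bigl[F\geq 1-\negl\bigr]\leq 2\exp(-\Theta(2^n))$, irrespective of how small $n-m$ is. This is precisely the feature the paper emphasizes --- that even a single qubit of compression destroys recoverability --- and your Markov step cannot reach it. Your own parenthetical hedge (``one may need $n-m$ super-logarithmic'') is exactly where your argument breaks; the fix is not to restrict the gap but to replace Markov by concentration. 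Finally, the Nayak--Salvail bound you suggest as a fallback black box is used elsewhere in the paper for compressing \emph{classical} messages into quantum states and does not drive this lemma.
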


This Lemma can be seen as a quantum analog of the classical fact that uniformly random strings cannot be compressed. Note that in this Theorem the compression does not depend on the number of qubits by which we compress the state. Even a compression of just a single qubit makes it nearly impossible to recover the state. This is in contrast to the classical setting where removing a single bit from a random string leaves an adversary with a high chance of recovering the string by guessing the missing bit. 

The proof of the incompressibility of Haar random states proceeds as follows. First, we compute the expected fidelity $\Favg$ of the channel $\Phi = \Decomp \circ \Comp$ applied to a Haar random state by relating it to the entanglement fidelity $\Fe$. This allows us to analyze the effect of the map $\Phi$ on one half of the maximally entangled state instead of a Haar random state. Formally the expected fidelity and the entanglement fidelity are related as $\Favg = \frac{2^n \Fe + 1}{2^n + 1}$. A key step in the proof is the decomposition of the CPTP maps $\Comp$ and $\Decomp$ in their Kraus representations $\{A_i\}$ and $\{B_i\}$. Due to the required dimension of the maps we know that $A_i$ outputs states in a $2^m$-dimensional space and $B_i$, acting on this space, also has outputs contained in a $2^m$-dimensional subspace. We conclude that the combined map $\Phi$ has rank at most $2^m$ which we can use to show that the entanglement fidelity is $\Fe = \frac{2^m}{2^n}$. Therefore the average fidelity is $\Favg(\Phi) \leq \frac{1}{2^{n-m}} + O(\frac{1}{2^n})$. Finally, we can apply Levy's Lemma, which states that if a function $f$ does not change too abruptly (a Lipschitz function), then its value on a Haar random state $f(|\psi \rangle)$ does not deviate much from the expected value of the function. The probability of a deviation by $\epsilon$ is exponentially small in $\epsilon^2$ and the dimension $2^n$. For a successful compression and decompression algorithm we expect the fidelity to be close to 1 but we have just computed that the expected fidelity $\Favg(\Phi)$ is much lower, i.e. even for only a single qubit of compression $m = n-1$ we get $\Favg(\Phi) \leq \frac{1}{2} + O(\frac{1}{2^n})$. Due to Levy's Lemma we see that the fidelity on a random state cannot deviate much and we conclude that $\Enc, \Dec$ with the required properties cannot exist. 

To finish the first step in our proof, we need to define a class of incompressible circuits. From the previous Lemma, it easily follows that $\PRS$ are incompressible. If there was a compression and decompression algorithm for pseudo-random states, then we could distinguish $\PRS$ from Haar random states by applying the algorithms and checking if we got the correct state back. Note that to check if we get the correct state then we need at least one additional copy of the $\PRS$ or Haar random state to apply a Swap test. Then the adversary can guess not with certainty but with high enough probability if he was given a $\PRS$ or a truly random state. 

What remains is the second step in the proof: Showing that a $\QFE$ scheme implies a compression and decompression algorithm for $\PRS$. Assuming the reader's familiarity with the $\textsf{SIM}$--security game (\cref{def:simsecurityna}) we give a brief description of the procedures $\Comp$ and $\Decomp$ based on the simulator of the security game. In the non-adaptive security game the adversary first requests a function secret key for the circuit $\PRS.\Gen(\cdot): \bin^\lambda \rightarrow D(\Hcal^{s})$ which is the generator for a pseudorandom state. Then the adversary samples a random key $ k \in \bin^\lambda$ and requests an encryption of this key. The simulator is now given the output of the circuit applied to the key $\PRS.\Gen(k) = |\psi_k\rangle$ and needs to produce a ciphertext of size $t$. Now, if $s>t$ then this is exactly the task of compressing $|\psi_k\rangle$ with knowledge of the state and the circuit but not the key $k$, which we showed to be impossible. Decompression of the state is the same as applying the decryption algorithm of the $\QFE$ scheme given the ciphertext and the function secret key.

\subsection{Impossibility of \textsf{SIM}--security with many functional-key queries}

Our final result demonstrates the impossibility of \textsf{SIM}--secure $\QFE$ in the setting where the adversary receives one ciphertext query and many non-adaptive functional key queries, assuming $\PKE$. More formally, we show the following:

\begin{theorem}[informal]
    Assuming the existence of $\PKE$ with classical public keys, there does not exist a $1\text{-}\textsf{M}\text{-}\textsf{NA}$ \textsf{SIM}--secure $\QFE$ scheme. 
\end{theorem}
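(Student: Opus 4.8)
The plan is to derive a contradiction with the quantum incompressibility lemma (\cref{lem:incompress}) by using $\PKE$ to build a functionality whose outputs form many fresh random bits that an honest receiver can extract from a single short ciphertext. Fix a $\PKE$ scheme $(\PKKeygen,\PKEnc,\PKDec)$ with classical public keys, and let $t=t(\lambda)$ be the length of a $\QFE$ ciphertext for messages of length $|\mathsf{sk}|$; note $t$ is a fixed polynomial, independent of the number of functional keys requested. Let $q=q(\lambda)$ be a polynomial with $q$ sufficiently larger than $t$ (say $q=t+\lambda$). The challenge message is $m:=\mathsf{sk}$, where $(\mathsf{pk},\mathsf{sk})\leftarrow\PKKeygen(1^\lambda)$; the environment also samples a uniform $x=(x_1,\dots,x_q)\in\bin^q$ and sets $c_i\leftarrow\PKEnc(\mathsf{pk},x_i)$, handing $(\mathsf{pk},c_1,\dots,c_q)$ to the adversary. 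The adversary makes the $q$ non-adaptive key queries for $F_i:=\PKDec(\cdot,c_i)$ (so $F_i(m)=x_i$ with overwhelming probability) and then asks for the challenge ciphertext $\ct$ of $m$. The distinguisher receives the view $(\ct,\mathsf{fk}_{F_1},\dots,\mathsf{fk}_{F_q})$ together with $m$ and the queried functions, recovers each $x_i=\PKDec(\mathsf{sk},c_i)$, decrypts $\ct$ \emph{sequentially} with each key, $x_i'\leftarrow\Dec(\mathsf{fk}_{F_i},\ct)$ (feeding the post-measurement ciphertext forward), and accepts iff $x_i'=x_i$ for all $i$. In the real world every decryption is correct with probability $1-\negl$, hence near-deterministic, so by the gentle-measurement lemma the total disturbance of $\ct$ over the $q=\poly(\lambda)$ sequential decryptions is negligible, and the distinguisher accepts with probability $1-\negl$.

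Next I would analyze the ideal world. By \textsf{SIM}--security there is a simulator $\Sim$ making the ideal view (together with $m$) indistinguishable from the real one, so the distinguisher also accepts in the ideal experiment with probability $1-\negl$. In the non-adaptive single-message game $\Sim$ first outputs the keys $\mathsf{fk}_{F_i}$ from the function descriptions alone --- i.e.\ from $\mathsf{pk}$ and $(c_i)_i$ --- and only after being handed $(F_i(m))_i=(x_i)_i$ does it output $\tilde\ct$. I would then pass to a \emph{decorrelation hybrid}: replace $c_i\leftarrow\PKEnc(\mathsf{pk},0)$ while still handing $\Sim$ the true values $x_i$ and still checking the distinguisher's output against $x_i$. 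Since every step other than sampling the $c_i$'s is polynomial-time and can be carried out by a reduction that chooses $x$ itself, obtains the $c_i$'s from a $\PKE$--CPA challenger, and then simulates $\Sim$, the (overridden) trusted-party answers, and the distinguisher --- using no information about $\mathsf{sk}$ beyond what the distinguisher already receives --- $\PKE$ security implies the acceptance probability changes by at most $\negl$; so the distinguisher still accepts with probability $1-\negl$ in the hybrid. In the hybrid the keys $\mathsf{fk}_{F_i}$ depend only on $\mathsf{pk}$, encryptions of $0$, and $\Sim$'s randomness, all independent of $x$, whereas $\tilde\ct$ is a $t$-qubit state produced after learning $x$ (possibly entangled with registers $\Sim$ retained from the key-generation phase). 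This is precisely the compression scenario of \cref{lem:incompress}: the $q$ uniform classical bits $x$ are encoded into a $t$-qubit state with $x$-independent quantum side information $\{\mathsf{fk}_{F_i}\}$, and decompression --- running $\Dec(\mathsf{fk}_{F_i},\cdot)$ sequentially --- recovers all $q$ bits with probability $1-\negl$. As $t<q$, this contradicts \cref{lem:incompress}.

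The step I expect to be the main obstacle is the decorrelation hybrid: one must argue that switching the $\PKE$ ciphertexts underlying the requested functions to encryptions of $0$ --- so that the functional keys become information-theoretically independent of the secret $x$ --- leaves the view of $\Sim$ and of the distinguisher essentially unchanged. The delicate point is that the simulator is still fed the \emph{true} $x_i$'s (now distinct from $\PKDec(\mathsf{sk},c_i)=0$) and the distinguisher still checks against them; the reduction to $\PKE$ security works precisely because it is the reduction that chose the $x_i$'s, so it can override the trusted-party answers and perform the final check without ever needing $\mathsf{sk}$. A second point to get right is matching the hypotheses of \cref{lem:incompress}: it must allow the compressed state to be arbitrarily entangled with the $x$-independent side information, which it does, since this is already the situation in the $\textsf{M}\text{-}1\text{-}\textsf{AD}$ argument with the roles of ciphertext and functional key interchanged. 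The remaining ingredients --- the gentle-measurement argument showing the honest receiver recovers all $q$ values from its single ciphertext, and choosing $q$ a large enough polynomial in $t$ for \cref{lem:incompress} to apply --- are routine.
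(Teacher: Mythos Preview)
Your route is genuinely different from the paper's. The paper hardcodes \emph{public keys} into encryption circuits $C_i=\PKEnc(pk_i,\cdot)$, so the simulator is handed PKE ciphertexts as the evaluations $C_i(M)$; it then plants a ``trap'' key pair at a random index and uses \textsf{MK-CPA} security to switch those evaluations from encryptions of one $M$ to encryptions of independent random messages, forcing the simulator into a compressor. You instead hardcode \emph{PKE ciphertexts} into decryption circuits $F_i=\PKDec(\cdot,c_i)$, make the challenge message the secret key, and use a single-key CPA hybrid to decorrelate the $c_i$ (and hence the functional keys) from $x$ while still feeding $x$ to the simulator. This is cleaner---no trap index, no many-key security, and the randomness to be compressed is handed to $\Sim$ in the clear.

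The gap is that your $F_i$ must have a classical description (the circuit model of \cref{def:circuit} only allows ancillas initialized to $|0\rangle$), so hardwiring $c_i$ forces the PKE to have \emph{classical ciphertexts}; the theorem assumes only classical public keys and deliberately allows quantum ciphertexts, which is precisely why the paper hardcodes $pk_i$ rather than $c_i$. Two smaller fixes: (i) $q=t+\lambda$ does not trigger \cref{lem:incompress}, which needs the compressed size below $q/2$---take $q>2t$, e.g.\ $q=2t+\lambda$; (ii) nothing in the definition bounds the simulator's output size, so the distinguisher should also reject when $|\tilde{\ct}|>t$ (as the paper does), letting indistinguishability from the real world pin down the simulated ciphertext's size. (Also, in \cref{def:simsecurityna} the non-adaptive keys come from honest $\Keygen$, not from $\Sim$; this only helps your argument.)
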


\paragraph{High-level intuition.} In our reduction, we generate functional keys that compute \emph{public key encryptions} for a \textsf{MK-CPA} secure $\PKE$ scheme under many independent public keys. The correctness of a $\QFE$ scheme ensures that a ciphertext of a message $M$ (under the $\QFE$ scheme) can be used along with the functional keys to obtain encryptions of $M$ instead under the $\PKE$ scheme with the various public keys. We move from a case where all the ciphertexts encrypt a \emph{single} message $M$ (under different public keys) to a hybrid where each ciphertext encrypts a \emph{distinct} message. Indistinguishability of the $\PKE$ scheme then forces the $\QFE$ simulator to produce one short $\QFE$ ciphertext that, when evaluated via the functional keys, yields many valid $\PKE$ ciphertexts that yield all distinct messages. This turns the simulator into an \emph{information compressor}: from a single object it must enable recovery of $n$ independent encrypted messages. This violates basic information-theoretic limits, yielding a contradiction.

\paragraph{From \textsf{IND-CPA} to \textsf{IND-MK-CPA}.} We first formalize a convenient strengthening of $\PKE$ security, which we call \textsf{IND-MK-CPA}: roughly, security must hold even when the adversary simultaneously receives challenge ciphertexts under \emph{many} independently generated public keys. A standard hybrid argument shows that ordinary \textsf{IND-CPA} implies \textsf{IND-MK-CPA}.

\paragraph{Attack against \textsf{IND-MK-CPA}.}
Assume for contradiction that there exists a $1\text{-}\textsf{M}\text{-}\textsf{NA}$ \textsf{SIM}--secure $\QFE$ scheme given by the algorithms $(\Setup, $ $\Keygen,$ $ \Enc, \Dec)$ (see \cref{def:qfe} of $\QFE$).  \textsf{SIM}--security of the scheme guarantees the existence of a simulator $\textsf{Sim}$. 

We describe an attack $\adv$ against \textsf{MK-CPA} security of a $\PKE$ scheme given by the algorithms $(\textsf{PKE.Gen},\textsf{PKE.Enc},\textsf{PKE.Dec})$. The adversary $\adv$ in the \textsf{IND-MK-CPA} security experiment picks a random message $M$ and $n$ independent random messages $m_{i}$, and outputs $(M,m_{i})_{i\in [n]}$, where $n$ is a large enough polynomial in the security parameter. The challenger samples a bit $b\gets \{0,1\}$ and returns $(\ct_i)_{i\in[n]}$ where $\ct_i=\PKEnc(pk_i,M)$ if $b=0$ and $\ct_i=\PKEnc(pk_i,m_{i})$ if $b=1$. Next, $\adv$ internally samples a key pair $(\tilde{sk},\tilde{pk}) \leftarrow \PKEnc (1^\lambda)$ and encrypts $\tilde{\ct}\gets \PKEnc(\tilde{pk},M)$. $\adv$ then picks $j\gets [n]$ and replaces the ciphertext $\ct_j$ with $\tilde{\ct}$ and the public key $pk_j$ with $\tilde{pk}$. Intuitively, this public key and ciphertext act as a ``trap'' that is used by the adversary to check whether the simulator is actually generating a valid encryption of $M$. 

$\adv$ now uses the $\QFE$ scheme to attack the $\PKE$ scheme as follows. $\adv$ samples a master key $mk\gets \Setup(1^\lambda)$ and, for each $i \in [n]$,
$\adv$ generates a functional key  $\mathsf{fk}_{C_i}$ corresponding to the circuit
\[
C_i := \PKEnc(\mathsf{pk}_i, \cdot),
\]

Now $\adv$ runs the simulator $\Sim$ on the function list and evaluations $(C_i,\ct_i)_{i\in[n]}$. $\Sim$ outputs a $\QFE$ ciphertext $\widetilde{\mathsf{CT}}$.\footnote{Crucially, the size of a real $\QFE$ ciphertext is fixed by the scheme and does \emph{not} depend on $n$.}

Using the functional keys, the adversary evaluates
$\widetilde{\mathsf{CT}}$ to obtain ciphertexts $(\ct'_i)_{i \in [n]}$, and then, using his secret key, verifies that
\[
\mathsf{Dec}(\tilde{{sk}}, \ct'_j) = M .
\]

Notice that in the case $b=0$, the ciphertexts $(\ct_i)_{i\in [n]}$ correspond to the evaluations of the same message $M$ on the circuits $(C_i)_{i\in [n]}$. This means that the simulator, given $(C_i,\ct_i)_{i\in[n]}$, should be able to produce a single ciphertext that is indistinguishable from an encryption of $M$ in the eyes of the adversary. In other words, the verification step above should pass by the correctness of $\QFE$, when $b=0$.

Indistinguishability ensures that this simulation is also valid in a hybrid world $(b=1)$
where the ciphertexts $(\ct_i)_{i\in [n]}$ encrypt $n$ \emph{different} messages $(m_{i})_{i\in [n]}$. In that case,
the simulator must still produce a compressed ciphertext $\widetilde{\mathsf{CT}}$
that, together with the functional keys, expands back into $n$ ciphertexts $(\ct'_i)_{i\in [n]}$ such that $\mathsf{Dec}(\tilde{{sk}}, \ct'_j) = M$. Otherwise, we could use the simulator to break \textsf{IND-MK-CPA} and distinguish $b$ from random.

However, given that $j$ was picked randomly and is hidden from the perspective of the simulator, the only way to guarantee this would be for the simulator to ensure that 
\[
\mathsf{Dec}({{sk}}_i, \ct'_i) = m_{i} .
\]
for all $i\in [n]$. 

This requirement effectively makes the simulator an \emph{information compression algorithm}.
Specifically, it must take the encryption of $n$ messages and compress it into a state
$\widetilde{\mathsf{CT}}$ of size strictly smaller than $n$, while ensuring that the
combination of $\widetilde{\mathsf{CT}}$, the functional keys, and the secret keys suffice to perfectly recover all $n$ messages.

Crucially, the secret keys and functional keys are independent of the messages and are of bounded size (smaller than $n$), so
this constitutes genuine compression. Since such compression is impossible, we reach
a contradiction. Hence, no $1\text{-}\textsf{M}\text{-}\textsf{NA}$ \textsf{SIM}--secure $\QFE$ scheme can exist under the assumption of
$\PKE$ with classical keys.

An interesting direct corollary to our result is that $1\text{-}\textsf{M}\text{-}\textsf{NA}$ \textsf{SIM}--secure public key $\QFE$ (with classical public keys) is \emph{unconditionally} impossible, given that this primitive implies $\PKE$.  
\begin{corollary}
There does not exist $1\text{-}\textsf{M}\text{-}\textsf{NA}$ \textsf{SIM}--secure public key $\QFE$ with classical public keys. 
\end{corollary}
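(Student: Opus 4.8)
The plan is to derive the corollary from the preceding Theorem by removing the $\PKE$ assumption, using the observation that a public-key $\QFE$ scheme with classical public keys is, essentially for free, a $\PKE$ scheme with classical public keys. So if a $1\text{-}\textsf{M}\text{-}\textsf{NA}$ \textsf{SIM}--secure public-key $\QFE$ scheme existed, the Theorem applied to the $\PKE$ scheme it yields would rule out that very scheme.

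First I would record the easy direction: impossibility in the secret-key setting lifts to the public-key setting, since a public-key $\QFE$ scheme trivially gives a secret-key one (keep $\mpk$ as part of the master secret key and never expose the public-key interface), and this preserves $1\text{-}\textsf{M}\text{-}\textsf{NA}$ \textsf{SIM}--security. Hence the Theorem already yields: if $\PKE$ with classical public keys exists, then no $1\text{-}\textsf{M}\text{-}\textsf{NA}$ \textsf{SIM}--secure \emph{public-key} $\QFE$ with classical public keys exists. Next I would show public-key $\QFE \Rightarrow \PKE$, both with classical public keys. Writing $(\mpk,\msk)\gets\Setup(1^\lambda)$ for the $\QFE$ setup, define $\PKKeygen(1^\lambda)$ to run $(\mpk,\msk)\gets\Setup(1^\lambda)$, generate $\mathsf{fk}_{\mathrm{id}}\gets\Keygen(\msk,\mathrm{id})$ for the identity circuit $\mathrm{id}$ on messages of the relevant length, and output public key $\mpk$ and secret key $\mathsf{fk}_{\mathrm{id}}$; set $\PKEnc(\mpk,\msg):=\Enc(\mpk,\msg)$ and $\PKDec(\mathsf{fk}_{\mathrm{id}},\ct):=\Dec(\mathsf{fk}_{\mathrm{id}},\ct)$. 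Correctness is immediate from $\QFE$ correctness since $\mathrm{id}(\msg)=\msg$. For \textsf{IND-CPA} security, the \textsf{IND-CPA} adversary sees only $\mpk$ and one challenge ciphertext and makes no functional-key query, so it is a $\QFE$ adversary in the single-message, zero-key, non-adaptive regime; $1\text{-}\textsf{M}\text{-}\textsf{NA}$ \textsf{SIM}--security permits zero key queries and in that case forces the challenge ciphertext to be simulatable with no information about $\msg$, which gives indistinguishability of the two challenge messages (equivalently, invoke that \textsf{SIM}--security implies \textsf{IND}--security \cite{MM24}). The public keys are classical because $\QFE$ public keys are classical; ciphertexts and the secret key $\mathsf{fk}_{\mathrm{id}}$ may be quantum, which is allowed by the notion of $\PKE$ used in the Theorem.

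Combining the two steps: assume toward a contradiction a $1\text{-}\textsf{M}\text{-}\textsf{NA}$ \textsf{SIM}--secure public-key $\QFE$ scheme with classical public keys for general circuits. By the second step it yields a $\PKE$ scheme with classical public keys; by the first step and the Theorem, this $\PKE$ scheme implies that no such $\QFE$ scheme exists --- a contradiction. Hence no $1\text{-}\textsf{M}\text{-}\textsf{NA}$ \textsf{SIM}--secure public-key $\QFE$ with classical public keys exists, unconditionally.

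The main obstacle I anticipate is not in the argument itself but in making the function classes line up without hidden circularity: the Theorem needs the $\QFE$ to support the encryption circuits $\PKEnc(\mathsf{pk}_i,\cdot)$ of the $\PKE$ scheme it is fed, while the $\PKE$-from-$\QFE$ construction needs only the identity circuit, so taking $\QFE$ for all polynomial-size (quantum) circuits makes both admissible and there is no loop. One should also check that the Theorem's reduction survives the derived $\PKE$ having a quantum secret key, but in that reduction $\tilde{sk}$ is used only once (for the trap check), so no no-cloning issue arises; this is really the only subtlety, and everything else is routine bookkeeping.
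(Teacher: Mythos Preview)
Your proposal is correct and follows the same route as the paper: derive a $\PKE$ scheme (with classical public keys) from the assumed public-key $\QFE$, then invoke the Theorem to obtain a contradiction. The paper states this in one line (``this primitive implies $\PKE$'') and records the implication as \cref{cor:many-key-from-QFE}; your write-up simply unpacks it. One small simplification: in the paper's setting the master secret key $\msk$ is classical, so taking $sk:=\msk$ (and deriving $\mathsf{fk}_{\mathrm{id}}$ at decryption time) yields a $\PKE$ with classical secret and public keys, which sidesteps the quantum-$sk$ subtlety you flag at the end.
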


\section{Preliminaries}

\subsection{Notation}

Let $\Hcal^d$ be a $d-$dimensional Hilbert space. Let $D(\Hcal)$ denote the set of density operators on $\Hcal$. A quantum transformation is a completely positive trace preserving (CPTP) map which maps a density operator to a density operator. The trace distance for two density matrices is defined as $\T(\rho,\sigma) :=
\frac{1}{2}\|\rho-\sigma \|_1=
\frac{1}{2}\Tr [\sqrt{(\rho - \sigma)^{\dag}(\rho - \sigma)}]$, where $\Tr$ denotes the trace operator. 
Throughout this paper we use the following notation for quantum states: $\sk, \ct, \msg$, especially when a definition allows for either classical or quantum keys, messages, etc. For specifically classical elements we use $sk,m,ct$. For explicitly quantum states we also use $\rho, \sigma$ or the bra-ket notation $|\phi\rangle$.

\begin{lemma} (Gentle Measurement \cite{W99})
\label{lem:genlemeas}
    Let $\rho$ be  quantum state and let $\{\Pi, \id - \Pi\}$ be a projective measurement  such that $\Tr(\Pi \rho) \geq 1 - \delta$. Then the post measurement state post-selected on obtaining the first outcome is $\rho' = \frac{\Pi \rho\Pi}{\Tr(\Pi \rho)}$. It holds that $\T(\rho', \rho) \leq 2\sqrt{\delta}$.
\end{lemma}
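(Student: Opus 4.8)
The plan is to reduce to the case of a pure $\rho$ and then lift to arbitrary mixed $\rho$ by purification; the whole argument is elementary. First I would treat $\rho=|\psi\rangle\langle\psi|$ pure. Set $p:=\Tr(\Pi\rho)=\langle\psi|\Pi|\psi\rangle\ge 1-\delta$, so that $\rho'=|\psi'\rangle\langle\psi'|$ with $|\psi'\rangle=\Pi|\psi\rangle/\sqrt{p}$. A one-line computation gives $\langle\psi|\psi'\rangle=\langle\psi|\Pi|\psi\rangle/\sqrt{p}=\sqrt{p}$, hence the fidelity is $|\langle\psi|\psi'\rangle|^2=p\ge 1-\delta$. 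Invoking the standard identity $\T(|a\rangle\langle a|,|b\rangle\langle b|)=\sqrt{1-|\langle a|b\rangle|^2}$ for pure states then yields $\T(\rho',\rho)=\sqrt{1-p}\le\sqrt{\delta}$.

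Next, for a general $\rho\in D(\Hcal_A)$, I would fix a purification $|\psi\rangle\in\Hcal_A\otimes\Hcal_B$ with $\Tr_B|\psi\rangle\langle\psi|=\rho$ and apply the projective measurement $\{\Pi\otimes\id_B,(\id-\Pi)\otimes\id_B\}$ to $|\psi\rangle$. Its first outcome occurs with probability $\langle\psi|(\Pi\otimes\id_B)|\psi\rangle=\Tr(\Pi\rho)=p\ge 1-\delta$, and, since $\Pi$ acts only on $A$ and therefore commutes past $\Tr_B$, the post-measurement pure state $|\psi'\rangle=(\Pi\otimes\id_B)|\psi\rangle/\sqrt{p}$ satisfies $\Tr_B|\psi'\rangle\langle\psi'|=\Pi\rho\Pi/p=\rho'$. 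By the pure-state case applied on $\Hcal_A\otimes\Hcal_B$ we get $\T(|\psi'\rangle\langle\psi'|,|\psi\rangle\langle\psi|)\le\sqrt{\delta}$, and since the partial trace $\Tr_B$ is a CPTP map while the trace distance is contractive under CPTP maps, $\T(\rho',\rho)\le\T(|\psi'\rangle\langle\psi'|,|\psi\rangle\langle\psi|)\le\sqrt{\delta}\le 2\sqrt{\delta}$, which is in fact sharper than the claimed bound.

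I expect no genuine obstacle here — this is the projective-measurement version of Winter's gentle-measurement lemma. The only points worth stating carefully are (i) that $\Pi$ commutes past $\Tr_B$, so that the post-measurement state of the purification really traces down to $\rho'$, and (ii) that the clean constant comes from purifying: a purely direct estimate, writing $\rho-\Pi\rho\Pi=(\id-\Pi)\rho+\Pi\rho(\id-\Pi)$, bounding each term via $\|XY\|_1\le\|X\|_2\|Y\|_2$ with $\rho=\sqrt{\rho}\sqrt{\rho}$ and $\|(\id-\Pi)\sqrt{\rho}\|_2^2=\Tr((\id-\Pi)\rho)\le\delta$, and then paying an extra $O(\delta)$ for renormalizing by $1/p$, only gives the looser (but still sufficient) $\T(\rho',\rho)\le \tfrac{3}{2}\sqrt{\delta}\le 2\sqrt{\delta}$ (assuming $\delta\le 1$, since otherwise the statement is trivial). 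Either route suffices for the bound as stated.
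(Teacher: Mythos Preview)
The paper does not supply its own proof of this lemma; it is stated in the preliminaries with a citation to \cite{W99} and used as a black box. Your argument is the standard one and is correct: the pure-state computation is right, the purification step is valid since $\Pi$ acts only on $A$ and hence $\Tr_B$ commutes with conjugation by $\Pi\otimes\id_B$, and contractivity of trace distance under partial trace finishes it. You even recover the sharper constant $\sqrt{\delta}$ rather than $2\sqrt{\delta}$, which is consistent with how this lemma is usually stated in the literature.
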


\begin{fact}[Trace distance vs.\ Euclidean distance for pure states]
\label{fact:l1l2}
Let $|\psi\rangle,|\phi\rangle$ be unit vectors in $\mathbb{C}^d$. 
$\ket{\psi}\bra{\psi} = |\phi\rangle\langle\phi|$. Then
\[
\|\ket{\phi}\bra{\phi} - \ket{\psi}\bra{\psi}\|_1 
\le  2 \|\ket{\psi} - \ket{\phi}\|_2.
\]
\end{fact}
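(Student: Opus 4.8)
The plan is to prove the bound by a triangle-inequality argument on the trace norm, using a single auxiliary rank-one operator as a bridge. I would write
\[
\bigl\|\ket{\phi}\bra{\phi} - \ket{\psi}\bra{\psi}\bigr\|_1 \;\le\; \bigl\|\ket{\phi}\bra{\phi} - \ket{\phi}\bra{\psi}\bigr\|_1 + \bigl\|\ket{\phi}\bra{\psi} - \ket{\psi}\bra{\psi}\bigr\|_1 ,
\]
which is legitimate because $\|\cdot\|_1$ is a norm on the whole space of operators on $\mathbb{C}^d$, not merely on density matrices, so the intermediate term $\ket{\phi}\bra{\psi}$ need not be Hermitian or positive.

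Next I would factor each summand as a rank-one operator: $\ket{\phi}\bra{\phi} - \ket{\phi}\bra{\psi} = \ket{\phi}\bigl(\bra{\phi} - \bra{\psi}\bigr)$ and $\ket{\phi}\bra{\psi} - \ket{\psi}\bra{\psi} = \bigl(\ket{\phi} - \ket{\psi}\bigr)\bra{\psi}$. For any rank-one operator $\ket{a}\bra{b}$ the trace norm is exactly its single nonzero singular value, $\bigl\|\ket{a}\bra{b}\bigr\|_1 = \|a\|_2\,\|b\|_2$ (immediate from the SVD, or cited as standard). Since $\ket{\psi}$ and $\ket{\phi}$ are unit vectors and $\|\bra{\phi} - \bra{\psi}\|_2 = \|\ket{\phi} - \ket{\psi}\|_2$, each of the two summands equals $\|\ket{\psi} - \ket{\phi}\|_2$, and adding them yields the claimed inequality $\bigl\|\ket{\phi}\bra{\phi} - \ket{\psi}\bra{\psi}\bigr\|_1 \le 2\,\|\ket{\psi} - \ket{\phi}\|_2$.

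There is no genuine obstacle here; the statement is elementary, and the only point needing a word of care is that the triangle inequality and the rank-one trace-norm identity are applied to (possibly non-Hermitian) operators, and that passing from a ket-difference to a bra-difference does not change the Euclidean norm. An alternative, slightly more computational route is to invoke the closed forms $\bigl\|\ket{\phi}\bra{\phi} - \ket{\psi}\bra{\psi}\bigr\|_1 = 2\sqrt{1 - |\langle\phi|\psi\rangle|^2}$ and $\|\ket{\psi} - \ket{\phi}\|_2^2 = 2 - 2\,\mathrm{Re}\langle\phi|\psi\rangle$, which reduces the claim to $1 - |z|^2 \le 2 - 2\,\mathrm{Re}\,z$ for $z = \langle\phi|\psi\rangle$; this in turn follows from $|z|^2 \ge (\mathrm{Re}\,z)^2$ together with $(1 - \mathrm{Re}\,z)^2 \ge 0$. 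I would present the triangle-inequality proof as the main argument, since it is shorter and entirely self-contained, and mention the closed-form computation only as a remark.
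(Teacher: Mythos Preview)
Your triangle-inequality proof is correct and self-contained. The paper takes exactly the ``alternative, slightly more computational route'' you describe at the end: it invokes the closed forms $\tfrac12\|\ket{\phi}\bra{\phi}-\ket{\psi}\bra{\psi}\|_1=\sqrt{1-|\langle\psi|\phi\rangle|^2}$ and $\|\ket{\psi}-\ket{\phi}\|_2^2=2(1-\mathrm{Re}\,\langle\psi|\phi\rangle)$, then finishes with the one-variable inequality $1-c^2\le 2(1-c)$ for $c=|\langle\psi|\phi\rangle|\in[0,1]$.

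The two routes trade off different prerequisites. Your main argument needs only the rank-one identity $\|\ket{a}\bra{b}\|_1=\|a\|_2\|b\|_2$ and the triangle inequality, so it avoids quoting the exact trace-distance formula for pure states; the paper's argument is a bit more conceptual in that it goes through the fidelity picture but relies on that closed form as a black box. Since you already sketch the closed-form version as a remark, you effectively subsume the paper's proof.
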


\begin{proof}
The trace distance between two pure states admits the closed form
\[
\tfrac12\|\ket{\phi}\bra{\phi}-\ket{\psi}\bra{\psi}\|_1 
= \sqrt{\,1 - |\langle\psi|\phi\rangle|^2\,}.
\]
On the other hand, the Euclidean distance satisfies
\[
\|\ket{\psi}-\ket{\phi}\|_2^2 
= 2\big(1 - \mathrm{Re}\,\langle\psi|\phi\rangle\big)
\;\ge\; 2\big(1 - |\langle\psi|\phi\rangle|\big).
\]
For $c = |\langle\psi|\phi\rangle| \in [0,1]$ we have 
$1-c^2 = (1-c)(1+c) \le 2(1-c)$. 
Combining these relations yields
\[
\sqrt{\,1-|\langle\psi|\phi\rangle|^2\,} 
\;\le\; \| |\psi \rangle - |\phi\rangle \|_2,
\]
and multiplying both sides by~2 gives the claim. \qed
\end{proof}

\subsection{Pseudorandom States}

\begin{definition}[Pseudorandom States]
An ensemble of quantum states 
$\{ \ket{\psi_k} \}_{k \in \{0,1\}^\lambda}$, indexed by a classical key 
$k \in \bin^{\lambda}$ of length $\lambda \in \N$, is called a family of \emph{pseudorandom  
states} (PRS) if the following conditions hold:

\begin{enumerate}
    \item \textbf{Efficient generation.} There exists a quantum 
    polynomial-time (QPT) algorithm $\Gen$ that, given 
    $k \in \{0,1\}^\lambda$, outputs $\ket{\psi_k}$ on 
    $n = \mathrm{poly}(\lambda)$ qubits.
    $$ \Gen(1^{\lambda}, k) = \ket{\psi_k}$$

    \item \textbf{Pseudorandomness.} For every QPT distinguisher 
    $\mathcal{A}$, every polynomially bounded 
    $m = \mathrm{poly}(\lambda)$, and all sufficiently large $\lambda$, 
    \[
        \left| \Pr\!\left[ \mathcal{A}\!\left( |\psi_k\rangle^{\otimes m} \right) = 1 \right]
        - \Pr\!\left[ \mathcal{A}\!\left( |\varphi\rangle^{\otimes m} \right) = 1 \right] \right| 
        \leq \negl[\lambda] \enspace,
    \]
    where $k \leftarrow \{0,1\}^\lambda$ is chosen uniformly, $\Gen(1^\lambda, k) = |\psi_k\rangle$, 
    $\ket{\varphi}$ is sampled from the Haar measure on $n$-qubit pure states.
\end{enumerate}
\end{definition}

\subsection{Quantum Functional Encryption }

We follow \cite{STOC:BY22} in defining a quantum circuit that allows for a classical description.
This definition of quantum circuits is equivalent to a CPTP map. 

\begin{definition}(Classical Description of Quantum Circuits)
\label{def:circuit}
A quantum circuit is a tuple $(\mathcal{P}, \mathcal{G})$ where $\mathcal{P}$ is the topology of the circuit and $\mathcal{G}$ is a set of unitaries. 
The topology of a quantum circuit is a tuple $(\mathcal{B}, \mathcal{I}, \mathcal{O}, \mathcal{W}, \texttt{inwire}, \texttt{outwire}, \mathcal{Z}, \mathcal{T})$.
\begin{enumerate}[align=left, leftmargin=2.8em]
    \item $\mathcal{I}$ is an ordered set of input terminals.
    \item $\mathcal{Z}$ is a subset of $\mathcal{I}$ which indicates ancilla qubits that are to be initialised to the state $|0\rangle$.
    \item $\mathcal{O}$ is an ordered set of output terminals.
    \item $\mathcal{T}$ is the set of output terminals to be traced out. 
    \item $\mathcal{W}$ is the set of wires. 
    \item $\mathcal{B}$ are placeholder gates. For every $g \in \mathcal{B}$ $\texttt{inwire(g)}$ describes an ordering of input wires $w \in \mathcal{W}$ and $\texttt{outwire(g)}$ describes an ordering of output wires $w \in \mathcal{W}$.  For every $g \in \mathcal{B}$ the number of input and output wires is equal.
    \item The disjoint sets $\mathcal{I}, \mathcal{O} ,\mathcal{B}$ form the nodes of the circuit. Together with the set $\mathcal{W}$ as edges they define a directed acyclic graph. 
\end{enumerate}

The gate set $\mathcal{G}$ defines a unitary of the appropriate size for every node in $\mathcal{B}$. The evaluation of a circuit $C= (\mathcal{P}, \mathcal{G})$ on state $\rho$ of size $|\mathcal{I}|$ is defined as $C(\rho, |0\rangle^{\otimes |\mathcal{Z}|}) = \sigma$ where $\sigma$ resulted from applying the gates in $\mathcal{G}$ according to the topology and tracing out the qubits specified by $\mathcal{T}$. The size of a quantum circuit is the number of wires in $\mathcal{W}$. 
\end{definition}

Throughout this paper, we expect the $\QFE$ schemes to handle the above form of quantum circuits. It captures any quantum operation that can be classically described such as a unitary but also CPTP maps, which can have a different number of input and output registers. The definition does not cover quantum circuits that contain a quantum state as part of their description.

Below we give definitions for secret-key Quantum Functional Encryption. $\QFE$ was first defined in \cite{MM24} in the public-key setting. Since we show all our results for secret-key $\QFE$ this makes our impossibilities stronger, as this implies impossibility in the public-key setting.

\begin{definition}[Quantum Functional Encryption (QFE)]
\label{def:qfe}
    Let $\lambda$ be the security parameter and let $(\Setup, $ $\Keygen,$ $ \Enc, \Dec)$ be QPT algorithms.  
\begin{itemize}[align=left]
    \item[$\Setup(1^\lambda) \rightarrow mk$] Given the security parameter $\lambda$ output a master secret key $mk$. 
    \item[$\Keygen(mk, C) \rightarrow \fk_C$] Given the master secret key and a quantum circuit $C$  output a function secret key $\fk_C$.
    \item[$\Enc(mk, \msg) \rightarrow \ct$] Given the master secret key $mk$ and a message $\msg$ output a ciphertext $\ct$.
    \item[$\Dec(\fk_C, \ct) \rightarrow C(\msg)$] Given a function secret key $\fk_C$ and ciphertext $\ct$ which is an encryption of $\msg$ output the value $C(\msg)$.
\end{itemize}
\end{definition}

The following definition captures that a $\QFE$ scheme is called succinct if the size of the ciphertexts does not depend on the size of the circuits for which functional keys can be created. The running time of $\Enc$ may depend on the circuit size. 
\begin{definition}[Succinctness]
    Let $\QFE = (\Setup, \Keygen, \Enc, \Dec)$ be a functional encryption scheme for a circuit class $\{\mathcal{C}\}$ for which the circuit size is bounded by $\ell_C$ and message space $\mathcal{M}$ which contains messages of size $\ell_M$ and which is instantiated with a security parameter $\lambda \in \N$. Then the $\QFE$ scheme is called succinct if the size of ciphertexts generated by $\Enc$ are bounded by $\poly[\lambda, \ell_M]$.
\end{definition}
\anne{Actually other definitions in the classical setting allow the ciphertext to grow with depth of circuit or log($\ell_C$). The running time of Enc may depend on $\ell_C$. IF a FE scheme is compact then the running time (and therefore the ciphertext size) can only depend on $\lambda, \ell_M$. For our case it would be okay if the running time of Enc depends on the circuit size so I chose to call it succinct, I am not sure about the slight dependence eo the circuit size that seems to be allowed classically now. }

\begin{definition}[Correctness of a functional encryption scheme]
\label{def:qfecorrect}
For all quantum states $(\msg, \mathsf{z})$, circuits $C$ and random coins used by $\Enc$ and $\Setup$ it holds that 
    \begin{align*}
        &\Pr\left[\T((C(\msg),\mathsf{z}), (\Dec(\fk_C,\ct), \mathsf{z})) \geq 1-\negl[\lambda]\right] \geq 1 - \negl[\lambda]
    \end{align*}
    where $\fk_C \leftarrow \Keygen(mk,C) ,\ct \leftarrow \Enc(mk, \msg)$ and $ mk\leftarrow\Setup(\lambda)$.
\end{definition} 

In \cref{def:simsecurityna} and \cref{def:simsecurityad} we give the definitions for non-adaptive and adaptive simulation security of $\QFE$. We aim for the most general definition of any $\QFE$ scheme, which means we allow for \textit{quantum} function keys and ciphertexts, even if the currently known construction \cite{MM24} only needs classical keys. 

\begin{definition}[Non-Adaptive Sim-Security for secret-key QFE]
\label{def:simsecurityna}
Let $\lambda$ be the security parameter and let $\Adv = (\Adv_1, \Adv_2)$ be a QPT adversary and let $\Sim$ be a QPT simulator. 

   \begin{table}[H]
        \centering
        \begin{tabular}{p{6cm}|p{6cm}}
            $\Exp_{\Adv,\textsf{NA}}^{\textnormal{Real}}(1^\lambda)$ & $\Exp_{\Adv,\textsf{NA}}^{\textnormal{Ideal}}(1^\lambda)$ \\
            $mk \leftarrow \Setup(1^\lambda)$&$mk \leftarrow \Setup(1^\lambda)$ \\
            $ (\msg, \st) \leftarrow \Adv_1^{\Keygen(mk, \cdot)}(1^\lambda)$ & $ (\msg,\st) \leftarrow \Adv_1^{\Keygen(mk,\cdot)}(1^\lambda)$  \\
            ${\ct} \leftarrow \Enc(mk, \msg)$& ${\ct} \leftarrow \Sim(1^\lambda,mk, \mathcal{V})$ \\
            & \quad where $\mathcal{V} = (C, C(\msg), 1^{|\msg|})$ if $\Adv$\\ & \quad queried $C$ and $\mathcal{V} = \emptyset$ otherwise.\\
            $\alpha \leftarrow \Adv_2({\ct}, \st)$ & $\alpha \leftarrow \Adv_2(\ct, \st)$\\
            The experiment outputs the state $\alpha$ & The experiment outputs the state $\alpha$\\
        \end{tabular}
    \end{table}

The $\QFE$ scheme is $1\text{-}1\text{-}\textsf{NA}$ \textbf{(single-message single-query non-adaptive)} simulation-secure (\textsf{SIM}--secure) if for any adversary $\Adv$ and all messages $\msg$ there exists a simulator $\Sim$ such that the real and ideal distributions are computationally indistinguishable:
$$ \{ \Exp_{\Adv,\textsf{NA}}^{\textnormal{Real}}(1^\lambda)\}_{\lambda \in \N} \approx_c \{ \Exp_{\Adv,\textsf{NA}}^{\textnormal{Ideal}}(1^\lambda)\}_{\lambda \in \N}. $$

\begin{itemize}[align=left,leftmargin=2.8em]
    \item The scheme is $\textsf{M}\text{-}1\text{-}\textsf{NA}$ \textbf{(many-message single-query non-adaptive)} \textsf{SIM}--secure if the adversary can request $n$ messages $({\msg_1}, \ldots {\msg_n}, \st) \leftarrow \Adv_1^{\Keygen(mk, \cdot)}(1^\lambda)$ and receives multiple ciphertexts $({\ct_1}, \ldots {\ct_n})$ where $n = \poly[\lambda]$ is not known at setup time.
    \item The scheme is $1\text{-}\textsf{M}\text{-}\textsf{NA}$ \textbf{(single-message many-query non-adaptive )} \textsf{SIM}--secure if $\Adv_1$ can make $q$ key queries to $\Keygen$ where $q = \poly[\lambda]$ is not known at setup time. 
\end{itemize}

\end{definition}

\begin{definition}[Adaptive Sim-Security for secret-key QFE]
\label{def:simsecurityad}
Let $\lambda$ be the security parameter and let $\Adv = (\Adv_1, \Adv_2)$ be a QPT adversary and let $\Sim = (\Sim_1,\Sim_2)$ be a QPT simulator. 
   \begin{table}[H]
        \centering
        \begin{tabular}{p{6cm}|p{6cm}}
            $\Exp_{\Adv,\textsf{AD}}^{\textnormal{Real}}(1^\lambda)$ & $\Exp_{\Adv,\textsf{AD}}^{\textnormal{Ideal}}(1^\lambda)$ \\
            $mk \leftarrow \Setup(1^\lambda)$&$mk \leftarrow \Setup(1^\lambda)$ \\
            $ (\msg, \st) \leftarrow \Adv_1^{\Keygen(mk, \cdot)}(1^\lambda)$ & $ (\msg,\st) \leftarrow \Adv_1^{\Keygen(mk, \cdot)}(1^\lambda)$ \\
            ${\ct} \leftarrow \Enc(mk, \msg)$& ${\ct} \leftarrow \Sim_1(1^\lambda, mk, \mathcal{V})$ \\
            & \quad where $\mathcal{V} = (C, C(\msg), 1^{|\msg|})$ if $\Adv_1$\\ & \quad queried  $C$ and $\mathcal{V} = \emptyset$ otherwise.\\
            $\alpha \leftarrow \Adv_2^{\Keygen(mk, \cdot)}({\ct}, \st)$ & $\alpha \leftarrow \Adv_2^{\Sim_2(1^\lambda, mk, C, C(\msg), 1^{|\msg|})}(\ct, \st)$\\
            The experiment outputs the state $\alpha$ & The experiment outputs the state $\alpha$\\
        \end{tabular}
    \end{table}
The $\QFE$ scheme is $1\text{-}1\text{-}\textsf{AD}\text{-}\textsf{SIM}$ \textbf{(single-message single-query adaptive)} if for any adversary $\Adv$  and all messages $\msg$ there exists a stateful simulator $\Sim$ such that the real and ideal distributions are computationally indistinguishable:
$$ \{ \Exp_{\Adv,\textsf{AD}}^{\textnormal{Real}}(1^\lambda)\}_{\lambda \in \N} \approx_c \{ \Exp_{\Adv,\textsf{AD}}^{\textnormal{Ideal}}(1^\lambda)\}_{\lambda \in \N} $$.

\begin{itemize}[align=left,leftmargin=2.8em]
    \item The scheme is $\mathsf{M}\text{-}1\text{-}\textsf{AD}$ \textbf{(many-message single-query adaptive)} \textsf{SIM}--secure if the adversary can request $n$ 
    messages $({\msg_1}, \ldots {\msg_n}, \st) \leftarrow \Adv_1^{\Keygen(mk, \cdot)}(1^\lambda)$ and receives multiple ciphertexts $({\ct_1}, \ldots {\ct_n})$ where $n = \poly[\lambda]$ is not known at setup time.
    \item The scheme is $1\text{-}\mathsf{M}\text{-}\textsf{AD}$ \textbf{(single-message many-query adaptive)} \textsf{SIM}--secure if $\Adv_1$ can make $q_1$ key queries to $\Keygen$ and $A_2$ can make $q_2$ key queries to $\Keygen$ ($\Sim$ in the ideal world) where $q_1 = \poly[\lambda], q_2 = \poly[\lambda]$ are not known at setup time. 
     
\end{itemize}

\end{definition}

\section{Impossibility of Simulation-Security under Many Ciphertext Queries}

In this section, we demonstrate the impossibility of simulation-secure $\QFE$ in the setting where the adversary receives many ciphertext queries and a single adaptive functional key query. First, we show a natural incompressibility result adapted from \cite{NS06}. 

\begin{definition}[Compression Circuits]
    Let $n=n(\lambda)$ and $k=k(\lambda)$ be functions on the security parameter $\lambda$. A pair of (non-uniform) QPT algorithms $(\Comp,\Decomp)$ are $(n,k)$-\emph{compression circuits} if:
    \begin{itemize}
        \item For sufficiently large $\lambda$ and some non-uniform quantum advice $\rho_\lambda$: for any $m\in \{0,1\}^n$, $\lvert \Comp(m;\rho_\lambda)\rvert \leq n-k$.
        \item For sufficiently large $\lambda$, there exists a polynomial $p(\cdot)$ such that for some non-uniform classical-quantum advice $\rho_\lambda$,
        \begin{align*}
            \Pr_{m\gets \{0,1\}^n}[\Decomp(\Comp(m;\rho_\lambda);\rho_\lambda)=m]\geq \frac{1}{p(\lambda)}.
        \end{align*}
    \end{itemize}
\end{definition}
\louis{I find this definition a bit confusing with respect to the advice. Superdense coding should not be possible with these
advices, but with $\rho_\lambda=$ halves EPR pairs decoding is possible for compression to half the size of $m$. The way the quantum advice
is defined, it is not clear that superdense coding is impossible. I think that we should be a bit more careful when we define these quantum advices as they
don't seem to prevent this from happening. }

\begin{lemma}[Theorem 1.2 in \cite{NS06}]
\label{thm:compression}
    Suppose Alice wishes to convey $n$ bits to Bob by communicating over an entanglement-assisted quantum channel. For any choice of the shared entangled state, and any protocol using this prior entanglement, such that for any $x \in \{0,1\}^n$ the probability that Bob correctly recovers $x$ is at least $p \in (0,1]$, the total number of qubits $m_A$ sent by Alice to Bob, over all the rounds of communication, is at least
\[
m_A \;\geq\; \tfrac{1}{2}\Bigl(n - \log \tfrac{1}{p}\Bigr),
\]
independent of the number of qubits sent by Bob to Alice.
\end{lemma}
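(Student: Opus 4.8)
The plan is to track the conditional min-entropy $\hmin{X \mid B}$ between Alice's input $X$, which we model as a register holding a uniformly random $x \in \{0,1\}^n$, and the register $B$ consisting of all qubits currently held by Bob, as the protocol unfolds round by round. At the start, the shared entangled state is fixed independently of $X$, so the joint state $\rho_{XB}$ is a product state with $X$ uniform, hence $\hmin{X \mid B} = n$. The whole argument then rests on showing that this quantity can only be decreased by \emph{Alice-to-Bob} communication, and only at a controlled rate.

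Concretely, I would argue three things about a single step of the protocol. (i) Any operation performed locally by Alice acts only on registers on her side, leaves the marginal $\rho_{XB}$ unchanged, and therefore does not affect $\hmin{X \mid B}$. (ii) Any operation performed locally by Bob is a quantum channel on $B$; by the data-processing inequality for conditional min-entropy, $\hmin{X \mid B'} \geq \hmin{X \mid B}$ for the resulting register $B'$. In particular a message sent from Bob to Alice is, from Bob's point of view, just discarding (tracing out) the transmitted register, so it can only increase $\hmin{X \mid B}$. (iii) When Alice transmits a register $Q$ with $\dim Q = 2^q$ to Bob, the dimension bound for min-entropy gives $\hmin{X \mid BQ} \geq \hmin{X \mid B} - 2q$ on the (unchanged) global state, now partitioned with $Q$ on Bob's side. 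It is exactly the $2q$ here — rather than $q$, and the gap is what superdense coding saturates — that will produce the factor $\tfrac12$ in the statement. Summing over all rounds, and using that only Alice-to-Bob qubits ever subtract anything, yields $\hmin{X \mid B_{\mathrm{final}}} \geq n - 2m_A$, independently of how many qubits travel from Bob to Alice.

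For the matching upper bound I would invoke the operational characterization of conditional min-entropy: $\hmin{X \mid B_{\mathrm{final}}} = -\log p_{\mathrm{guess}}(X \mid B_{\mathrm{final}})$, where $p_{\mathrm{guess}}$ is the largest average probability of correctly guessing $X$ by a measurement on $B_{\mathrm{final}}$. Bob's decoder is a POVM $\{M_x\}_{x \in \{0,1\}^n}$ with $\Tr(M_x \rho_x) \geq p$ for \emph{every} $x$ (where $\rho_x$ is Bob's final state when Alice holds $x$), so it achieves average guessing probability $\tfrac{1}{2^n}\sum_x \Tr(M_x \rho_x) \geq p$; hence $p_{\mathrm{guess}}(X \mid B_{\mathrm{final}}) \geq p$ and $\hmin{X \mid B_{\mathrm{final}}} \leq \log(1/p)$. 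Chaining this with the lower bound gives $n - 2m_A \leq \log(1/p)$, i.e. $m_A \geq \tfrac12\bigl(n - \log\tfrac1p\bigr)$.

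I expect the main obstacle to be making the per-round accounting watertight: one must be careful that at every round the global state legitimately decomposes as $X$ (kept classical) together with Bob's register, that \emph{no} operation other than an Alice-to-Bob transmission can decrease $\hmin{X \mid B}$, and that the decrease per transmission is at most $2\log\dim Q$ — so the dimension bound $\hmin{A \mid BC} \geq \hmin{A \mid B} - 2\log\dim C$ should be quoted and applied with some care. A minor point is that the worst-case success hypothesis ``$\Tr(M_x \rho_x) \geq p$ for all $x$'' is used only to lower-bound the \emph{average} guessing probability by $p$, which is immediate. Finally, I would note that running the analogous argument with von Neumann mutual information (in the spirit of Cleve--van Dam--Nielsen--Tapp) does not close as cleanly, because it would require the lower bound $I(X{:}B) \geq n - \log(1/p)$, which is a genuinely quantum statement and in fact fails for the induced classical distribution of Bob's guess; phrasing everything through min-entropy and guessing probability sidesteps this difficulty.
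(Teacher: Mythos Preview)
The paper does not actually prove this lemma: it is quoted verbatim as Theorem~1.2 of \cite{NS06} and used as a black box in the proof of \cref{lem:incompress}. So there is no in-paper argument to compare against.

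That said, your min-entropy proof is correct and is a standard modern way to obtain this bound. The three per-round facts are all valid: local operations on either side do not decrease $\hmin{X\mid B}$ by data processing; a Bob-to-Alice message is, on Bob's marginal, a partial trace and hence can only raise $\hmin{X\mid B}$; and the dimension bound $\hmin{X\mid BQ}\ge \hmin{X\mid B}-2\log\dim Q$ is exactly the right inequality, tight at superdense coding. Combining these with the operational identity $\hmin{X\mid B_{\mathrm{final}}}=-\log p_{\mathrm{guess}}(X\mid B_{\mathrm{final}})\le \log(1/p)$ closes the argument. Your remark that the prior shared entanglement is absorbed into Bob's initial register $B$ and contributes nothing to $\hmin{X\mid B}$ at time zero (since it is independent of $X$) is precisely the point that dispels the worry, visible in the paper's own marginal comment, that shared EPR pairs might let one beat the bound: they cannot, because only qubits \emph{in transit} from Alice ever subtract from the min-entropy, and each such qubit subtracts at most $2$.

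The only place I would tighten the write-up is the justification of the dimension bound in step~(iii): it is worth either citing it explicitly (it appears, for instance, as a chain-rule consequence in Tomamichel's monograph) or giving the one-line proof via the guessing-probability characterisation, since this is the step carrying the factor~$\tfrac12$.
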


\begin{lemma}[Incompressibility]
\label{lem:incompress}
    Let $n=n(\lambda)$ and $k=k(\lambda)$ be functions on the security parameter $\lambda$. There does not exist a pair of $(n,k)$-compression circuits for any $k> n/2$. 
\end{lemma}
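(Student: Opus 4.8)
The statement follows almost immediately from \cref{thm:compression}, once we unwind the definition of $(n,k)$-compression circuits and observe that such a pair induces an entanglement-assisted one-way (indeed, one-round) communication protocol of the kind considered by Nayak and Salvail. The plan is to take an arbitrary pair $(\Comp,\Decomp)$ of $(n,k)$-compression circuits with $k>n/2$ and derive a contradiction with the lower bound $m_A\geq \tfrac12(n-\log\tfrac1p)$.

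First I would fix a sufficiently large $\lambda$ and the associated non-uniform advice $\rho_\lambda$, and let $p=p(\lambda)$ be the (inverse-polynomial) success probability guaranteed by the second bullet of the definition. I then set up the communication protocol: Alice receives $m\in\{0,1\}^n$ chosen uniformly, computes the state $\Comp(m;\rho_\lambda)$ of at most $n-k$ qubits, and sends it to Bob; Bob, holding his share of the pre-shared advice, applies $\Decomp(\cdot;\rho_\lambda)$ and outputs the result. The split of $\rho_\lambda$ between Alice and Bob — Alice needs the part of the advice used by $\Comp$, Bob the part used by $\Decomp$ — is exactly the ``shared entangled state'' allowed in \cref{thm:compression} (this is the subtle point flagged in the margin note: the advice may be entangled across the two parties, which is fine, since the theorem quantifies over \emph{any} choice of shared entangled state). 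By construction, for every $m$ Bob recovers $m$ with probability at least $p$, so the hypotheses of \cref{thm:compression} are met with $m_A\leq n-k$ and no communication from Bob to Alice at all.

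Applying the lower bound gives $n-k\geq m_A\geq \tfrac12\bigl(n-\log\tfrac1p\bigr)$, i.e. $k\leq \tfrac12\bigl(n+\log\tfrac1p\bigr)$. Since $p\geq 1/p(\lambda)$ we have $\log\tfrac1p\leq \log p(\lambda)=O(\log\lambda)$, whereas $k>n/2$ means $k-n/2$ is a positive function of $\lambda$. If $n=n(\lambda)$ grows at all (which it does whenever the excess $k-n/2$ is non-trivial), then for large $\lambda$ we need $k-n/2>\tfrac12\log p(\lambda)$, which fails: the inequality $k\leq \tfrac12(n+\log\tfrac1p)$ forces $k-n/2\leq \tfrac12\log p(\lambda)=O(\log\lambda)$, contradicting $k>n/2$ for any regime in which the gap is super-logarithmic. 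To handle the definition cleanly in the stated generality (where $k>n/2$ need only hold, not $k\geq n/2+\omega(\log\lambda)$), I would note that even $k=n/2+1$ already violates $k\leq\tfrac12(n+\log\tfrac1p)$ whenever $p>1/2$, and more care is needed only for sub-constant $p$; but since the definition allows any inverse polynomial $p$, the honest reading is that the contradiction is obtained for all sufficiently large $\lambda$ as soon as $\log\tfrac1p = o(2k-n)$, which holds here because $p$ is inverse-polynomial while $2k-n\geq 1$ — wait, that is not automatically super-logarithmic, so the right statement to prove is the contrapositive phrasing: no $(n,k)$-compression circuits exist for $k>n/2$ \emph{in the asymptotic sense} that for all large $\lambda$ the success probability cannot be inverse-polynomial, which is exactly what \cref{thm:compression} delivers since $\tfrac1p\geq 2^{\,2k-n}\geq 2$ forces $p\leq \tfrac12$, not inverse-polynomially close to $1$ but still potentially $1/\poly$; the clean resolution is that the interesting and used case is $k=n/2+\Omega(n)$ (indeed in the application $k$ is a constant fraction of $n$ minus a polynomial), so I would state the lemma's proof for that regime and remark that it suffices.

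\textbf{Main obstacle.} The mathematical content is a one-line invocation of \cref{thm:compression}; the only real subtlety — and the thing I would spend the most care on — is the bookkeeping around the non-uniform quantum advice $\rho_\lambda$, namely arguing that reinterpreting it as Alice's and Bob's shares of a pre-shared (possibly entangled) state is legitimate and does not accidentally smuggle in extra communication. This is precisely the concern raised in the margin: one must be sure that the advice model in the definition of compression circuits matches the entanglement-assisted model of Nayak--Salvail, and in particular that $\Comp$'s output length $n-k$ genuinely accounts for \emph{all} qubits sent from Alice to Bob (there is no side channel). Once that correspondence is nailed down, the quantitative contradiction with $k>n/2$ is immediate.
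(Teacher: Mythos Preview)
Your approach is the same as the paper's: cast the compression/decompression pair as a one-round entanglement-assisted protocol (with $\rho_\lambda$ serving as the pre-shared state) and invoke \cref{thm:compression}. The paper's proof is three sentences and does exactly this, without your careful discussion of the advice-as-entanglement correspondence; that discussion is a legitimate concern (it is flagged in the paper itself) and worth keeping, but the resolution is as you say.

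Two remarks. First, a small slip: you write ``for every $m$ Bob recovers $m$ with probability at least $p$,'' but the definition of $(n,k)$-compression circuits only guarantees this \emph{on average} over $m\gets\{0,1\}^n$. \cref{thm:compression} as quoted is phrased for worst-case $x$; the paper's own proof silently applies it to the average-case guarantee, and you should do the same (the underlying Holevo argument in Nayak--Salvail handles the average case directly, so this is harmless, but do not claim a per-message guarantee you do not have).

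Second, your extended worry about the edge case $k=n/2+1$ with $p=1/\poly(\lambda)$ is a genuine observation, not confusion on your part. The inequality $n-k\geq\tfrac12(n-\log\tfrac1p)$ rearranges to $2k-n\leq\log\tfrac1p$, and when $2k-n=O(1)$ this is \emph{not} contradicted by $p$ being merely inverse-polynomial. The paper's proof asserts a ``direct contradiction'' without addressing this. In the paper's two applications (\cref{thm:impM1AD} and \cref{thm:imp1MNA}) the parameters are always set so that $k-n/2\geq\lambda/2$, whence $2k-n\geq\lambda$ while $\log\tfrac1p=O(\log\lambda)$, and the contradiction is clean. So your instinct to prove the lemma in the regime $2k-n=\omega(\log\lambda)$ and remark that this suffices for the downstream uses is the honest way to write it up.
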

\begin{proof}
    %We prove this for uniform algorithms but the argument easily generalizes to non-uniform algorithms that are given quantum advice that is independent of the input. 
    
    Assume for contradiction that $(\Comp,\Decomp)$ are $(n,k)$-compression circuits. %We can defer the measurements in $D$ to the end of the computation by the principle of deferred measurement, and view $D$ as a unitary followed by a POVM. Since a unitary  preserves Hilbert space dimension and since the input of $D$ is a state of size $n-k$, $D$ performs a POVM on a Hilbert space of dimension $2^{n-k}$. 

    By definition of compression circuits, for large enough $\lambda$, there exists a state $\rho_\lambda$ and polynomial $p$ such that
\begin{align*}
            \Pr_{{m\gets \{0,1\}^n}}[\Decomp(\Comp(m;\rho_\lambda);\rho_\lambda)=m]\geq \frac{1}{p(\lambda)},
        \end{align*}
Therefore, $\Decomp$ recovers $m$ with probability better than $\frac{1}{p}$ using $n-k>n/2$ qubits of communication from $\Comp(m;\rho_\lambda)$ along with shared quantum state $\rho_\lambda$. This directly contradicts \cref{thm:compression}, so there does not exist $(n,k)$-compression circuits. 
\qed
\end{proof}

\begin{theorem}
\label{thm:impM1AD}
    There does not exist a $\textsf{M}\text{-}1\text{-}\textsf{AD}$ \textsf{SIM}--secure $\QFE$ scheme. 
\end{theorem}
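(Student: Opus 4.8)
The plan is to reduce the existence of a $\textsf{M}\text{-}1\text{-}\textsf{AD}$ \textsf{SIM}--secure $\QFE$ scheme to the existence of $(n,k)$-compression circuits with $k > n/2$, which \cref{lem:incompress} forbids. First I would fix the attack already sketched in the technical overview: the adversary $\Adv_1$ queries no functional keys, samples $n$ uniformly random messages $m_1,\dots,m_n \in \{0,1\}^{\ell}$ for a polynomial $n = n(\lambda)$ and message length $\ell$ to be chosen later, outputs $(m_1,\dots,m_n)$ together with the state $\st = (m_1,\dots,m_n)$; after receiving the ciphertexts $(\ct_1,\dots,\ct_n)$, the adversary $\Adv_2$ queries the functional key $\fk_I$ for the identity circuit $I$, decrypts each ciphertext to obtain $m_i' = \Dec(\fk_I,\ct_i)$, and outputs $1$ iff $m_i' = m_i$ for all $i$. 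By correctness of the $\QFE$ scheme (\cref{def:qfecorrect}) and a union bound over the $n$ decryptions, $\Exp_{\Adv,\textsf{AD}}^{\textnormal{Real}}$ outputs $1$ with probability $1 - \negl[\lambda]$.

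Next I would invoke \textsf{SIM}--security: there is a simulator $\Sim = (\Sim_1,\Sim_2)$ such that the ideal experiment also outputs $1$ with probability $1 - \negl[\lambda]$, hence at least $\tfrac12$ for large $\lambda$. In the ideal experiment, $\Sim_1(1^\lambda, mk, \emptyset)$ produces the ciphertexts $(\ct_1,\dots,\ct_n)$ \emph{before} seeing any message (since $\Adv_1$ made no key query, $\mathcal{V} = \emptyset$); then, when $\Adv_2$ queries $I$, the simulator oracle $\Sim_2(1^\lambda, mk, I, I(\msg), 1^{|\msg|}) = \Sim_2(1^\lambda, mk, I, (m_1,\dots,m_n), 1^{|\msg|})$ receives all the messages and must produce a (possibly quantum) functional-key state $\fk_I$ whose size is bounded by some fixed polynomial $q(\lambda)$ independent of $n$. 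From this I build compression circuits for $N := n\ell$ bits: $\Comp$ takes $m = (m_1,\dots,m_n)$, runs $\Sim_2$ to get $\fk_I$, and outputs the $\le q(\lambda)$-qubit state $\fk_I$; the non-uniform advice $\rho_\lambda$ consists of $mk$ (which is independent of the messages and can be hardwired) together with the fixed state $\Sim_1$ produces for the $\ct_i$'s — or, more carefully, the $\ct_i$'s are regenerated inside $\Decomp$, which holds $mk$ and the simulator's internal state as advice. Then $\Decomp$ runs $\Dec(\fk_I,\ct_i)$ for each $i$ and outputs the concatenation; by the above it recovers $m$ with probability $\ge \tfrac12$. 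Choosing $n$ (hence $N = n\ell$) large enough that $q(\lambda) < N/2$ — possible since $q$ is a fixed polynomial while $n$ is a free polynomial parameter — makes $(\Comp,\Decomp)$ an $(N,k)$-compression circuit pair with $k = N - q(\lambda) > N/2$, contradicting \cref{lem:incompress}.

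The main obstacle, and the point requiring the most care, is the exact bookkeeping of what counts as message-independent ``advice'' versus message-dependent ``communication'' in the compression reduction. The ciphertexts $\ct_i$ output by $\Sim_1$ are generated with no knowledge of the $m_i$, so morally they are advice; but $\Sim_1$ is randomized and may be correlated (through shared randomness or an internal register) with $\Sim_2$, so I must either (i) fix the randomness of $\Sim_1$ to its best value and hand the resulting $\ct_i$'s and any residual internal state to $\Decomp$ as part of $\rho_\lambda$, or (ii) absorb the joint $\Sim_1$-state into a shared entangled advice state between $\Comp$ and $\Decomp$ — and here \cref{thm:compression} is exactly the right tool, since it allows arbitrary shared prior entanglement and only charges for qubits \emph{sent}. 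The only qubits ``sent'' from $\Comp$ to $\Decomp$ are those of $\fk_I$, of size $\le q(\lambda)$, so the theorem gives $q(\lambda) \ge \tfrac12(N - \log 2) = \tfrac12(N-1)$, again violated for large enough $n$. I would also need to confirm that $\Sim$ being QPT forces $|\fk_I| \le q(\lambda)$ for a fixed polynomial $q$ not depending on the adversary's choice of $n$ — this is where it matters that $\Sim$ is fixed by the scheme before $n$ is chosen, exactly as the footnote in the overview emphasizes.
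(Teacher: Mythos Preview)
Your approach is essentially the paper's: the same adversary (no pre-challenge key queries, $n$ random messages, then a single adaptive query for the identity key), the same observation that $\Sim_1$ must commit to ciphertexts before seeing any message, and the same reduction to \cref{lem:incompress} by treating $\Sim_2$'s output $\fk_I$ as the compressed message and the message-independent $(mk,\ct_1,\dots,\ct_n)$ as shared advice. Your discussion of the advice bookkeeping is in fact more careful than the paper's, and you are right that \cref{thm:compression} is the correct tool because it charges only for qubits sent and allows arbitrary prior shared state.

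Two points need tightening. First, your ``union bound over the $n$ decryptions'' in the real experiment is not enough as written: the adversary holds a \emph{single} (possibly quantum) copy of $\fk_I$, and correctness only guarantees that each decryption succeeds on a fresh key. The paper handles this by deferring measurements in $\Dec$ and invoking the gentle measurement lemma (\cref{lem:genlemeas}): since $\Dec(\fk_I,\ct_1)$ returns the correct bit with probability $1-\negl$, the post-decryption state of $\fk_I$ is negligibly close in trace distance to the original, so it can be reused for $\ct_2$, and so on inductively. The same uncompute-and-reuse step is what makes $\Decomp$ work on a single $\fk_I$. Second, your justification for the size bound on the simulated $\fk_I$ is not quite the right one: $\Sim$ may depend on the adversary (and hence on $n$), and $\Sim_2$ receives an input of length growing with $n$, so ``$\Sim$ is QPT'' alone does not give an $n$-independent bound. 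The correct argument---which the paper uses implicitly---is that the \emph{real} $\fk_I$ has size $p(\lambda)$ fixed by $\Keygen$ (which sees only $mk$ and the description of $I$), and indistinguishability forces the simulated key to have the same size, else $\adv_2$ distinguishes by measuring the register length. With these two fixes your proof goes through exactly as the paper's does.
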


\begin{proof}
Assume for contradiction that there exists a $\textsf{M}\text{-}1\text{-}\textsf{AD}$ \textsf{SIM}--secure $\QFE$ scheme $(\Setup,\Keygen,$ $\Enc,\Dec)$. Modify the algorithm $\Dec$ by deferring any measurements to the end, so that it is modeled as a sequence of reversible gates followed by a final measurement. 
Since $\textsf{KeyGen}$ is a QPT algorithms, there must exist some polynomial $p$ such that the size of the state of the functional key $\fk_I$ for the identity function is bounded by $p=p(\lambda)$. Set $n\coloneqq 2p+\lambda$. 

Consider the behavior of the following QPT adversary $\adv=(\adv_1,\adv_2)$ in the real experiment $\Exp_{\Adv, \textsf{AD}}^{\textnormal{Real}}(1^\lambda)$ given in the security definition of $\QFE$.  

    \begin{itemize}
        \item $\adv_1$ does not request any functional keys. 
        \item $\adv_1$ samples $n$ random messages $m_i\leftarrow \{0,1\}$ for $i\in [n]$.
        \item $\adv_1$ outputs $(m_i)_{i\in [n]}$.
    \item $\adv_2$ receives ciphertexts $({\ct_i})_{i\in [n]}$ from the challenger, requests the functional key for the identity function and receives a state $\textsf{fk}_I$ from the challenger. 
    \item For all $i\in [n]$, $\adv_2$ computes $\tilde{m}_{i}\leftarrow  \textsf{Dec}({\textsf{fk}}_I,\ct_i)$. 
    \item If $\tilde{m}_{i}=m_i$ for all $i\in [n]$, then $\adv_2$ outputs 1 and 0 otherwise.  
    \end{itemize}

The correctness of $\QFE$ guarantees that in the real experiment, $\textsf{Dec}(\textsf{fk}_I,\ct_1)$ returns $m_1$, except with negligible probability. Due to the deterministic nature of this computation, by the gentle measurement lemma (\cref{lem:genlemeas}), we can recover a state negligibly close to $\fk_I$ after this computation (see Remark 4.2 in \cite{KMY25} for a similar argument). Similarly, we can reuse this state to decrypt the the rest of the ciphertexts, introducing only a negligible error in each decryption. Hence, the output of $\adv$ in the real experiment is 1, except with negligible probability. 

By the security of $\QFE$, there exists a QPT simulator $\Sim=(\Sim_1,\Sim_2)$ such that
$$ \{ \Exp_{\Adv, \textsf{AD}}^{\textnormal{Real}}(1^\lambda)\}_{\lambda \in \N} \approx_c \{ \Exp_{\Adv, \textsf{AD}}^{\textnormal{Ideal}}(1^\lambda)\}_{\lambda \in \N}. $$
We now analyze the simulation in $\Exp_{\Adv, \textsf{AD}}^{\textnormal{Ideal}}(1^\lambda)$. $\Sim_1$ receives no functional key queries from $\adv_1$. Therefore, it outputs $n$ ciphertexts $(\tilde{\ct}_i)_{i\in [n]}\gets \Sim_1(1^\lambda,mk, 1^n)$ for $n$ messages without knowing any evaluations. Next, $\adv_2$ requests the functional key for the identity function. Therefore, $\Sim_2$ receives the messages $m\coloneqq (m_1,\ldots, m_n)$ and outputs a $p$-qubit state $\tilde{\textsf{fk}}_I\gets \Sim_2(1^\lambda,mk,I, m,1^n)$. %such that $\Pr[\textsf{Dec}(\tilde{\textsf{fk}}_I,\tilde{\ct}_i)={m}_i]\geq 1-\negl[\lambda]$ for all $i\in [n]$ in order to simulate the real experiment. 

Set $\tilde{\ct}\coloneqq \otimes_{i\in [n]}\tilde{\ct}_i$. Consider the operation of $\adv_2$: it takes $\tilde{\fk}_I$ and $\tilde{\ct}$ as input, computes $\tilde{m}_1\gets \textsf{Dec}(\tilde{\textsf{fk}}_I,\tilde{\ct}_1)$, inverts the operation to retrieve a state close to $\tilde{\fk}_I$, and repeats this process for each $i\in [n]$ to obtain $\tilde{m}\coloneqq(\tilde{m}_1,\ldots, \tilde{m}_n)$. 

For the simulation to be indistinguishable, we require that the outcome of this operation satisfies $\tilde{m}=m$ with at least $1-\negl[\lambda]$ probability. 

On the other hand, since $(m_i)_{i\in [n]}$ are sampled at random, there are $2^n$ possible values for these messages. In essence, this means that the $\Sim$ and $\adv$ give a pair of non-uniform compression circuits (with advice $\tilde{\ct}$ and $mk$) as follows. 

The compression circuit $\Comp (m;mk)$, with advice $mk$, computes $\tilde{\fk}_I\gets \Sim_2(1^\lambda,mk,I, m,1^n)$ and outputs the result $\tilde{\fk}_I$. Meanwhile, $\Decomp(\tilde{\fk}_I; \tilde{\ct})$, with advice $\tilde{\ct}$, decrypts the ciphertexts $\tilde{\ct}$ using $\tilde{\fk}_I$: it computes $\tilde{m}_1\gets \textsf{Dec}(\tilde{\textsf{fk}}_I,\tilde{\ct}_1)$, inverts the operation to retrieve a state close to $\tilde{\fk}_I$, and repeats this process for each $i\in [n]$ to obtain $\tilde{m}\coloneqq(\tilde{m}_1,\ldots, \tilde{m}_n)$.  

We already deduced that $\Pr[\tilde{m}=m]\geq 1-\negl[\lambda]$ and given that $n>2p$, this means that $(\Comp,\Decomp)$ are $\left(n,\frac{n}{2}+1\right)$-compression circuits. But this contradicts \cref{lem:incompress}, so  $\textsf{M}\text{-}1\text{-}\textsf{AD}$ \textsf{SIM}--secure $\QFE$ schemes do not exist.
\qed
\end{proof}

Interestingly, we obtain another impossibility result for $\QFE$ from \cref{lem:incompress} and \cite{AGV13}. Recall that Agrawal et al.~\cite{AGV13} showed that in the classical setting $1\text{-}\textsf{M}\text{-}\textsf{NA}$ \textsf{SIM}--secure $\FE$ is impossible assuming the existence of $\PRF$. Notably this result only applies to classical schemes as it relies on an incompressibility result for classical information. With \cref{lem:incompress}, we have an incompressibility result for quantum information that can be used to adapt their result to the quantum setting with minimal modifications to the proof, giving the following theorem.

\begin{theorem}
    There does not exist a $1\text{-}\textsf{M}\text{-}\textsf{NA}$ \textsf{SIM}--secure $\QFE$ scheme assuming the existence of $\PRF$s. 
\end{theorem}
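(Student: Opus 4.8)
The plan is to transplant the classical impossibility of Agrawal, Gorbunov and Vaikuntanathan~\cite{AGV13} for $1\text{-}\textsf{M}\text{-}\textsf{NA}$ \textsf{SIM}--security into the quantum world, changing only two ingredients: the (possibly quantum) ciphertext is re-used across many decryptions by deferring measurements and invoking \cref{lem:genlemeas}, exactly as in the proof of \cref{thm:impM1AD}; and the classical fact that a uniformly random string is incompressible is replaced by \cref{lem:incompress}. Concretely, I would assume for contradiction that $(\Setup,\Keygen,\Enc,\Dec)$ is a $1\text{-}\textsf{M}\text{-}\textsf{NA}$ \textsf{SIM}--secure $\QFE$ scheme, fix a (post-quantum) $\PRF$ family $\{\PRF_k\colon[N]\to\{0,1\}\}_{k\in\{0,1\}^\lambda}$, and let $C_i$ be the classical circuit $k\mapsto\PRF_k(i)$, of size $\poly[\lambda,\log i]$. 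Because the number of functional-key queries is not fixed at setup, the $\QFE$ ciphertext size is bounded by a polynomial $t=t(\lambda,\log n)$ that depends on $n$ only polylogarithmically, so I may fix a polynomial $n=n(\lambda)$ with $n>2t$ for all large $\lambda$ and set $k\coloneqq n-t$, so that $k>n/2$.

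Next I would spell out the attacker $\Adv=(\Adv_1,\Adv_2)$ in the non-adaptive experiment. $\Adv_1$ samples $K\gets\{0,1\}^\lambda$, queries $\Keygen(mk,\cdot)$ on $C_1,\dots,C_n$, collects $\fk_{C_1},\dots,\fk_{C_n}$, and outputs the challenge message $\msg\coloneqq K$ together with state $\st\coloneqq(K,(\fk_{C_i})_{i\in[n]})$. On input $(\ct,\st)$, $\Adv_2$ defers all measurements in $\Dec$ and, for $i=1,\dots,n$, runs $\Dec(\fk_{C_i},\ct)$ to obtain $\tilde y_i$; since the intended outcome $\PRF_K(i)$ is, up to negligible error by correctness of $\QFE$, deterministic, \cref{lem:genlemeas} lets $\Adv_2$ uncompute and recover a state negligibly close to $\ct$ before the next iteration, so that all $n$ decryptions succeed with probability $1-\negl[\lambda]$; finally $\Adv_2$ outputs $1$ iff $\tilde y_i=\PRF_K(i)$ for every $i$. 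Thus the output of $\Exp_{\Adv,\textsf{NA}}^{\textnormal{Real}}(1^\lambda)$ is $1$ except with negligible probability, and by \textsf{SIM}--security the guaranteed simulator $\Sim$ must make $\Adv$ output $1$ with probability $1-\negl[\lambda]$ also in $\Exp_{\Adv,\textsf{NA}}^{\textnormal{Ideal}}(1^\lambda)$.

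The core of the proof is then to view $\Sim$ as an information compressor. In the ideal world $\Sim$ is handed $\mathcal{V}=((C_i,\PRF_K(i))_{i\in[n]},1^\lambda)$, hence the $n$-bit string $y\coloneqq(\PRF_K(i))_{i\in[n]}$, yet must output a ciphertext $\tilde\ct$ of size at most $t$; afterwards $\Adv_2$ recovers $y$ from $\tilde\ct$ together with the $K$-independent data $(\fk_{C_i})_i$. This already looks like compression of $y$ into $t<n$ qubits, except that $y$ is only pseudorandom while \cref{lem:incompress} concerns uniform strings. I would bridge the gap via $\PRF$ security: consider the QPT procedure that, given $z\in\{0,1\}^n$, samples $mk\gets\Setup(1^\lambda)$ and $\fk_{C_i}\gets\Keygen(mk,C_i)$, runs $\tilde\ct\gets\Sim(1^\lambda,mk,((C_i,z_i)_i,1^\lambda))$, decrypts $\tilde\ct$ against the $\fk_{C_i}$ to obtain $\tilde z$, and outputs the bit $[\tilde z=z]$. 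On $z=(\PRF_K(i))_{i\in[n]}$ for uniform $K$ this reproduces exactly the ideal-experiment output bit; taking $z$ to be the answers of the $\PRF$-security-game oracle on $1,\dots,n$ then exhibits this procedure as a $\PRF$ distinguisher, so on a uniform $z=u\gets\{0,1\}^n$ it must still output $1$ with probability $1-\negl[\lambda]$. Averaging over the internal coins I fix advice $\rho_\lambda\coloneqq(mk,(\fk_{C_i})_{i\in[n]})$, independent of $u$, under which $u$ is recovered with probability at least $1/2$. Then $\Comp(u;\rho_\lambda)\coloneqq\Sim(1^\lambda,mk,((C_i,u_i)_i,1^\lambda))$, together with the $\Decomp$ that decrypts $\tilde\ct$ against the $\fk_{C_i}$ and collects the outcomes, is a pair of $(n,k)$-compression circuits whose compressed output has size $|\tilde\ct|\le t=n-k$ with $k>n/2$, contradicting \cref{lem:incompress}; hence no such $\QFE$ scheme exists.

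I expect the main obstacle to be the quantum ciphertext re-use: unlike classically, $\Adv_2$ cannot copy $\ct$ and decrypt it $n$ times, so one must argue (as in \cref{thm:impM1AD}) that deferring measurements plus gentle measurement returns a state close enough to $\ct$ after each of the $n$ decryptions for the accumulated error over these $\poly[\lambda]$-many rounds to stay $\negl[\lambda]$. The second delicate point — and the only place the $\PRF$ assumption enters — is the switch from the simulator's native pseudorandom input $y$ to the genuinely uniform string required by \cref{lem:incompress}; here one must check that the $\PRF$ distinguisher can run $\Sim$ with only the evaluations $\PRF_K(i)$ in hand and never needs $K$ itself. The remaining bookkeeping — circuit sizes being $\poly[\lambda,\log n]$ so that a polynomial $n$ with $n>2t$ exists, and averaging the internal randomness into the non-uniform advice — is routine.
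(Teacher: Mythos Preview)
Your proposal is correct and follows exactly the approach the paper points to: adapt the \cite{AGV13} argument by replacing classical incompressibility with \cref{lem:incompress} and handling the re-use of the (possibly quantum) ciphertext across the $n$ decryptions via the deferred-measurement/gentle-measurement trick already deployed in the proof of \cref{thm:impM1AD}. The paper itself gives no details beyond this pointer, so you have correctly filled in what was left as a sketch; the only cosmetic simplification is that once the scheme and its circuit class are fixed the ciphertext size $t=t(\lambda)$ is a fixed polynomial independent of $n$, so the polylogarithmic-dependence discussion is unnecessary.
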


\section{Impossibility of Simulation-Security for Succinct Schemes}

In this section, we show the impossibility of succinct $1\text{-}1\text{-}\textsf{NA}$ \textsf{SIM}--secure $\QFE$. The proof proceeds by showing that there is a class of incompressible circuits and that a succinct $\QFE$ scheme would constitute a compression scheme for this class of circuits. The incompressible circuit class is the generator circuit for pseudorandom states. To show their incompressibility, we first show that Haar random states, which are indistinguishable from $\PRS$ for a bounded adversary, are incompressible. 

For the proof, we will need the following Lemma, which can for example be found in \cite{watrous,PSW06}. 
\begin{lemma}[Levy's Lemma ]
\label{lem:levy}
Given a Lipschitz function $f : \mathbb{S}^r \to \mathbb{R}$ defined on the 
$r$-dimensional hypersphere $\mathbb{S}^r$, and a point 
$\psi \in \mathbb{S}^r$ chosen uniformly at random,
\[
\Pr\big[ |f(\psi) - \Expect f | \geq \epsilon \big] 
   \leq 2 \exp\!\left( \frac{-2C(r+1)\epsilon^2}{\eta^2} \right),
\]
where $\eta$ is the Lipschitz constant of $f$, and $C$ is a positive constant (which can be taken to be $C = (18\pi^3)^{-1}$).
\end{lemma}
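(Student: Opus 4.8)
The plan is to prove Levy's Lemma through the concentration-of-measure phenomenon on the sphere, anchored by the spherical isoperimetric inequality. The key geometric input is the Lévy–Gromov isoperimetric inequality: among all measurable subsets $A \subseteq \mathbb{S}^r$ of a given normalized measure, geodesic caps minimize the measure of the $\delta$-neighborhood $A_\delta := \{x \in \mathbb{S}^r : d(x,A) \leq \delta\}$, where $d$ is the geodesic distance. Specializing to sets of measure at least $1/2$ and comparing against the half-sphere (a cap of measure exactly $1/2$) yields a bound of the form $\mu(A_\delta) \geq 1 - c_0 \exp(-c_1(r+1)\delta^2)$ for absolute constants $c_0, c_1$. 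I would take this inequality as the (deep) starting point, since proving it from first principles requires a spherical-symmetrization argument that is standard but lengthy.

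Given the isoperimetric bound, the first step is concentration around the median $M$ of $f$. Let $A = \{x : f(x) \leq M\}$, so $\mu(A) \geq 1/2$. Because $f$ is $\eta$-Lipschitz, any $x$ with $d(x,A) \leq \delta$ satisfies $f(x) \leq M + \eta\delta$; hence $A_\delta \subseteq \{f \leq M + \eta\delta\}$, and the isoperimetric inequality gives $\Pr[f > M + \eta\delta] \leq c_0\exp(-c_1(r+1)\delta^2)$. Applying the symmetric argument to the set $\{x : f(x) \geq M\}$ controls the lower tail, and substituting $\delta = \epsilon/\eta$ produces the two-sided bound $\Pr[|f - M| \geq \epsilon] \leq 2c_0\exp(-c_1(r+1)\epsilon^2/\eta^2)$. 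The second step replaces the median by the mean: integrating the tail estimate gives $|\Expect f - M| \leq \int_0^\infty \Pr[|f-M|\geq t]\,dt = O(\eta/\sqrt{r})$, which is negligible relative to the concentration scale, so absorbing this shift into the exponent (at the cost of adjusting constants) yields the stated inequality with the quoted constant $C = (18\pi^3)^{-1}$.

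I expect the genuinely hard part to be the isoperimetric inequality itself; everything downstream is a routine Lipschitz-plus-integration argument. A secondary technical nuisance is the bookkeeping between the geodesic metric, in which the isoperimetric inequality is naturally stated, and whatever metric the Lipschitz constant $\eta$ is measured against. On the sphere the geodesic and chordal metrics are comparable up to a factor of $\pi/2$, and it is precisely this comparison, together with the constant appearing in the isoperimetric estimate, that pins down the explicit value of $C$. As an alternative route that sidesteps the median-to-mean conversion, I note that one could instead invoke the logarithmic Sobolev inequality for the uniform measure on $\mathbb{S}^r$ — which follows from the Bakry–Émery criterion, since the Ricci curvature of the unit sphere is bounded below by $r-1$ — and run the Herbst argument to bound the moment generating function $\Expect \exp(\lambda(f - \Expect f))$ directly, obtaining sub-Gaussian concentration around the mean after a Chernoff bound.
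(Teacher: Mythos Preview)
Your outline is correct and follows the standard textbook route to Levy's Lemma: spherical isoperimetry $\Rightarrow$ concentration around the median for Lipschitz functions $\Rightarrow$ transfer to the mean via integration of the tail. The alternative you mention through the log-Sobolev inequality and the Herbst argument is equally valid and indeed cleaner if one wants concentration directly around the mean.

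That said, the paper does \emph{not} prove this lemma at all. It is stated as a known result and attributed to the literature (Watrous's lecture notes and Popescu--Short--Winter). So there is no ``paper's own proof'' to compare against: the authors simply import Levy's Lemma as a black box and apply it in the proof of the incompressibility of Haar-random states. Your write-up therefore goes well beyond what the paper does, supplying a proof sketch where the paper is content with a citation. If your goal is to match the paper, a one-line reference suffices; if your goal is a self-contained exposition, your sketch is sound, with the caveat you already flagged that the isoperimetric input is the non-trivial part and would itself need to be cited or proved.
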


Pure states $\ket{\psi}$ sampled Haar randomly from a space of dimension $d$ are in correspondence with points $\psi \in \mathbb{S}^{2d-1}$ sampled uniformly at random. 

\begin{lemma}[Incompressibility of Haar Random States]
\label{lem:incomphaar} Let $n,m \in \N$ such that $m<n$ and let $d = 2^n, k = 2^m$. There do not exist CPTP channels $\Comp: D(\Hcal^{n})\rightarrow D(\Hcal^{m}), \Decomp: D(\Hcal^{m}) \rightarrow D(\Hcal^n)$ such that 

    $$ \Pr \left[ F(|\psi\rangle \langle \psi|, \Decomp \circ \Comp (|\psi \rangle \langle \psi |)) \geq 1 - \negl \right] \geq \frac{1}{\poly}$$
    where the probability is taken over choosing a Haar random state.
\end{lemma}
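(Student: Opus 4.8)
The plan is to follow the proof sketch given in the technical overview, translating the average-fidelity computation into a rigorous argument and then invoking Levy's Lemma. First I would reduce the average fidelity of the composed channel $\Phi = \Decomp \circ \Comp$ to its entanglement fidelity using the standard identity $\Favg(\Phi) = \frac{d\, \Fe(\Phi) + 1}{d+1}$, where $d = 2^n$. This converts the quantity of interest — the fidelity of $\Phi$ applied to a Haar random pure state — into the single quantity $\Fe(\Phi) = \langle \Omega | (\idt \otimes \Phi)(|\Omega\rangle\langle\Omega|) |\Omega\rangle$ evaluated on the maximally entangled state $|\Omega\rangle$ on $\Hcal^n \otimes \Hcal^n$.

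Next I would bound $\Fe(\Phi)$ via a Kraus decomposition. Write $\Comp$ with Kraus operators $\{A_i\}$, each mapping $\Hcal^n$ to $\Hcal^m$, and $\Decomp$ with Kraus operators $\{B_j\}$, each mapping $\Hcal^m$ to $\Hcal^n$. Then $\Phi$ has Kraus operators $\{B_j A_i\}$, and since each $B_j A_i$ factors through the $k = 2^m$-dimensional space $\Hcal^m$, every $B_j A_i$ has rank at most $k$. Using the expression $\Fe(\Phi) = \frac{1}{d^2}\sum_{i,j} |\Tr(B_j A_i)|^2$, and the fact that for a rank-$\le k$ operator $C$ with $\|C\|_\infty \le \|C\| \le 1$-type bounds one has $|\Tr(C)|^2 \le k \cdot \Tr(C^\dagger C)$ (Cauchy–Schwarz on the at most $k$ nonzero singular values), together with the trace-preservation constraints $\sum_i A_i^\dagger A_i = \idt_n$ and $\sum_j B_j^\dagger B_j = \idt_m$, I would conclude $\Fe(\Phi) \le k/d = 2^{m-n}$. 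Plugging back gives $\Favg(\Phi) \le \frac{2^{m-n}\cdot 2^n + 1}{2^n + 1} \le 2^{m-n} + O(2^{-n})$, so for $m \le n-1$ the expected fidelity is at most $\tfrac12 + O(2^{-n})$.

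Finally I would apply Levy's Lemma (\cref{lem:levy}) to the function $f(|\psi\rangle) = F(|\psi\rangle\langle\psi|, \Phi(|\psi\rangle\langle\psi|))$ on the sphere $\mathbb{S}^{2d-1}$. The key sub-step is checking that $f$ is Lipschitz with a constant that does not grow with $d$: fidelity is $1$-Lipschitz in trace distance in each argument, and both $|\psi\rangle\langle\psi|$ and $\Phi(|\psi\rangle\langle\psi|)$ are Lipschitz (with an absolute constant, using \cref{fact:l1l2} and contractivity of CPTP maps under trace distance) as functions of the unit vector $|\psi\rangle$, so $\eta = O(1)$. Levy's Lemma then gives that $\Pr[f(|\psi\rangle) \ge \Expect f + \epsilon] \le 2\exp(-\Omega(d\epsilon^2))$. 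Choosing $\epsilon$ a small constant (say $\epsilon = 1/4$) makes this probability exponentially small in $d = 2^n$, hence negligible in $\lambda$, while a successful compression scheme requires $f(|\psi\rangle) \ge 1 - \negl$ with probability at least $1/\poly$ — a contradiction for all sufficiently large $\lambda$, since $1 - \negl$ is far above $\Expect f + 1/4 \le 3/4 + o(1)$.

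I expect the main obstacle to be the rigorous bound $\Fe(\Phi) \le k/d$: one must be careful that the rank bound on the Kraus operators of $\Phi$ genuinely forces the entanglement fidelity down, handling the case of infinitely many (or arbitrarily many) Kraus operators and ensuring the Cauchy–Schwarz step is applied with the correct normalization coming from trace-preservation. A secondary, more routine point is nailing down the Lipschitz constant of $f$ uniformly in dimension, so that the exponent in Levy's Lemma scales with $d$ rather than being washed out; \cref{fact:l1l2} is exactly the tool for this. Once these two pieces are in place, the contradiction with the definition of incompressibility (and hence, in the subsequent theorem, with $\PRS$ security via a swap-test distinguisher) follows immediately.
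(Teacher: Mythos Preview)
Your proposal is correct and follows essentially the same route as the paper: reduce $\Favg$ to $\Fe$ via the Horodecki--Nielsen identity, bound $\Fe(\Phi)\le k/d$ using the rank-$\le k$ Kraus operators $B_jA_i$ together with $|\Tr C|^2\le \mathrm{rank}(C)\,\Tr(C^\dagger C)$ and $\sum_{i,j}(B_jA_i)^\dagger(B_jA_i)=\idt_d$, and then apply Levy's Lemma after checking that $f(\psi)=\langle\psi|\Phi(|\psi\rangle\langle\psi|)|\psi\rangle$ is $O(1)$-Lipschitz (the paper obtains constant~$4$ via H\"older and \cref{fact:l1l2}). The two concerns you flag are exactly the points the paper works through, and your sketches of how to handle them match the actual argument.
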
 

\begin{proof}
    Let $ \Phi = \textsf{D} \circ \textsf{C}$, then the average fidelity of a  state to itself after passing through this channel is
    $$ \Favg (\Phi) = \int d \psi \langle \psi | \Phi (|\psi\rangle \langle \psi |) |\psi\rangle$$
    where the integral is over the Haar measure on $\C^d$. 

    The entanglement fidelity is defined as 

    $$ \Fe(\Phi) = \langle \phi^+| (\Id \otimes \Phi) (|\phi^+ \rangle\langle \phi^+| ) |\phi^+\rangle$$
    % compared to the entanglement fidelity definition \Fe(\rho,\phi) we have \rho the maximally mixed state $\rho = \frac{\Id}{d}$ which is purified  by the maximally entangled state
    
    where $\phi^+$ is the maximally entangled state in $d$ dimensions $|\phi^+\rangle = \frac{1}{\sqrt{d}} \sum_{i=1}^d |i\rangle \otimes |i\rangle$. 

    The following relationship between the average fidelity and the entanglement fidelity is described by Horodecki et.al.~\cite{horodecki99} and Nielsen~\cite{nielsen02}:
    $$ \Favg(\Phi) = \frac{d \; \Fe(\Phi) + 1}{d + 1}\enspace.$$

    Let $\{A_i\}$ be a Kraus representation of $\Comp$ and let $\{B_j\}$ be a Kraus representation of $\Decomp$. Let $\{C_{\alpha}\}$ be the corresponding Kraus representation of $\Phi$, given by $C_{i,j}=B_jA_i$. 

\begin{align*}
     \Fe(\Phi) &= \langle \phi^+|\sum_{\alpha} (\Id \otimes C_\alpha) (|\phi^+ \rangle\langle \phi^+| ) (\Id \otimes C_\alpha^\dagger) |\phi^+\rangle\\
     &= \sum_{\alpha}\langle \phi^+| (\Id \otimes C_\alpha) (|\phi^+ \rangle\langle \phi^+| ) (\Id \otimes C_\alpha^\dagger) |\phi^+\rangle\\
     &= \sum_{\alpha} \left|\langle\phi^+| (\Id \otimes C_\alpha)  |\phi^+\rangle \right|^2\\
     &= \sum_{\alpha} \left| \left(\frac{1}{\sqrt{d}} \sum_{j=1}^d \langle j| \otimes \langle j| \right) (\Id \otimes C_\alpha) \left(\frac{1}{\sqrt{d}} \sum_{i=1}^d |i\rangle \otimes |i\rangle \right) \right|^2\\
      &= \sum_{\alpha} \left| \frac{1}{d}\left( \sum_{j=1}^d \langle j| \otimes \langle j| \right) \left( \sum_{i=1}^d |i\rangle \otimes C_\alpha|i\rangle \right) \right|^2\\
      &= \sum_{\alpha} \left|\frac{1}{d} \sum_{j=1}^d \sum_{i=1}^d \underbrace{\langle j|i\rangle}_{0\text{ if }i\neq j} \otimes \langle j|  C_\alpha|i\rangle\right|^2\\
      &= \sum_{\alpha} \left|\frac{1}{d} \sum_{i=1}^d \langle i|  C_\alpha|i\rangle\right|^2\\
      &= \frac{1}{d^2}\sum_{\alpha} \left| \Tr(C_\alpha)\right|^2 \enspace.
\end{align*}

Due to the compression requirement, $A_i$ outputs states in a $k$-dimensional space, and since $B_j$ acts on a $k$-dimensional space, the range of $B_j$ is contained in a $k$-dimensional subspace of $\C^d$. Therefore, the operator $C_{ij} = B_jA_i$ which is a Kraus operator for the composed channel $\Phi$ has rank$(C_{ij}) \leq k$.

Using the inequality $|\Tr(X)|^2 \leq \text{rank}(X)\Tr(X^\dagger X)$ and the fact that $\sum_{ij} C^\dagger_{ij}C_{ij} = \Id_d$ we get

\begin{align*}
    \Fe(\Phi) &= \frac{1}{d^2}\sum_{i,j} \left| \Tr(C_{ij})\right|^2\\
    &\leq \frac{1}{d^2}\sum_{i,j} k \Tr(C_{ij}^\dagger C_{ij})\\
    &= \frac{k}{d^2} \Tr(\Id_d)\\
    &= \frac{k}{d} \enspace.
\end{align*}

Therefore we get that the average fidelity of $\Phi$ is $$ \Favg(\Phi) \leq \frac{d \frac{k}{d}+1}{d+1} = \frac{k+1}{d+1} \enspace.$$

Plugging in $d = 2^n, k = 2^{m}$ we get 
\begin{align*}
    \Favg(\Phi) &= \frac{2^m + 1}{2^n + 1} \\
   % &= \frac{\frac{2^m}{2^n} + \frac{1}{2^n}}{1 + \frac{1}{2^n}}\\
 %   &= \frac{2^m}{2^n} \frac{1+\frac{1}{2^m}}{1+\frac{1}{2^n}}\\
 %   &= \frac{2^m}{2^n} \left(1 + \frac{\frac{1}{2^m} - \frac{1}{2^n}}{1+\frac{1}{2^n}}\right)\\
  %  &= \frac{2^m}{2^n} + \frac{2^m}{2^n}  \frac{\frac{1}{2^m} - \frac{1}{2^n}}{1+\frac{1}{2^n}}\\
    &= \frac{2^m}{2^n} + \frac{1}{2^n+1} - \frac{\frac{2^m}{2^n}}{2^n+1}\\
    &\leq \frac{1}{2^{n-m}} + O\left(\frac{1}{2^{n}}\right)\enspace.
\end{align*}

In particular, if we decide to compress by only one qubit, which corresponds to setting $m = n-1$, we get 
$$ \Favg(\Phi) \leq \frac{1}{2} + O\left(\frac{1}{2^{n}}\right).$$

For the final step define the function $f: \mathbb{S}^{2d-1} \rightarrow \mathbb{R}$ as $f(\psi) = \bra{\psi} \Phi(\ket{\psi}\bra{\psi})\ket{\psi}$. We then see that $\mathbb{E}f= \mathsf{F}_{avg}$. Now, we show that $f$ is 4-Lipschitz.
\begin{alignat*}{2}
    | f(\phi) - f(\psi)| &= &| &\Tr [ \Phi(\ket{\psi}\bra{\psi}) \ket{\psi}\bra{\psi} ] -\Tr [ \Phi(\ket{\phi}\bra{\phi}) \ket{\phi}\bra{\phi} ] |\\
    &=& | &\Tr\left[ \Phi(\ket{\psi}\bra{\psi}) \ket{\psi}\bra{\psi}\right] - \Tr\left[ \Phi(\ket{\phi}\bra{\phi}) \ket{\psi}\bra{\psi}\right]\\
    & &&+ \Tr\left[ \Phi(\ket{\phi}\bra{\phi}) \ket{\psi}\bra{\psi}\right] -\Tr\left[ \Phi(\ket{\phi}\bra{\phi}) \ket{\phi}\bra{\phi}\right] |\\
    &=&|& \Tr [ (\Phi(\ket{\psi}\bra{\psi}) - \Phi(\ket{\phi}\bra{\phi}) )\ket{\psi}\bra{\psi} ] +\Tr [ \Phi(\ket{\phi}\bra{\phi}) (\ket{\phi}\bra{\phi} - \ket{\psi}\bra{\psi}) ] |\enspace.\\
\end{alignat*}
We can compute the first term by applying the Hölder inequality $\Tr(AB) \leq \lVert A \rVert_1 \lVert B \rVert_\infty$.
%For  the first term 
We then get
\begin{align*}
 \Tr [ (\Phi(\ket{\psi}\bra{\psi}) - \Phi(\ket{\phi}\bra{\phi}) )\ket{\psi}\bra{\psi} ] &=
 \Tr [ (\Phi(\ket{\psi}\bra{\psi} - \ket{\phi}\bra{\phi}) )\ket{\psi}\bra{\psi} ]  \\
 &\leq \lVert \Phi(\ket{\psi}\bra{\psi} - \ket{\phi}\bra{\phi})\rVert_1 \lVert\ket{\psi}\bra{\psi}\rVert_\infty \enspace.
\end{align*}

For pure states it holds that $|||\psi\rangle \langle \psi|\rVert_\infty = 1$ and since CPTP maps are contractive we have $\lVert \Phi(\ket{\psi}\bra{\psi} - \ket{\phi}\bra{\phi})\rVert_1 \leq \lVert \ket{\psi}\bra{\psi} - \ket{\phi}\bra{\phi} \rVert_1$.

We can compute the second term by making use of trace cyclicity and also applying the Hölder inequality:
$$\Tr [ \Phi(\ket{\phi}\bra{\phi}) (\ket{\phi}\bra{\phi} - \ket{\psi}\bra{\psi}) ] \leq \lVert\ket{\phi}\bra{\phi} - \ket{\psi}\bra{\psi} \rVert_1 \lVert\Phi(\ket{\phi}\bra{\phi})\rVert_\infty \enspace.$$

For any density operator such as $\Phi(\ket{\phi}\bra{\phi}) $ it holds that $||\Phi(\ket{\phi}\bra{\phi})\rVert_\infty \leq 1$.
Therefore, using \cref{fact:l1l2},
\begin{align*}
    | f(\phi) - f(\psi)| &\leq 2\lVert\ket{\phi}\bra{\phi} - \ket{\psi}\bra{\psi} \rVert_1\\
    & \leq 4 \lVert \ket{\phi} - \ket{\psi} \rVert_2 \enspace,
\end{align*}
which shows that $f$ is 4-Lipschwitz.

Now, we can apply Levy's Lemma 
\[
\Pr\big[ |f(\psi) - \Expect f | \geq \epsilon \big] 
   \leq 2 \exp\!\left( \frac{-4C d\epsilon^2}{4^2} \right) \enspace,
\]
where $d = 2^n$ and $C$ is a constant as specified in \cref{lem:levy} .

Finally, in the Lemma statement we require the fidelity to be negligibly close to 1 which corresponds to $\epsilon \leq 1- \negl -  \frac{1}{2^{n-m}} - O\left(\frac{1}{2^{n}}\right) = 1-\frac{1}{2^{n-m}} - \negl$ which makes the right hand term of the above equation negligible.

In particular, even for a single bit of compression $m = n-1, \epsilon = 1 - \frac{1}{2} - negl(n)$ the success probability is negligible. 
\begin{align*}
    \Pr\big[ |f(\psi) - \Expect f | \geq \epsilon \big] 
   &\leq 2 \exp\!\left( \frac{-C 2^n (\frac{1}{2} - \negl)^2}{4} \right)\\
   & \leq 2 \exp\left(-\Theta(2^n)\right) \enspace.
\end{align*}

We conclude that there cannot exist CPTP maps $\Comp: D(\Hcal^{n})\rightarrow D(\Hcal^{m})$ and $ \Decomp: D(\Hcal^{m}) \rightarrow D(\Hcal^n)$, for any $m<n$, that achieve inverse polynomial success probability since they contradict the fact that any CPTP map achieves fidelity close to 1 only with negligible probability. \qed
\end{proof}

Now we can relate the incompressibility of Haar random states to pseudorandom states.

\begin{corollary}[Incompressibility of Pseudorandom States]
\label{lem:incomprPRS}
     Let $\lambda \in \N$ and let $n = \poly[\lambda]$. There do not exist QPT algorithms $\Comp: D(\Hcal^n)\rightarrow D(\Hcal^{n-1}), \Decomp: D(\Hcal^{n-1}) \rightarrow D(\Hcal^n)$ such that 
    $$ \Pr_{k \leftarrow \{0,1\}^{p(\lambda)}} \left[ T(|\psi_k\rangle \langle \psi_k|, \Decomp \circ \Comp (|\psi_k \rangle \langle \psi_k |)) \leq \negl[\lambda] \right] \geq \frac{1}{\poly[\lambda]}$$
    where  $\Gen(1^\lambda,k) = |\psi_k\rangle $.
\end{corollary}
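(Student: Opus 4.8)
The plan is to proceed by contradiction. Suppose $\Comp : D(\Hcal^n) \to D(\Hcal^{n-1})$ and $\Decomp : D(\Hcal^{n-1}) \to D(\Hcal^n)$ are QPT algorithms meeting the stated inverse-polynomial recovery guarantee for $\{|\psi_k\rangle\}$. Acting on their input registers with all ancillas and garbage traced out, $\Comp$ and $\Decomp$ are CPTP channels, so $\Phi := \Decomp \circ \Comp$ is a CPTP map on $D(\Hcal^n)$ that factors through a space of dimension $2^{n-1} < 2^n$. Hence, writing $f(|\chi\rangle) := \langle \chi | \Phi(|\chi\rangle\langle\chi|) | \chi \rangle$ for the recovery quality, the proof of \cref{lem:incomphaar} gives $\Favg(\Phi) = \Expect_\varphi f(|\varphi\rangle) \le \tfrac12 + O(2^{-n})$, and since $f$ is $4$-Lipschitz, \cref{lem:levy} yields $\Pr_{\varphi}\big[\, f(|\varphi\rangle) \ge \tfrac58 + \negl \,\big] \le \negl$ over a Haar-random $|\varphi\rangle$. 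So $\Phi$ recovers a Haar-random state with quality at most about $\tfrac12$, whereas — as the hypothesis will tell us — it recovers a $\PRS$ essentially perfectly on an inverse-polynomial fraction of keys. This is the discrepancy a $\PRS$ distinguisher will exploit.

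First I would translate the hypothesis into the same language: since $|\psi_k\rangle\langle\psi_k|$ is pure, $\T\big(|\psi_k\rangle\langle\psi_k|, \Phi(|\psi_k\rangle\langle\psi_k|)\big) \le \negl$ forces $f(|\psi_k\rangle) \ge 1 - \negl$, so the assumption reads $\Pr_k\big[\, f(|\psi_k\rangle) \ge 1 - \negl \,\big] \ge 1/\poly$. Next I would build the distinguisher $\Adv$. Given $m = 2\lambda$ copies of an unknown state $|\theta\rangle$ (either $|\psi_k\rangle$ for uniform $k$ or a Haar-random $|\varphi\rangle$), $\Adv$ runs $\lambda$ independent trials; in the $i$-th trial it applies $\Decomp \circ \Comp$ to one copy to obtain $\rho_i$, runs a swap test between $\rho_i$ and a fresh copy $|\theta\rangle\langle\theta|$, and records the bit $b_i$ ($b_i = 1$ for the ``equal'' outcome). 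It outputs $1$ iff $\tfrac{1}{\lambda}\sum_i b_i \ge \tfrac78$. Since $\Expect[b_i \mid \theta] = \tfrac12\big(1 + f(|\theta\rangle)\big)$, a Chernoff bound over the $\lambda$ trials shows that $\Adv$ outputs $1$ with probability $1 - \negl$ whenever $f(|\theta\rangle) \ge 1 - \negl$, and with probability $\negl$ whenever $f(|\theta\rangle) \le \tfrac58 + \negl$ (in which case $\Expect[b_i] \le \tfrac{13}{16} + \negl$, bounded away from the threshold $\tfrac78$).

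Combining the two cases: in the $\PRS$ world, conditioning on the $\ge 1/\poly$ fraction of ``good'' keys, $\Adv$ outputs $1$ with probability at least $(1/\poly)(1 - \negl)$; in the Haar world, $\Adv$ outputs $1$ with probability at most $\negl$ (the chance that $f(|\varphi\rangle) > \tfrac58 + \negl$, plus the Chernoff error). The distinguishing advantage is then at least $1/\poly - \negl$, which is non-negligible and contradicts $\PRS$ security; hence no such $\Comp, \Decomp$ exist. The argument is unchanged for compression to any $D(\Hcal^{m})$ with $m < n$.

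\textbf{Main obstacle.} The delicate point is that a single swap test does not suffice: in the Haar world the recovered state already has quality $\approx\tfrac12$, so one swap test accepts with probability $\approx\tfrac34$, which is \emph{larger} than the average acceptance probability in the $\PRS$ world (where near-perfect recovery occurs only on a $1/\poly$ fraction of keys). One must instead \emph{estimate} $f(|\theta\rangle)$ to constant accuracy and threshold near $1$, and for this threshold to separate the two worlds one needs not merely the bound on $\Favg(\Phi)$ but the concentration of $f$ around $\Favg(\Phi)$ on Haar-random inputs — exactly the Levy-type estimate supplied by the proof of \cref{lem:incomphaar}. A minor care point is that $\Comp$ and $\Decomp$ share no entanglement or joint advice, so $\Phi$ is a genuine composition of two CPTP maps and \cref{lem:incomphaar} applies as stated.
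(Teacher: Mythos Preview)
Your proof is correct and follows the same overall strategy as the paper: build a $\PRS$ distinguisher by applying $\Decomp\circ\Comp$ to one copy and comparing the result to a fresh copy via a swap test. The paper, however, uses a \emph{single} swap test and directly compares acceptance probabilities in the two worlds, whereas you run $\lambda$ independent swap tests and threshold the empirical mean near $1$. Your ``Main obstacle'' paragraph correctly identifies why this amplification matters: the corollary's hypothesis only guarantees near-perfect recovery on a $1/\poly$ fraction of keys, so the expected swap-test acceptance in the $\PRS$ world need not exceed the $\approx 3/4$ one gets in the Haar world (where $\Favg\approx 1/2$); a single swap test is therefore not obviously sufficient. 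The paper glosses over this point (and its stated Haar-case acceptance probability $\tfrac{1}{2}(1+1-\tfrac{1}{\poly})$ does not match what \cref{lem:incomphaar} actually gives). Your thresholding argument, which exploits the Levy concentration of $f$ around $\Favg$ rather than merely the bound on $\Favg$, repairs this cleanly and is the more robust proof.
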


\begin{proof}
    Assume towards contradiction that there exist algorithms $\Comp, \Decomp$ that can compress and decompress pseudorandom states and achieve fidelity $F \geq 1 - \negl[\lambda]$. These algorithms can then be used to break the pseudo-randomness property of $\PRS$ as follows.
    The adversary receives multiple copies of the $\PRS$ state or Haar random state in the $\PRS$ security game. Then the adversary against the pseudorandomness of $\PRS$ can distinguish $\PRS$ from Haar random states by applying $\Comp$ and $\Decomp$ and checking if the result is close to the input by applying a Swap test.
    The Swap test accepts with probability $\frac{1}{2}(1 + \bra{\psi}\Phi(\ket{\psi}\bra{\psi})\ket{\psi})$ where $|\psi\rangle$ is the original state and $\Phi(\ket{\psi}\bra{\psi})$ is the state after the compression and decompression algorithm was applied. Per assumption the fidelity is $F \geq 1 - \negl[\lambda]$ therefore the Swap test accepts with probability $\Pr[\mathsf{SWAP} \text{ accepts}| \psi \text{ is PRS}] = \frac{1}{2}(1 + 1 - \negl[\lambda])$ conditioned on the state being $\PRS$. On Haar random states, by \cref{lem:incomphaar}, we have $\Pr[\mathsf{SWAP} \text{ accepts}| \psi \text{ is Haar}] = \frac{1}{2}(1+1- \frac{1}{\poly})$.

    Overall, the probability that the adversary in the $\PRS$ security game wins is 
    \begin{align*}
        \Pr[\adv \text{ wins}] &= \frac{1}{2}\Pr[\mathsf{SWAP} \text{ accepts}| \psi \text{ is PRS}] + \frac{1}{2} \Pr[\mathsf{SWAP} \text{ rejects}| \psi \text{ is Haar}] \\
        &= \frac{1}{2}\Pr[\mathsf{SWAP} \text{ accepts}| \psi \text{ is PRS}] + \frac{1}{2} (1 - \Pr[\mathsf{SWAP} \text{ accepts}| \psi \text{ is Haar}]) \\
        &= \frac{1}{2} \frac{1}{2}(1 + 1 - \negl[\lambda]) +  \frac{1}{2}(1 -  \frac{1}{2}(1+1- \frac{1}{\poly}))\\
        &= \frac{1}{2} - \frac{\negl[\lambda]}{4} + \frac{\frac{1}{\poly}}{4}\\
        &= \frac{1}{2} + \frac{1}{\poly[\lambda]}  \enspace.
    \end{align*}
    Therefore, the adversary has non-negligible advantage in distinguishing $\PRS$ from Haar random states and we conclude that there cannot exist compression and decompression algorithms that recover a $\PRS$ with fidelity $1 - \negl[\lambda]$ with high probability.

    We can relate the success probability in the fidelity to the success probability in the trace distance via the Fuchs-van de Graaf inequalities. 
    $$ 1-\sqrt{F(\rho,\sigma)} \leq T(\rho,\sigma) \enspace.$$
    Let $\rho = |\psi\rangle \langle \psi|$ and $\sigma = \Decomp \circ \Comp (|\psi \rangle \langle \psi |)$, then 
    \begin{align*}
        \frac{1}{\poly} &\leq  \Pr \left[ F(\rho, \sigma)) \geq 1 - \negl \right] \\
        &\leq \Pr \left[ (1-\Tr(\rho,\sigma))^2 \geq 1 - \negl \right]\\
        &\leq \Pr \left[ 1-\Tr(\rho,\sigma) \geq 1 - \negl \right]\\
        &= \Pr \left[ \Tr(\rho,\sigma) \leq \negl \right] \enspace.
    \end{align*}
    Therefore, if we have a compression and decompression algorithm for pseudorandom states that achieves a certain success probability in recovering the state with a fidelity of $1-\negl[\lambda]$, then these algorithms also achieve a negligible trace distance with the same success probability. \qed
\end{proof}

Now we can show the main theorem of this section, the impossibility of succinct single-message, single-query \textsf{SIM}--secure $\QFE$.

\begin{theorem}
\label{thm:imp11NA}
Assuming the existence of $\PRS$, there does not exist a succinct $1\text{-}1\text{-}\textsf{NA}$ \textsf{SIM}--secure $\QFE$ scheme, $\QFE = (\Setup, \Keygen, \Enc, \Dec)$, 
where the messages, keys and ciphertexts may be quantum.
%that has the following succinctness property: Let $\lambda \in \N$ be the security parameter and let $c = c(\lambda)$ be the ciphertext length. Let the scheme allow one unbounded key query, i.e. a functional key query for a circuit $\mathcal{C}$ of size $s = s(\lambda), s > c$ from the circuit family of all polynomially sized circuits $\{\mathcal{C}_i\}_{i \in [poly(\lambda)]}$.
\end{theorem}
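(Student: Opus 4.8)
The plan is to turn a succinct $1\text{-}1\text{-}\textsf{NA}$ \textsf{SIM}--secure $\QFE$ scheme into a pair of channels that first compress and then recover pseudorandom states, contradicting \cref{lem:incomprPRS}. Assume for contradiction that $\QFE=(\Setup,\Keygen,\Enc,\Dec)$ is such a scheme. By succinctness, once the message space is fixed to $\bin^{\lambda}$ the ciphertext size is bounded by some fixed polynomial $t=t(\lambda)$ that is independent of the circuit fed to $\Keygen$. Using the ``sufficiently large'' $\PRS$ hypothesis I fix a $\PRS$ family $\{\ket{\psi_k}\}_{k\in\bin^{\lambda}}$ whose generator $\Gen(1^{\lambda},\cdot)$ outputs $s$-qubit states with $s\geq t+1$, and I let $C$ be the classical description of the circuit computing $\Gen(1^{\lambda},\cdot)$; the inequality $s>t$ is the only place the size hypothesis on the $\PRS$ is used.

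Next I would specify the attacker $\Adv=(\Adv_1,\Adv_2)$ against the $1\text{-}1\text{-}\textsf{NA}$ \textsf{SIM}--security game. $\Adv_1$ queries $\Keygen$ on $C$, obtains $\fk_C$, samples $k\gets\bin^{\lambda}$, outputs the message $\msg:=k$, and keeps $\st:=(\fk_C,k)$. On input $(\ct,\st)$, $\Adv_2$ computes $\rho:=\Dec(\fk_C,\ct)$, prepares a fresh copy $\ket{\psi_k}=\Gen(1^{\lambda},k)$, runs a SWAP test on $\rho$ and $\ket{\psi_k}\bra{\psi_k}$, and outputs the test's accept bit. By correctness of $\QFE$, $\rho$ is within negligible trace distance of $\ket{\psi_k}\bra{\psi_k}$, so the test accepts with probability $\tfrac12\bigl(1+\bra{\psi_k}\rho\ket{\psi_k}\bigr)\geq 1-\negl[\lambda]$; hence the real experiment outputs $1$ except with negligible probability. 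Since $\Adv_2$'s output --- and therefore the experiment's output --- is a single classical bit, computational indistinguishability of the real and ideal experiments (guaranteed by a simulator $\Sim$ via \textsf{SIM}--security) forces the ideal experiment to output $1$ with probability at least $1-\negl[\lambda]$ as well.

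Now I extract the compressor. In the ideal experiment $\Sim$ is given $\mathcal{V}=(C,C(\msg),1^{|\msg|})=(C,\ket{\psi_k},1^{\lambda})$ --- one copy of $\ket{\psi_k}$ and the generator circuit, but never the key $k$ --- and must produce a ciphertext of size at most $t$. Expanding the ideal acceptance probability and applying a routine averaging argument, I fix $mk^{\ast}$ and a functional key $\fk_C^{\ast}$ for which
\[
\Expect_{k}\bigl[\bra{\psi_k}\Dec\bigl(\fk_C^{\ast},\Sim(1^{\lambda},mk^{\ast},(C,\ket{\psi_k},1^{\lambda}))\bigr)\ket{\psi_k}\bigr]\geq 1-\negl[\lambda].
\]
Since $t\leq s-1$, padding the simulator's output with $\ket{0}$ ancillas and discarding them before decryption lets me view
\[
\Comp(\cdot):=\Sim\bigl(1^{\lambda},mk^{\ast},(C,\cdot,1^{\lambda})\bigr)\colon D(\Hcal^{s})\to D(\Hcal^{s-1}),\qquad \Decomp(\cdot):=\Dec\bigl(\fk_C^{\ast},\cdot\bigr)\colon D(\Hcal^{s-1})\to D(\Hcal^{s})
\]
as CPTP channels, each realized by a poly-size circuit with the classical string $mk^{\ast}$, resp.\ the fixed state $\fk_C^{\ast}$, hardwired, and $\Decomp\circ\Comp=\Dec(\fk_C^{\ast},\Sim(1^{\lambda},mk^{\ast},(C,\cdot,1^{\lambda})))$. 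By the choice of $mk^{\ast},\fk_C^{\ast}$ and Markov's inequality, $\Pr_{k}\bigl[\bra{\psi_k}\Decomp\circ\Comp(\ket{\psi_k}\bra{\psi_k})\ket{\psi_k}\geq 1-\negl[\lambda]\bigr]\geq \tfrac1{\poly[\lambda]}$, and since being fidelity-close to a pure target implies being trace-close (Fuchs--van de Graaf), this contradicts \cref{lem:incomprPRS} with $n=s$. Hence no succinct $1\text{-}1\text{-}\textsf{NA}$ \textsf{SIM}--secure $\QFE$ scheme exists.

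The conceptual core --- ``the simulator is forced to act as a compressor of $\ket{\psi_k}$ without knowing $k$'' --- is short; the work is in legitimately invoking \cref{lem:incomprPRS}. I expect the main obstacle to be the step of pinning down a single $mk^{\ast}$ and $\fk_C^{\ast}$ so that $\Decomp\circ\Comp$ is a genuine composition of two channels carrying no shared side information (the incompressibility lemma leaves no room for correlated advice between $\Comp$ and $\Decomp$) while still retaining inverse-polynomial recovery over $k$. Two further points need care: choosing the $\PRS$ length $s$ strictly above the scheme's fixed ciphertext bound $t$, and translating cleanly among trace distance, the quantity $\bra{\psi}\cdot\ket{\psi}$, and the $\tfrac12(1+\cdot)$ acceptance probability of the SWAP test. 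Each is routine on its own, but all must be handled carefully precisely because ciphertexts and functional keys may be quantum --- which is why the quantum incompressibility statement \cref{lem:incomprPRS} is needed rather than a bit count.
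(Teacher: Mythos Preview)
Your proposal is correct and follows essentially the same approach as the paper: use the simulator (which sees only $C$ and $\ket{\psi_k}$, not $k$) as a compressor to a $t$-qubit ciphertext and $\Dec(\fk_C,\cdot)$ as the decompressor, then invoke the SWAP-test distinguisher together with real/ideal indistinguishability to force high fidelity, contradicting \cref{lem:incomprPRS}. Your write-up is in fact more explicit than the paper's in spelling out the adversary, the averaging over $mk,\fk_C$, and the Markov/Fuchs--van~de~Graaf conversions; the only caveat---which you already flag---is that when $\fk_C$ may be quantum one should avoid hardwiring a single $\fk_C^{\ast}$ and instead keep the expectation over freshly sampled $(mk,\fk_C)$ so that the resulting compressor/decompressor remain QPT, exactly as required by \cref{lem:incomprPRS}.
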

\begin{proof}
    Assume $\PRS$ exists.  Assume towards contradiction that there also exists a succinct $1\text{-}1\text{-}\textsf{NA}$
    \textsf{SIM}--secure $\QFE$ scheme. Let $\lambda \in \N$ be the security parameter and $\mathcal{M}$ be the message space of the scheme with messages of size $\ell_M = \poly[\lambda]$. Let $\ell_{CT} = \poly[\lambda,\ell_M]$ be the ciphertext size of the scheme. Such a scheme implies a compression and decompression function. 
    
    First fix a master secret key $mk$ sampled from the $\QFE.\Setup$ and $\fk_C \leftarrow \QFE.\Keygen(mk,C)$, where $C = \PRS.\Gen: \bin^{\lambda} \rightarrow \D(\C^{s})$, for some $s= \poly[\lambda], s > \ell_{CT}$. For a given state $\rho$ define $\Comp(\rho)$ as follows:
    %Let $\Adv$ be the adversary in the $1\text{-}1\text{-}\textsf{NA}$ \textsf{SIM}--security game. First, the adversary requests a functional key for the function $\PRS.\Gen: \bin^{\lambda} \rightarrow \C^{s}$ where $s = \poly[\lambda], s > \ell_{CT}$ and receives the key $\fk_C$. 
     The simulator $\Sim$ obtains the master secret key mk, a description of the circuit $\PRS.\Gen$ and the state $\rho \in \D(\C^s)$ and creates the ciphertext $\ct$.  The ciphertext $\ct$ is the output of the compression. More formally,
    $$\Comp(\rho)=\Sim(1^{\lambda}, mk, C,  \rho ) \enspace.$$

    %Then $\Adv$ chooses $k \leftarrow \bin^{\lambda}$ and sends $k$ as the challenge message.
    
    %We use the simulator to define a 
    %compression and decompression algorithm that compresses pseudorandom states.
    
    %For compression $\Comp$ proceed as follows. 

    The decompression $\Decomp$ proceeds as follows. Given an input $\ct$ $\Decomp$ runs the decryption algorithm of the $\QFE$ scheme on $\ct$ and the functional secret key $\fk_C$ to obtain 
    $$\Decomp(\ct) = \QFE.\Dec(\fk_C, \ct) \enspace.$$ In the case that $\rho=\PRS.\Gen(k)$ we see that $\Comp \circ \Decomp (\rho)$ and $ \rho$ must be computationally indistinguishable. This follows based on the correctness of the $\QFE$ scheme and the indistinguishability of the ideal and real world. To see that $\Comp \circ \Decomp (\rho)$ acts as a compression procedure for the $\PRS$ we must further show that these states are close in trace distance. To see this simply note that an efficient distinguisher in the SIM security game can obtain both a copy of both $\rho$ and $\Comp \circ \Decomp (\rho)$ at the same time, since the adversary chooses $k$ and creates one copy of $\rho = \PRS.\Gen(k)$ for himself and sends $k$ to the simulator to obtain $\ct = \Comp \circ \Decomp (\rho)$. Now the adversary can apply the SWAP test to $\rho$ and $\ct$ which rejects with non-negligible probability if the trace distance is non-negligible. This means that in this situation computational indistinguishability must further imply that these states are close in trace distance. However, this contradicts the incompressibility of $\PRS$ which we showed in~\cref{lem:incomprPRS}. Therefore $1\text{-}1\text{-}\textsf{NA}$  \textsf{SIM}--secure $\QFE$ cannot exist.
    
    %Note that $mk$ and $\fk_C$ are independent of $|\phi_k\rangle$, therefore they cannot contain any information on the state.\footnote{This is true since in the definition of $\QFE$ (\cref{def:simsecurityna}) the secret key and the non-adaptively queried function key are honestly created by the experiment. Alternative definitions of $\QFE$ allow the simulator to control Setup and Keygen at the start of the experiment. In such a definition, this statement would only be true if the functional secret key is classical. If the simulator can create a quantum functional key, then he can stay entangled with this arbitrarily large key and later teleport any required state into the key. Edit: even if the master secret key is classical the problem of teleporting into the functional secret key persists} 

\qed
\end{proof}

\begin{remark}
    In the above proof we use the fact that in \cref{def:simsecurityna} in the ideal world  $\QFE.\Setup$ and $\QFE.\Keygen$ are run honestly by the experiment and not controlled by the simulator. If we allow the simulator to control $\Keygen$ he can send an entangled state and later teleport the $\PRS$ into the key, thereby circumventing the size restriction on the ciphertext. If the functional keys are classical, as in \cite{MM24}, this problem does not arise and the alternative security definition is also excluded. 
\end{remark}

\section{Impossibility of Simulation-Security under Many Functional-Key Queries}

In this section, we establish the impossibility of $1\text{-}\mathsf{M}\text{-}\mathsf{NA}$ \textsf{SIM}--secure $\QFE$ under the assumption that $\PKE$ exists. We refer the reader to the technical overview (\cref{sec:tech}) for an intuitive description of the proof to aid readability. 

Note that the earlier impossibility result (\cref{thm:imp11NA}) relied on $\PRS$ and applied only to succinct schemes. By contrast, our new impossibility extends to the broader class of non-succinct schemes, though it requires a different assumption—namely, $\PKE$. Importantly, $\PKE$ and $\PRS$ are not directly comparable \cite{sattah25,morimae24}: while a separation between $\PKE$ and $\PRS$ is known \cite{K21}, there is neither an implication nor a separation in the reverse direction. Thus, the two results represent independent approaches to establishing barriers for $\QFE$.

\begin{theorem}
\label{thm:imp1MNA}
 Assuming the existence of public-key encryption with classical keys, there does not exist a (secret-key) $1\text{-}\textsf{M}\text{-}\textsf{NA}$ \textsf{SIM}--secure $\QFE$ scheme.
 \end{theorem}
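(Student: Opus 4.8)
The plan is to realize the route of \cref{sec:tech}: turn the $\QFE$ simulator into an information-compression procedure and contradict \cref{lem:incompress}. First I would isolate the $\PKE$ notion needed, call it \textsf{IND-MK-CPA}, in which the adversary outputs $n$ message pairs $(M_i^0,M_i^1)_{i\in[n]}$, is given $n$ independently generated public keys $pk_i$ together with ciphertexts $\PKEnc(pk_i,M_i^b)$, and must guess $b$. A standard hybrid over the $n$ key/ciphertext slots shows $\textsf{IND-CPA}\Rightarrow\textsf{IND-MK-CPA}$ with a loss of a factor $n$ in advantage, so the assumed $\PKE$ scheme is \textsf{IND-MK-CPA} secure.

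\textbf{Reduction setup.} Suppose for contradiction that a $1\text{-}\textsf{M}\text{-}\textsf{NA}$ \textsf{SIM}--secure $\QFE$ scheme exists, with simulator $\Sim$. The point I would stress first is that the $\QFE$ ciphertext length is a fixed polynomial $\ell=\ell(\lambda)$ that is \emph{independent of the number $n$ of key queries}: $\Setup$ and $\Enc$ never see $n$, the circuits $C_i:=\PKEnc(pk_i,\cdot)$ I will use have size $\poly[\lambda]$, and the simulated ciphertext must have essentially the same qubit count as a real one (otherwise a \textsf{SIM}--game adversary distinguishes by counting qubits). This is exactly why succinctness is \emph{not} needed here, unlike in \cref{thm:imp11NA}. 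Fix $n:=2\ell+\lambda$. Now build an \textsf{IND-MK-CPA} adversary $\mathcal{D}$ as in \cref{sec:tech}: it outputs $(M,m_i)_{i\in[n]}$ for fresh random $M,m_i$; receives $(pk_i,\ct_i)_i$; picks $j\gets[n]$ and overwrites $(pk_j,\ct_j)$ with a self-generated pair $(\tilde{pk},\PKEnc(\tilde{pk},M))$ (the ``trap''); samples $mk\gets\Setup(1^\lambda)$ and $\mathsf{fk}_{C_i}\gets\Keygen(mk,C_i)$; runs $\widetilde{\mathsf{CT}}\gets\Sim(1^\lambda,mk,(C_i,\ct_i,1^{|M|})_i)$; computes $\ct'_j\gets\Dec(\mathsf{fk}_{C_j},\widetilde{\mathsf{CT}})$; and outputs the bit $[\PKDec(\tilde{sk},\ct'_j)=M]$.

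\textbf{The two hybrids.} When $b=0$ every $\ct_i$ encrypts $M$, so the tuple handed to $\Sim$ is distributed exactly as in the ideal experiment of the $1\text{-}\textsf{M}\text{-}\textsf{NA}$ \textsf{SIM} game played by the natural $\QFE$ adversary (query $C_1,\dots,C_n$; encrypt $M$; run the same trapped verification). In that adversary's \emph{real} experiment the verification succeeds with probability $1-\negl$ by $\QFE$-correctness and $\PKE$-correctness --- only a single decryption of the real ciphertext is involved, so no rewinding is needed --- hence by \textsf{SIM}--security it also succeeds in the ideal experiment, so that $\Pr[\mathcal{D}\to1\mid b=0]\ge 1-\negl$. \textsf{IND-MK-CPA} security then forces $\Pr[\mathcal{D}\to1\mid b=1]\ge 1-\negl$. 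When $b=1$ the tuple $(pk_i,sk_i,\ct_i,\mathsf{fk}_{C_i})_{i\in[n]}$ (with $sk_j:=\tilde{sk}$) is exchangeable in $i$ and the part of it seen by $\Sim$ is a symmetric function of it, so the probability that decrypting $\widetilde{\mathsf{CT}}$ with $\mathsf{fk}_{C_i}$ and then with $sk_i$ recovers the $i$-th plaintext does not depend on $i$, and for $i=j$ it equals $\Pr[\mathcal{D}\to1\mid b=1]\ge1-\negl$; hence it is $\ge1-\negl$ for every $i$. Using the gentle measurement lemma (\cref{lem:genlemeas}) to rewind after each of the $n$ decryptions --- each composed decryption has a near-deterministic classical output, exactly as in the proof of \cref{thm:impM1AD} --- a \emph{single} copy of $\widetilde{\mathsf{CT}}$ lets one recover all $n$ plaintexts with probability $1-n\cdot\negl=1-\negl$. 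This exhibits a pair of $(n,k)$-compression circuits with shared, message-independent non-uniform advice $(mk,\{pk_i,sk_i,\mathsf{fk}_{C_i}\}_i)$: $\Comp(\mu_1\cdots\mu_n)$ runs $\Sim$ on $(C_i,\PKEnc(pk_i,\mu_i),1^{|\mu_i|})_i$ and outputs the $\ell$-qubit $\widetilde{\mathsf{CT}}$ (replacing it by $0^\ell$ in the negligible event it is too long), and $\Decomp$ performs the rewound decryptions. Since $n-k=\ell$ we have $k=\ell+\lambda>n/2$, contradicting \cref{lem:incompress}; the corollary for public-key $\QFE$ follows because public-key $\QFE$ with classical public keys implies $\PKE$.

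\textbf{Main obstacle.} The heart of the argument is the $b=1$ step. Because the reduction never holds the challenger's secret keys $sk_i$, it can only verify recovery at the single trapped slot $j$; lifting ``recovery works at a random slot'' to ``recovery works at every slot'' requires that the simulator cannot localise the trap, which is precisely why the $b=1$ input is engineered to be permutation-symmetric and independent of $j$. The second delicate point is that $\widetilde{\mathsf{CT}}$ may be a genuinely quantum, possibly non-clonable state, so all $n$ decryptions must be carried out sequentially on one copy via gentle-measurement rewinding with only a negligible cumulative loss; together with the length-independence observation of the setup step, this is what makes $\Sim$ a bona fide compressor and yields the contradiction.
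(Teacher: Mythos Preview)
Your proposal is correct and follows the same approach as the paper: the trap-slot construction, the use of \textsf{IND-MK-CPA} to transfer the success probability from $b=0$ to $b=1$, the exchangeability argument over slots, the gentle-measurement rewinding to extract all $n$ plaintexts from a single simulated ciphertext, and the invocation of \cref{lem:incompress} all match. The only difference is organizational---the paper frames the argument as building a distinguisher against \textsf{IND-MK-CPA} (with the compression contradiction tucked inside the $b=1$ claim), whereas you assume \textsf{IND-MK-CPA} holds and derive compression directly---but the two are logically equivalent.
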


\begin{proof}
    Assume for contradiction that there exists a (secret-key) $1\text{-}\textsf{M}\text{-}\textsf{NA}$ \textsf{SIM}--secure $\QFE$ scheme $\Pi_{\QFE}=(\Setup, \Keygen,\Enc,\Dec)$. 
    
    Also, assume that there exists a $\textsf{CPA}$--secure $\PKE$ scheme. This implies the existence of a \textsf{MK-CPA}--secure $\PKE$ scheme $\Pi_{\PKE}=(\PKKeygen, \PKEnc,\PKDec)$ by \cref{lem:many-key}. 
    
    Since $\Enc$ is a QPT algorithm, there exists a polynomial $q$ such that the size of the output of $\Enc$ is bounded by $q(\lambda)$ on $\lambda$-bit messages. Furthermore, let $\Sim$ be the simulator for the $\QFE$ security experiment.  

    %The existence of $\Pi_{\QFE}$ implies the existence of a \textsf{MK-CPA}-secure $\PKE$ scheme by \cref{cor:many-key-from-QFE}. Let $\Pi_{\textsf{PKE}}=(\PKKeygen, \PKEnc, \PKDec)$ be the algorithms of such a scheme. 

    We construct a QPT adversary $\adv$ that breaks \textsf{MK-CPA} security of $\Pi_{\textsf{PKE}}$. For convenience, we recall the \textsf{MK-CPA} security experiment in \cref{fig:recall CPA}.

\begin{figure}[!htb]
   \begin{center} 
   \begin{tabular}{|p{12cm}|}
    \hline 
\begin{center}
\underline{$\PKE^{\textnormal{IND-MK-CPA}}_{\Pi_{\PKE},\adv}({\lambda})$}: 
\end{center}
\begin{enumerate}
    \item $\adv$ sends a number $n$ that is polynomial in $\lambda$.
    \item Challenger $\ch$ generates $(sk_i,pk_i)\gets \PKKeygen(1^\lambda)$ for every $i\in [n]$, and sends $(pk_i)_{i\in[n]}$ to $\adv$.
    \item $\adv$ outputs a message pair $(m_{0,i},m_{1,i})\in \{0,1\}^\lambda \otimes \{0,1\}^\lambda$ for every $i\in [n]$.
    \item $\ch$ samples a bit $b\leftarrow  \{0,1\}$ and generates $\ct_i\gets \PKEnc(pk_i,m_{b,i})$ for every $i\in [n]$. 
    \item $\ch$ sends $(\ct_i)_{i\in [n]}$ to $\adv$.
    \item $\adv$ submits a bit $b'$.
  % \item $(\rho,\textsf{st})\gets \adv^{pkgen(sk)}(\vk)$.
  %   \item $\ct\leftarrow \enc(\textsf{k},\mu_b)$.
  %   \item $b'\leftarrow \adv^{\enc(k,\cdot)}(\ct,\textsf{st})$.
    \item The output of the experiment is $1$ if $b=b'$, and $0$ otherwise. 
\end{enumerate}
\ \\ 
\hline
\end{tabular}
    \caption{\textsf{MK-CPA} security experiment.}
    \label{fig:recall CPA}
    \end{center}
\end{figure}

We then describe the algorithm $\adv$ (in \cref{attack-against-cpa})  which is used to break \textsf{MK-CPA} security.

\begin{figure}[!htb]
   \begin{center} 
   \begin{tabular}{|p{12cm}|}
    \hline 
\begin{center}
{\underline{Algorithm of $\adv$}}: 
\end{center}
\begin{enumerate}
    \item $\adv$ sets $n\coloneqq 2q+\lambda$.
    \item $\adv$ receives $(pk_i)_{i\in[n]}$ from the challenger $\ch$.
    \item $\adv$ samples a key pair $(\tilde{sk}_{i},\tilde{pk}_{i})\leftarrow \PKKeygen(1^\lambda)$ for each $i\in [n]$. 
    \item $\adv$ samples $M\leftarrow \{0,1\}^\lambda$ and sets $m_{0,i}=M$ for all $i\in [n]$. 
      \item $\adv$ samples $m_{1,i}\leftarrow \{0,1\}^\lambda$ for each $i\in [n]$. 
    \item $\adv$ outputs $(m_{0,i},m_{1,i})_{i\in [n]}$ and sends it to $\mathcal{C}$.
    \item $\adv$ receives  $(\ct_i)_{i\in [n]}$ from $\ch$.
    \item $\adv$ randomly picks $j\gets [n]$ and computes $\tilde{\ct}_{j}\leftarrow \PKEnc(\tilde{pk}_{j},M)$.
    \item $\adv$ updates and replaces: $\ct_j\gets \tilde{\ct}_j$, $sk_j\gets \tilde{sk}_j$, $pk_j\gets \tilde{pk}_j$, and $m_{1,j}\gets M$. 
    \item $\adv$ runs the experiment $\textsf{Exp}_{(\adv_1,\adv_2)}^{Ideal}(1^\lambda)$:
    \begin{itemize}
        \item $mk \leftarrow \Setup(1^\lambda)$.
        \item $M\leftarrow \adv^{\Keygen(mk,\cdot)}_1$ requests the functional keys for $C_i\coloneqq \PKEnc(pk_{i},\cdot)$ for each $i\in [n]$. Let the responses be $\fk_{C_i}\leftarrow \Keygen(mk,C_i)$ for each $i\in [n]$.
        \item $\ct \leftarrow \Sim(1^\lambda, mk, \mathcal{V})$, where $\mathcal{V} = (C_i, \ct_{i}, 1^{\lambda})_{i\in [n]}$. 
        \item $\adv_2(\ct)$ outputs $\ct'_{j}=\bot$ if $\lvert \ct\rvert >q(\lambda)$. Otherwise, it computes and outputs ${\ct}'_{j}\leftarrow \Dec(\fk_{C_{j}},\ct)$. 
    \end{itemize}
    \item $\adv$ computes ${M}'\leftarrow \PKDec(sk_{j},{\ct}'_{j})$. 
    \item $\adv$ outputs $b'=0$ if $M'=M$ and $b'=1$ otherwise.
  % \item $(\rho,\textsf{st})\gets \adv^{pkgen(sk)}(\vk)$.
  %   \item $\ct\leftarrow \enc(\textsf{k},\mu_b)$.
  %   \item $b'\leftarrow \adv^{\enc(k,\cdot)}(\ct,\textsf{st})$.
\end{enumerate}
\ \\ 
\hline
\end{tabular}
    \caption{Attack against \textsf{MK-CPA} security.}
    \label{attack-against-cpa}
    \end{center}
\end{figure}
\begin{Claim}
    If $b=0$ in $\PKE^{\textnormal{IND-MK-CPA}}_{\Pi,\adv}({\lambda})$, then $\adv$ outputs $b'=0$ with $1-\negl[\lambda]$ probability. 
\end{Claim}

\begin{proof}
    In the case $b=0$, $\adv$ samples a random message $M\leftarrow \{0,1\}^\lambda$ and ciphertexts are generated $\ct_i\gets \PKEnc(pk_i,M)$ for every $i\in [n]$. 

Consider an algorithm $\tilde{\adv}$ that is the same as $\adv$ except it runs $\textsf{Exp}_{\adv, \textsf{NA}}^{\textnormal{Real}}(1^\lambda)$ instead of $\textsf{Exp}_{\adv, \textsf{NA}}^{\textnormal{Ideal}}(1^\lambda)$ in Step 10 of the algorithm. We describe the algorithm of $\tilde{\adv}$ in \cref{fig:A}. 

\begin{figure}[!htb]
   \begin{center} 
   \begin{tabular}{|p{12cm}|}
    \hline 
\begin{center}
{\underline{Algorithm of $\tilde{\adv}$}}: 
\end{center}
\begin{enumerate}
    \item Steps 1-9 are the same as $\adv$. 
        \setcounter{enumi}{9} % sets it to 41, so the next item is 42
    \item $\tilde{\adv}$ runs the experiment $\textsf{Exp}_{({\adv}_1,{\adv}_2)}^{Real}(1^\lambda)$ as follows:
\begin{itemize}
       \item $mk \leftarrow \Setup(1^\lambda)$.
        \item $M \leftarrow {{\adv}_1}^{\Keygen(mk,\cdot)}$ queries functional key of $C_i$, where $C_i\coloneqq \PKEnc(pk_{i},\cdot)$ for each $i\in [n]$.
        \item Let $\fk_{C_i}\leftarrow \Keygen(mk,C_i)$ denote the response for each $i\in [n]$.
        \item $\ct \leftarrow \Enc(mk, M)$. 
        \item ${\adv}_2(\ct)$ outputs $\ct'_{j}=\bot$ if $\lvert \ct\rvert >q(\lambda)$. Otherwise, it computes and outputs ${\ct}'_{j}\leftarrow \Dec(\fk_{C_{j}},\ct)$. 
        \end{itemize}
    \item $\tilde{\adv}$ computes ${M}'\leftarrow \PKDec(sk_{j},{\ct}'_{j})$. 
    \item $\tilde{\adv}$ outputs $b'=0$ if $M'=M$ and $b'=1$ otherwise.
  % \item $(\rho,\textsf{st})\gets \tilde{\adv}^{pkgen(sk)}(\vk)$.
  %   \item $\ct\leftarrow \enc(\textsf{k},\mu_b)$.
  %   \item $b'\leftarrow \tilde{\adv}^{\enc(k,\cdot)}(\ct,\textsf{st})$.
\end{enumerate}
\ \\ 
\hline
\end{tabular}
\caption{Modified algorithm of $\adv$ with Real experiment.}
    \label{fig:A}
    \end{center}
\end{figure}

It is easy to check that, by the correctness of $\QFE$, when $b=0$, the output of $\tilde{\adv}$ is 0, except with negligible probability. 

Given that 
$$ \{ \Exp_{\adv, \textsf{NA}}^{\textnormal{Real}}(1^\lambda)\}_{\lambda \in \N} \approx_c \{ \Exp_{\adv, \textsf{NA}}^{\textnormal{Ideal}}(1^\lambda)\}_{\lambda \in \N} $$
we must have that the output of $\tilde{\adv}$ is indistinguishable from $\adv$. Therefore, when $b=0$, $\adv$ outputs $b'=0$, except with negligible probability. 
\qed
\end{proof}
\begin{remark}
Note that in the case $b=0$, the ciphertext $\ct$ generated {in the real and the ideal experiments should be 
indistinguishable}. This means that in the ideal experiment we must have $\lvert \ct\rvert \leq q(\lambda)$. This point will be useful in showing the following lemma.      
\end{remark}

\begin{Claim}
    If $b=1$ in $\PKE^{\textnormal{IND-MK-CPA}}_{\Pi,\adv}({\lambda})$, then $\adv$ outputs $b'=1$ with non-negligible probability. 
\end{Claim}

\begin{proof}
    %In the case $b=1$, for every $i\in [n]$, $\adv$ samples a random message $m_{1,i}\leftarrow \{0,1\}^\lambda$ and the challenger $\ch$ generates $\ct_i\gets \PKEnc(pk_i,m_{1,i})$. Additionally, $\adv$ generates a message $M\leftarrow \{0,1\}^\lambda$, a pair of keys $(sk_{j},pk_{j})$ and the ciphertext $\ct_{j}\gets \PKEnc(pk_{j},M)$. 

    %Next, $\adv$ samples a permutation $\pi$ and a pair of keys $(\mpk,mk) \leftarrow \Setup(1^\lambda)$. Then, it generates $\fk_{C_i}\leftarrow \Keygen(mk,C_i)$, where $C_i\coloneqq \PKEnc(pk_{i},\cdot)$ for each $i\in [n]$.

    %Next, $\adv$ runs $\ct \leftarrow \Sim(1^\lambda, \mpk, \mathcal{V})$, where $\mathcal{V} = (C_i, {\fk_{C_i}}, \ct_{i}, 1^{\lambda})_{i\in [n]}$. $\adv$ computes ${M}'\leftarrow \PKDec(\fk_{n},{\ct}'_{n})$. 
    
    Assume for contradiction that the probability that $\adv$ outputs 1 when $b=1$ is negligible. This means that $\lvert \ct\rvert \leq q(\lambda)$ and $M'=M$, except with negligible probability. Given that all the messages are sampled at random and the index $j$ is hidden from $\Sim$, this essentially implies that a random message can be recovered with high probability. More formally,
    \begin{align}
        \Pr_{i\gets [n]}[m_{1,i}=\PKDec(sk_{i},\Dec(\fk_{C_i},\ct))]\geq 1-\negl[\lambda] \enspace.
    \end{align} 
    
    Using the gentle measurement lemma, this means we can recover all the messages $(m_{1,i})_{i\in [n]}$ from $\ct$, except with negligible probability, by recursively computing and uncomputing $\PKDec(sk_{i},\Dec(\fk_{C_i},\ct))$ for all $i\in [n]$. Such a recursive argument was also used in the proof \cref{thm:impM1AD} and is described in more detail there. 

    Notice that the messages $m_1\coloneqq (m_{1,i})_{i\in [n]}$ are sampled uniformly at random and independently of the public keys $pk\coloneqq (pk_i)_{i\in [n]}$, functional keys $\fk_C\coloneqq (\fk_{C_i})_{i\in [n]}$, and secret keys $sk\coloneqq (sk_i)_{i\in [n]}$. This gives compression circuits as follows. 
    
The compression circuit $\Comp (m_1;pk)$, with advice $pk$, computes $\ct_i\gets \PKEnc(pk_i, m_{1,i})$ for each $i\in [n]$. Then, it computes $\ct \leftarrow \Sim(1^\lambda, mk, \mathcal{V})$, where $\mathcal{V} = (C_i, \ct_{i}, 1)_{i\in [n]}$, and outputs the result $\ct$. Meanwhile, the decompression circuit $\Decomp(\ct; \fk_C, sk)$, with advice $(\fk_C, sk)$, runs for every $i\in [n]$,
\[
    \tilde{m}_{1,i}\gets \PKDec(sk_{i},\Dec(\fk_{C_i},\ct)) \enspace.
\]
Then, it outputs $\tilde{m}_{1}\coloneqq (\tilde{m}_{1,i})_{i\in [n]}$. 

We already deduced that $\Pr[\tilde{m}_{1}=m_{1}]\geq 1-\negl[\lambda]$ and given that $\lvert \ct\rvert \leq q< \frac{n}{2}$, this means that $(\Comp,\Decomp)$ are $\left(n,\frac{n}{2}+1\right)$-compression circuits. But this contradicts \cref{lem:incompress}, so we conclude that $\adv$ outputs $b'=1$ when $b=1$ with non-negligible probability. 
    \qed
\end{proof}

All in all, by the previous two claims, we have 
 \begin{align*}
         \Pr[\PKE^{\textnormal{IND-MK-CPA}}_{\Pi_{\PKE},\adv}({\lambda})=1] > \frac{1}{2}+\frac{1}{\poly[\lambda]} \enspace.
    \end{align*}
which contradicts \textsf{MK-CPA} security of ${\Pi}_{\PKE}$. Therefore, there does not exist a $1\text{-}\textsf{M}\text{-}\textsf{NA}$
\textsf{SIM}--secure $\QFE$ scheme. 
\qed
\end{proof}

\begin{remark}
\cref{def:simsecurityna} for non-adaptive simulation security allows the simulator in the ideal experiment to control only the response to the ciphertext queries. However, we can consider a more general and weaker security notion that also allows the simulator to control the response to the non-adaptive functional key queries. The proof of \cref{thm:imp1MNA} easily generalizes to show impossibility of this weaker notion of simulation security. Critically, it is not necessary for the functional keys to be honestly generated in the  proof. 
\end{remark}

A direct corollary to our result is that $1\text{-}\textsf{M}\text{-}\textsf{NA}$ \textsf{SIM}--secure public-key $\QFE$ (with classical public-keys) is \emph{unconditionally} impossible, given that this primitive implies $\PKE$.  

\begin{corollary}
There does not exist a $1\text{-}\textsf{M}\text{-}\textsf{NA}$ \textsf{SIM}--secure public-key $\QFE$ with classical public keys. 
\end{corollary}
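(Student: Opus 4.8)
The plan is to derive this from \cref{thm:imp1MNA}. The point is that a $1\text{-}\textsf{M}\text{-}\textsf{NA}$ \textsf{SIM}--secure public-key $\QFE$ with classical public keys would simultaneously yield (i) a $\PKE$ scheme with classical keys and (ii) a secret-key $1\text{-}\textsf{M}\text{-}\textsf{NA}$ \textsf{SIM}--secure $\QFE$ scheme, and \cref{thm:imp1MNA} says these cannot coexist.

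So, suppose a public-key scheme $(\Setup,\Keygen,\Enc,\Dec)$ exists, with $\Setup(1^\lambda)\to(\mpk,\msk)$ classical, $\Enc$ using $\mpk$, and $\Keygen$ using $\msk$. First I would build a $\PKE$ scheme $\Pi_{\PKE}$ with classical keys: let $\PKKeygen(1^\lambda):=\Setup(1^\lambda)$ output public key $\mpk$ and secret key $\msk$ (both classical, by the standing assumption that the $\QFE$ master secret key is classical); let $\PKEnc(\mpk,m):=\Enc(\mpk,m)$; and let $\PKDec(\msk,\ct)$ internally compute $\mathsf{fk}_I\gets\Keygen(\msk,I)$ for the identity circuit $I$ and output $\Dec(\mathsf{fk}_I,\ct)$. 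Correctness is immediate from \cref{def:qfecorrect} applied to $I$. For \textsf{IND-CPA} security, recall that $1\text{-}\textsf{M}\text{-}\textsf{NA}$ \textsf{SIM}--security implies the corresponding indistinguishability notion (shown for $\QFE$ in \cite{MM24}, mirroring the classical case); restricted to adversaries issuing no functional-key query, this says precisely that $\Enc(\mpk,m_0)\approx_c\Enc(\mpk,m_1)$ for any two equal-length messages, i.e.\ $\Pi_{\PKE}$ is \textsf{IND-CPA} secure. Crucially, the $\PKE$ secret key is the classical string $\msk$; the (possibly quantum) functional key $\mathsf{fk}_I$ is regenerated inside $\PKDec$ and never published, so $\Pi_{\PKE}$ genuinely has classical keys.

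Second, I would observe that the same object is also a secret-key $1\text{-}\textsf{M}\text{-}\textsf{NA}$ \textsf{SIM}--secure $\QFE$ scheme in the sense of \cref{def:qfe} and \cref{def:simsecurityna}: take the secret-key master key to be $mk:=(\mpk,\msk)$, let encryption use $\mpk$ and key generation use $\msk$. The secret-key \textsf{SIM}--security game is the public-key game with the sole difference that the adversary is not handed $\mpk$; since withholding information only weakens the adversary (and the secret-key simulator, receiving $mk\supseteq\mpk$, can do whatever the public-key simulator does), public-key \textsf{SIM}--security implies secret-key \textsf{SIM}--security, and the single-message / many-non-adaptive-key query pattern is preserved verbatim. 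Applying \cref{thm:imp1MNA} with the scheme $\Pi_{\PKE}$ of step one now contradicts the existence of the secret-key scheme of step two, completing the argument.

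I do not expect a genuinely hard step: this is a corollary. The two points that need a little care are (a) making the derived $\PKE$ have a \emph{classical} secret key --- which is why $\msk$, rather than a functional key, is used as the $\PKE$ secret key and the identity functional key is regenerated at decryption time --- and (b) the $\textsf{SIM}\Rightarrow\textsf{IND}$ step in the quantum setting, which is essentially already established in \cite{MM24}.
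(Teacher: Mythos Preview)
Your proposal is correct and matches the paper's approach: the paper simply observes that public-key $\QFE$ with classical public keys yields a $\PKE$ scheme with classical keys (this is \cref{cor:many-key-from-QFE} together with \cref{lem:many-key}), and then invokes \cref{thm:imp1MNA}, using the standard fact that public-key $\QFE$ is in particular secret-key $\QFE$. Your write-up is more explicit than the paper's one-line justification---in particular your care in point~(a), using the classical $\msk$ rather than a possibly quantum $\mathsf{fk}_I$ as the $\PKE$ secret key, is a detail the paper leaves implicit.
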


\bibliographystyle{splncs04}
\bibliography{mybib}

\appendix

\section{Public-Key Encryption with Many-Key Security}

In this section, we recall the definition of public-key encryption from \cite{BL19} and show that any public-key $\QFE$ scheme implies the existence of a $\PKE$ scheme satisfying a variant of \textsf{CPA}--security that we call many-key \textsf{CPA}--security.

\anne{notation is not consistent, Classical keys should not be written with mathsf}

\begin{definition}
    Let $\lambda$ be the security parameter and $n(\lambda)$ be a polynomial in $\lambda$. A public-key  encryption $(\PKE)$ scheme consists of a tuple of QPT algorithms $(\Keygen, \Enc, \Dec)$ such that,  
\begin{itemize}[align=left]
    \item[$\Keygen(1^\lambda) \rightarrow (sk,pk)$] Samples a classical secret key $sk$ and classical public key $pk$.
    \item[$\Enc(pk, m) \rightarrow \ct$] Given the public key $pk$ and a message $m$ output a (quantum) ciphertext $\ct$.
    \item[$\Dec(sk, \ct) \rightarrow m$] Given a secret key $sk$ and ciphertext $\ct$, output a message $m$.
\end{itemize}
\end{definition}

\begin{definition}(Correctness of a $\PKE$)
    For all $\lambda \in \N$ and $m \in \bin^{n(\lambda)}$
    $$ Tr\left[ m, \Dec (sk,  \Enc (pk ,m)) \right] \geq 1 - \negl[\lambda]$$
    where $(sk,pk) \leftarrow \Keygen(1^{\lambda}) $.
\end{definition}

\begin{definition}[\textsf{CPA} Security]
    A $\PKE$ scheme $\Pi$ satisfies {\textsf{CPA} indistinguishability} if for any QPT $\adv$ in security experiment $\PKE^{\textnormal{IND-CPA}}_{\Pi,\adv}$ (see \Cref{fig:pkt-cpa}), there exists a negligible function $\negl[\lambda]$ such that
    \begin{align*}
         \Pr[\PKE^{\textnormal{IND-CPA}}_{\Pi,\adv}({\lambda})=1] \leq \frac{1}{2} + \negl[\lambda].
    \end{align*}
\end{definition}

\begin{figure}[!htb]
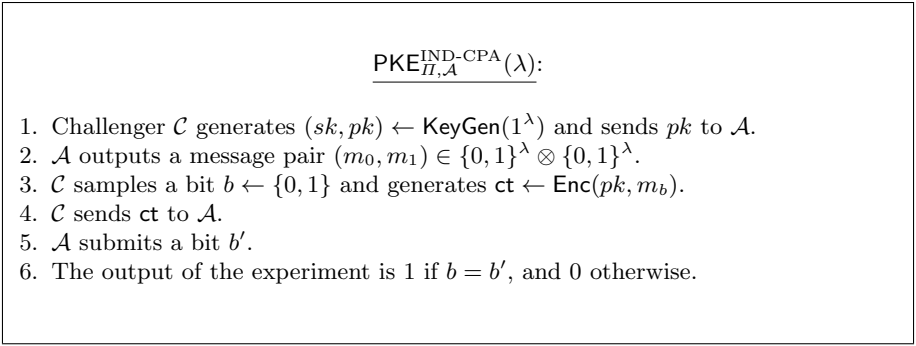

   \begin{center} 
   \begin{tabular}{|p{12cm}|}
    \hline 
\begin{center}
\underline{$\PKE^{\textnormal{IND-CPA}}_{\Pi,\adv}({\lambda})$}: 
\end{center}
\begin{enumerate}
    \item Challenger $\ch$ generates $(sk,pk)\gets \Keygen(1^\lambda)$ and sends $pk$ to $\adv$.
    \item $\adv$ outputs a message pair $(m_{0},m_{1})\in \{0,1\}^\lambda \otimes \{0,1\}^\lambda$.
    \item $\ch$ samples a bit $b\leftarrow  \{0,1\}$ and generates $\ct\gets \Enc(pk,m_{b})$. 
    \item $\ch$ sends $\ct$ to $\adv$.
    \item $\adv$ submits a bit $b'$.
  % \item $(\rho,\textsf{st})\gets \adv^{pkgen(sk)}(\vk)$.
  %   \item $ct\leftarrow \enc(\textsf{k},\mu_b)$.
  %   \item $b'\leftarrow \adv^{\enc(k,\cdot)}(ct,\textsf{st})$.
    \item The output of the experiment is $1$ if $b=b'$, and $0$ otherwise. 
\end{enumerate}
\ \\ 
\hline
\end{tabular}
    \caption{\textsf{CPA} security experiment.}
    \label{fig:pkt-cpa}
    \end{center}
\end{figure}

It is easy to show that any $\QFE$ scheme satisfies \textsf{CPA}--security. However, we will need to use a stronger variant, which we term \emph{many-key \textsf{CPA} (\textsf{MK-CPA}) security}. Intuitively, this encrypts the challenge message under multiple public-keys and is presented below. 

\begin{definition}[\textsf{MK-CPA} Security]
    A $\PKE$ scheme $\Pi$ satisfies \emph{many-key \textsf{CPA} indistinguishability} if for any QPT $\adv$ in security experiment $\PKE^{\textnormal{IND-MK-CPA}}_{\Pi,\adv}$ (see \Cref{fig:mk-cpa}), there exists a negligible function $\negl[\lambda]$ such that
    \begin{align*}
         \Pr[\PKE^{\textnormal{IND-MK-CPA}}_{\Pi,\adv}({\lambda})=1] \leq \frac{1}{2} + \negl[\lambda].
    \end{align*}\label{def:qpke}
\end{definition}

\begin{figure}[!htb]
   \begin{center} 
   \begin{tabular}{|p{12cm}|}
    \hline 
\begin{center}
\underline{$\PKE^{\textnormal{IND-MK-CPA}}_{\Pi,\adv}({\lambda})$}: 
\end{center}
\begin{enumerate}
    \item $\adv$ sends a number $n$ that is polynomial in $\lambda$.
    \item Challenger $\ch$ generates $(sk_i,pk_i)\gets \Keygen(1^\lambda)$ for every $i\in [n]$, and sends $(pk_i)_{i\in[n]}$ to $\adv$. 
    \item $\adv$ outputs a message pair $(m_{0,i},m_{1,i})\in \{0,1\}^\lambda \otimes \{0,1\}^\lambda$ for every $i\in [n]$.
    \item $\ch$ samples a bit $b\leftarrow  \{0,1\}$ and generates $\ct_i\gets \Enc(pk_i,m_{b,i})$ for every $i\in [n]$. 
    \item $\ch$ sends $(\ct_i)_{i\in [n]}$ to $\adv$.
    \item $\adv$ submits a bit $b'$.
  % \item $(\rho,\textsf{st})\gets \adv^{pkgen(sk)}(\vk)$.
  %   \item $ct\leftarrow \enc(\textsf{k},\mu_b)$.
  %   \item $b'\leftarrow \adv^{\enc(k,\cdot)}(ct,\textsf{st})$.
    \item The output of the experiment is $1$ if $b=b'$, and $0$ otherwise. 
\end{enumerate}
\ \\ 
\hline
\end{tabular}
    \caption{\textsf{MK-CPA} security experiment.}
    \label{fig:mk-cpa}
    \end{center}
\end{figure}

\begin{lemma}
\label{lem:many-key}
    If a $\PKE$ scheme is \textsf{CPA}--secure then it is \textsf{MK-CPA}--secure. 
\end{lemma}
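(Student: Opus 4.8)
The plan is to run the standard hybrid argument over the $n$ independent public keys, reducing \textsf{MK-CPA} security to ordinary \textsf{CPA} security with only a polynomial loss. Fix a QPT adversary $\adv$ against \textsf{MK-CPA} that announces some $n=\poly[\lambda]$, and for each $\ell\in\{0,1,\dots,n\}$ define a hybrid experiment $H_\ell$ that is identical to $\PKE^{\textnormal{IND-MK-CPA}}_{\Pi,\adv}$ except that the challenger answers with $\ct_i\gets\PKEnc(pk_i,m_{1,i})$ for the first $\ell$ indices and with $\ct_i\gets\PKEnc(pk_i,m_{0,i})$ for the remaining $n-\ell$ indices. Then $H_0$ is the $b=0$ branch of the \textsf{MK-CPA} game and $H_n$ is the $b=1$ branch, so $\adv$'s advantage is (up to the routine rewriting) $\tfrac12\lvert\Pr[\adv\to 1\mid H_n]-\Pr[\adv\to 1\mid H_0]\rvert$.

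First I would note that consecutive hybrids $H_{\ell-1}$ and $H_\ell$ differ in a single ciphertext slot: slot $\ell$ switches from encrypting $m_{0,\ell}$ to encrypting $m_{1,\ell}$. I would then build a \textsf{CPA} adversary $\mathcal{B}$ that either picks $\ell\gets[n]$ uniformly at random (paying a factor $1/n$) or, equivalently, uses the telescoping triangle inequality to locate a consecutive pair with a non-negligible gap. $\mathcal{B}$ receives its own challenge key $pk^\star$, generates the other $n-1$ key pairs itself, plants $pk^\star$ in slot $\ell$, and forwards $(pk_i)_{i\in[n]}$ to $\adv$. When $\adv$ returns the message pairs, $\mathcal{B}$ submits $(m_{0,\ell},m_{1,\ell})$ to its own challenger; on receiving $\ct^\star$ it sets $\ct_\ell\coloneqq\ct^\star$, computes the other ciphertexts honestly as $\PKEnc(pk_i,m_{1,i})$ for $i<\ell$ and $\PKEnc(pk_i,m_{0,i})$ for $i>\ell$, hands $(\ct_i)_{i\in[n]}$ to $\adv$, and outputs $\adv$'s final bit. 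The challenge bit $0$ induces exactly the view $H_{\ell-1}$ and the bit $1$ induces $H_\ell$, so $\mathcal{B}$'s \textsf{CPA} advantage is at least $1/n$ times that of $\adv$; since $n$ is polynomial, a non-negligible \textsf{MK-CPA} advantage contradicts \textsf{CPA}--security, which proves the lemma.

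I do not expect a genuine obstacle here — this is the textbook hybrid reduction — so the only things needing care are bookkeeping points: $n$ is adversarially chosen, but the definition forces $n=\poly[\lambda]$, so the $1/n$ loss stays non-negligible; and although ciphertexts may be quantum, $\mathcal{B}$ never clones one — it produces the $n-1$ non-challenge ciphertexts honestly via $\PKEnc$ and merely relays the single challenge $\ct^\star$ unchanged — so the reduction is efficient and physically valid, and the argument goes through verbatim for a stateful, quantum $\adv$.
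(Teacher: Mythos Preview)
Your proposal is correct and follows essentially the same hybrid argument as the paper: define $n{+}1$ hybrids interpolating between the $b=0$ and $b=1$ branches one coordinate at a time, then embed the single-key \textsf{CPA} challenge at a random (or maximizing) index to distinguish consecutive hybrids, incurring only a $1/n$ loss. Your bookkeeping remarks on the adversarially chosen but polynomial $n$ and on the quantum ciphertexts not needing to be cloned are exactly the right sanity checks.
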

\begin{proof}
Let $\Pi=(\Keygen,\Enc,\Dec)$ be a $\PKE$ scheme. Assume the scheme is \textsf{CPA}--secure but not \textsf{MK-CPA} secure for the sake of obtaining a contradiction. Then, there exists a QPT adversary $\adv$ such that 
\begin{align*}
         \Pr[\PKE^{\textnormal{IND-MK-CPA}}_{\Pi,\adv}({\lambda})=1] > \frac{1}{2}+\frac{1}{\poly[\lambda]}
    \end{align*}
for infinitely many $\lambda$.
This implies that in the security experiment $\PKE^{\textnormal{IND-MK-CPA}}_{\Pi,\adv}({\lambda})$,
\begin{equation}
\begin{split}
 \left| \Pr[\adv(\right.\hspace{-0.1cm}&\left. \{\ct_i\}_{i\in [n]} )=1  :\ct_i\gets   \Enc(pk_i,m_{0,i})\,\, \forall i\in [n]] \right. \\
 &\quad \left. -\Pr[\adv(\{\ct_i\}_{i\in [n]})=1:\ct_i\gets \Enc(pk_i,m_{1,i})\ \forall  i\in [n]]\right| >\delta
\end{split}
\end{equation}
%\begin{align}
%   \left| \Pr[\adv(\{ct_i\}_{i\in [n]})=1:ct_i\gets \Enc(pk_i,m_{0,i})\ \forall i\in [n]]-\Pr[\adv(\{ct_i\}_{i\in [n]})=1:ct_i\gets \Enc(pk_i,m_{1,i})\ \forall  i\in [n]]\right| >\delta
%\end{align}
for some function $\delta=\delta(\lambda) \in \frac{1}{\poly[\lambda]} $ and infinitely many $\lambda$.

Define $\Hy_j$ for $j\in [n]$ to be a hybrid of the security experiment $\PKE^{\textnormal{IND-MK-CPA}}_{\Pi,\adv}({\lambda})$, but where the challenge ciphertexts are generated instead in the following way. If $i\geq j$, then $\ct_i\gets \Enc(pk_i,m_{0,i})$ and if $i<j$, then $\ct_i\gets \Enc(pk_i,m_{1,i})$. Let the output  of the hybrid be the output of $\adv$ when run on these ciphertexts. 

Let
\begin{align}
    \epsilon_j\coloneqq \left| \Pr[\Hy_{j}=1]-\Pr[\Hy_{j-1}=1] \right|.
\end{align}
Set  $j^*=\textsf{arg\ max}_j\epsilon_j$. We have
\begin{align}
    \delta&<\left| \Pr[\Hy_{n}=1]-\Pr[\Hy_{1}=1] \right|\\
         &\leq \sum_{j=1}^{n}\left| \Pr[\Hy_{j}=1]-\Pr[\Hy_{j-1}=1] \right|\\
         &\leq n\cdot \epsilon_{j^*}.
\end{align}

Therefore, $\epsilon_{j^*}> \frac{\delta}{n}$. We can now design a QPT algorithm $\D$ that breaks \textsf{CPA}--security as follows. 

%Recall, in the \textsf{CPA}--security experiment, $\D$ receives a single public-key and must submit two messages. It then receives a challenge ciphertext and must submit a guess for which message was encrypted. 

\begin{figure}[!htb]
   \begin{center} 
   \begin{tabular}{|p{12cm}|}
    \hline 
\begin{center}
{Algorithm of $\D$ in $\PKE^{\textnormal{IND-CPA}}_{\Pi,\D}({\lambda})$}: 
\end{center}
\begin{itemize}
        \item Challenger $\ch$ generates $(sk,pk)\gets \Keygen(1^\lambda)$ and samples $b\gets \{0,1\}$. $\ch$ sends $pk$ to $\D$.
    \item $\D$ runs $\adv$, which first outputs an integer $n$.
        \item $\D$ chooses $j\gets [n]$ uniformly at random and sets $pk_j\coloneqq pk$. 
            \item $\D$ samples $(sk_i,pk_i)\gets \Keygen(1^\lambda)$ for every $i\in [n]\setminus j$, and sends $(pk_i)_{i\in [n]}$ to $\adv$.
            \item $\adv$ outputs $(m_{0,i},m_{1,i})\in \{0,1\}^\lambda \otimes \{0,1\}^\lambda$ for every $i\in [n]$.
    \item $\D$ generates $\ct_i\gets \Enc(pk_i,m_{0,i})$ for every $i>j$ and $\ct_i\gets \Enc(pk_i,m_{1,i})$ for every $i< j$. 
    \item $\D$ sends $(m_{0,j},m_{1,j})$ to $\ch$. 
    \item $\ch$ generates $\ct\gets \Enc(pk,m_{b,j})$ and sends the result to $\D$. 
    \item $\D$ sets $\ct_j\coloneqq \ct$ and sends $(\ct_i)_{i\in [n]}$ to $\adv$.
    \item $\adv$ submits a bit $b'$.
    \item $\D$ outputs $b'$.
  % \item $(\rho,\textsf{st})\gets \D^{pkgen(sk)}(\vk)$.
  %   \item $ct\leftarrow \enc(\textsf{k},\mu_b)$.
  %   \item $b'\leftarrow \adv^{\enc(k,\cdot)}(ct,\textsf{st})$.
        \end{itemize}
\ \\ 
\hline
\end{tabular}
    \end{center}
    \caption{Attack against \textsf{CPA}--security}
\end{figure}

It is easy to see that $\D$ simulates $\Hy_j$ if $b=0$ in the experiment above. While it simulates $\Hy_{j+1}$ if $b=1$. Therefore, if $j$ is chosen to be $j^*$, i.e. is chosen to maximize the distinguishing probability (which occurs with $\frac{1}{n}$ probability), $\D$ can distinguish $b$ from random by our assumption. Specifically, 
\[
\left| \Pr[\D=1:b=0]-\Pr[\D=1:b=1] \right|>\frac{1}{n}\cdot \frac{\delta}{n} \in \frac{1}{\poly[\lambda]}\\
\]
for infinitely many $\lambda$, which contradicts \textsf{CPA}--security. Therefore, $\Pi$ must satisfy \textsf{MK-CPA}--security. 
\qed
\end{proof}

As a result of \cref{lem:many-key}, we obtain the following corollary. 

\begin{corollary}
\label{cor:many-key-from-QFE}
    If a public-key $\QFE$ scheme exists, then a \textsf{MK-CPA}--secure $\PKE$ scheme exists. 
\end{corollary}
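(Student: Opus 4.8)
The plan is to reduce \cref{cor:many-key-from-QFE} to two facts that are already available: first, that every (secure) public-key $\QFE$ scheme is in particular a \textsf{CPA}--secure $\PKE$ scheme; and second, \cref{lem:many-key}, which upgrades \textsf{CPA}--security to \textsf{MK-CPA}--security. Hence it suffices to build a \textsf{CPA}--secure $\PKE$ scheme from an arbitrary public-key $\QFE$ scheme $(\QFE.\Setup,\QFE.\Keygen,\QFE.\Enc,\QFE.\Dec)$ and then invoke the lemma.

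First I would give the standard ``\FE\ implies \PKE'' construction, adapted to the quantum syntax. Let $\PKE.\Keygen(1^\lambda)$ run $(\mpk,\msk)\gets\QFE.\Setup(1^\lambda)$, compute a functional key $\fk_I\gets\QFE.\Keygen(\msk,I)$ for the identity circuit $I$, and output public key $pk:=\mpk$ and secret key $sk:=\fk_I$; then set $\PKE.\Enc(pk,m):=\QFE.\Enc(\mpk,m)$ and $\PKE.\Dec(sk,\ct):=\QFE.\Dec(\fk_I,\ct)$. Correctness is immediate from \cref{def:qfecorrect}: $\QFE.\Dec(\fk_I,\QFE.\Enc(\mpk,m))=I(m)=m$ up to negligible error. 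If the $\QFE$ scheme has classical functional keys, as the construction of \cite{MM24} does, then $sk$ is classical and the syntax of the $\PKE$ definition is met literally; the master public key is classical whenever the $\QFE$ one is.

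The only real verification is \textsf{CPA}--security, and the key observation is that in the \textsf{IND-CPA} game (\cref{fig:pkt-cpa}) the adversary is handed $pk=\mpk$ together with a single challenge ciphertext, but \emph{never} the secret key $sk=\fk_I$. So one only needs $\QFE$ ciphertexts to hide the plaintext against an adversary that holds $\mpk$ and has requested \emph{no} functional key --- which is precisely $\QFE$ simulation security (the public-key analogue of \cref{def:simsecurityna}) specialised to zero key queries, where the simulator's view is $\mathcal{V}=\emptyset$. Invoking that guarantee twice, once with $\msg=m_0$ and once with $\msg=m_1$, yields $\QFE.\Enc(\mpk,m_0)\approx_c\Sim(1^\lambda,\mpk,\emptyset)\approx_c\QFE.\Enc(\mpk,m_1)$ (as post-processed by any QPT distinguisher), so no QPT adversary wins the \textsf{IND-CPA} game with non-negligible advantage. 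Feeding this \textsf{CPA}--secure $\PKE$ scheme into \cref{lem:many-key} then produces a \textsf{MK-CPA}--secure $\PKE$ scheme, which is exactly the claim.

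I expect no genuine obstacle here: all of the real work sits inside \cref{lem:many-key}, whose hybrid argument has already been carried out. The points that merit a sentence are (i) the admissible circuit class of the $\QFE$ scheme should contain (a circuit computing) the identity --- any functionality from which the plaintext can be read off efficiently works identically --- and (ii) the remark above reconciling the possibly-quantum functional key $\fk_I$ with the classical-key $\PKE$ syntax, which is harmless for all downstream uses.
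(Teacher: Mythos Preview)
Your proposal is correct and matches the paper's approach exactly: the paper simply notes that any public-key $\QFE$ scheme yields a \textsf{CPA}--secure $\PKE$ scheme and then invokes \cref{lem:many-key}, without spelling out the details you provide. Your write-up is a faithful (and more explicit) expansion of that one-line argument.
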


\begin{comment}
 \textit{Example when the Encryption circuit includes a measurement: }

Take for example the qPKE scheme based on OWF (https://eprint.iacr.org/2023/877.pdf). The public key is the quantum state $|pk\rangle = \frac{1}{\sqrt{2^n}} \sum |x\rangle |f_k(x)\rangle$. Now the Encryption circuit is supposed to measure the state and use $x, y = f_k(x)$ as encryption key and the ciphertext is created as (x,SKE.Enc(y,m)). 

What happens if we can't measure, can we give an adversary the state $|\ct\rangle = \frac{1}{\sqrt{2^n}} \sum |x\rangle |SKE.Enc(f_k(x),m)\rangle$ ? 

Alternative to enforce a computational basis measurement: Z-twirl
\begin{enumerate}

\item Take a state $|\phi\rangle = \alpha |0\rangle + \beta |1\rangle$ and perform a random Z gate, i.e. by taking an additional qubit $|z\rangle = \frac{1}{\sqrt{2}} (|0\rangle + |1\rangle)$ and perform a controlled-Z gate with $|z\rangle$ as the control.
\item Trace out the $|z\rangle$ quibt, this results in the mixed state  $ \begin{bmatrix}
|\alpha|^2 & 0 \\
0 & |\beta|^2
\end{bmatrix} $
    \item This is the same as measuring the  qubit $|\phi\rangle = \alpha |0\rangle + \beta |1\rangle$  in the computational basis which also gives the mixed state $
 \begin{bmatrix}
|\alpha|^2 & 0 \\
0 & |\beta|^2
\end{bmatrix}$
\end{enumerate}   
\end{comment}

\end{document}